\DeclarePairedDelimiterX\setImpl[1]{\{}{\}}{#1}
\NewDocumentCommand\set{sO{}m>{\TrimSpaces}o}{
  \IfValueTF{#4}{
    \setImpl[#2]{#3 : \IfBooleanTF{#1}{\text{#4}}{#4}}
  }{
    \setImpl[#2]{#3}
  }
}
\definecolor{blue}{HTML}{648FFF}
\definecolor{red}{HTML}{DC267F}
\definecolor{yellow}{HTML}{FFB000}
\definecolor{orange}{HTML}{FE6100}
\definecolor{vi}{HTML}{39842E}
\definecolor{li}{HTML}{00677C}
\definecolor{si}{HTML}{E43117}
\definecolor{hi}{HTML}{9B0a7d}
\tikzset{font=\LARGE}
\newtheorem{assumption}{Assumption}[section]
\theoremstyle{remark}
\newtheoremstyle{case}{}{}{}{}{}{:}{ }{}
\theoremstyle{case}
\newtheorem{case}{Case}
\newtheoremstyle{subcase}{}{}{}{}{}{:}{ }{}
\theoremstyle{subcase}
\newtheorem{subcase}{Case}[case]
\newtheoremstyle{subsubcase}{}{}{}{}{}{:}{ }{}
\theoremstyle{subsubcase}
\newtheorem{subsubcase}{Subcase}[subcase]
\acrodef{PTAS}[\textsc{PTAS}]{Polynomial Time Approximation Scheme}
\acrodef{FPTAS}[\textsc{FPTAS}]{Fully Polynomial Time Approximation Scheme}
\acrodef{MCKP}[\textsc{MCKP}]{Multiple-Choice Knapsack Problem}
\title{A Practical 73/50 Approximation for Contiguous Monotone Moldable Job Scheduling}
\titlerunning{Moldable Job Scheduling}
\author{Klaus Jansen}{Kiel University, Germany}{kj@informatik.uni-kiel.de}{https://orcid.org/0000-0001-8358-6796}{}
\author{Felix Ohnesorge}{Kiel University, Germany}{foh@informatik.uni-kiel.de}{https://orcid.org/0009-0003-8023-3380}{}
\authorrunning{K. Jansen, F. Ohnesorge}
\keywords{computing, machine scheduling, moldable, polynomial approximation} 
\begin{document}
\maketitle

\begin{abstract}
    In moldable job scheduling, we are provided $m$ identical machines and $n$ jobs that can be executed on a variable number of machines. The execution time of each job depends on the number of machines assigned to execute that job.
    For the specific problem of \emph{monotone} moldable job scheduling, jobs are assumed to have a processing time that is non-increasing in the number of machines.

    The previous best-known algorithms are: (1) a polynomial-time approximation scheme with time complexity $\Omega(n^{g(1/\varepsilon)})$, where $g(\cdot)$ is a super-exponential function [Jansen and Thöle '08; Jansen and Land '18], (2) a fully polynomial approximation scheme for the case of $m \geq 8\frac{n}{\varepsilon}$ [Jansen and Land '18], and (3) a $\frac{3}{2}$ approximation with time complexity $O(nm\log(mn))$ [Wu, Zhang, and Chen '23].

    We present a new practically efficient algorithm with an approximation ratio of $\approx (1.4593 + \varepsilon)$ and a time complexity of $O(nm \log \frac{1}{\varepsilon})$.
    Our result also applies to the \emph{contiguous} variant of the problem.
    In addition to our theoretical results, we implement the presented algorithm and show that the practical performance is significantly better than the theoretical worst-case approximation ratio.
\end{abstract}

\newpage
\section{Introduction}

In the past many variants of scheduling problems have been studied.
The classical task scheduling problem consists of scheduling \(n\) jobs to \(m\) identical machines, where each job \(j\) occupies exactly \(1\) machine.
Motivated by many practical applications, such as high performance computing~\cite{drozdowski2009scheduling}, where parallelism is exploited to speed up the execution of jobs, many extensions of this problem have been studied.
One of which is scheduling \emph{moldable} jobs.
This is a natural extension where jobs can be executed on a variable number of machines.
There are many practical motivations for this problem formulation, as discussed in~\cite{bleuse2017scheduling,drozdowski2009scheduling,DBLP:conf/jsspp/FeitelsonR96}.

In moldable job scheduling we are given \(J = \{1, \dots, n\}\) jobs and \(m\) identical machines.
Each job can be assigned to \(k \leq m\) machines and the processing time of any job \(j\) is dependent on the number of assigned machines.
We denote the processing time of any job \(j\) as \(t(j, k)\).
Further, the \emph{work} of any job $j$ is defined as $w(j, k) = t(j, k) \cdot k$ and can be described as its area.
For \emph{monotone} moldable job scheduling we assume that for any job, the work function and time function are non-decreasing and non-increasing, respectively.
\begin{assumption}[monotony \cite{mounie2007}]
    Given any job $j$, for all $k \leq k' \leq m$, we have:
    \begin{enumerate}[(i)]
        \item \(w(j, k) \leq w(j, k')\)
        \item \(t(j, k) \geq t(j, k')\)
    \end{enumerate}
\end{assumption}

A solution to an instance of this problem is given by a \emph{schedule}, containing a start time $s_j$ for each job \(j \in J\), and an allotment \(\alpha(j) \in \{1, \dots, m\}\) for each job. A schedule is called \emph{feasible} if the following conditions are met:
\begin{itemize}
    \item A job starts its execution simultaneously on all its assigned machines
    \item A job may not be interrupted during its execution time
    \item Each machine can execute at most one job at a time
\end{itemize}
In the \emph{contiguous} variant of the problem, we additionally require each job to be processed on an adjacent set of machines.
In many practical applications obtaining such a schedule, where jobs are processed on adjacent machines, is beneficial as it allows for better memory locality and reduced communication overhead between machines \cite{bleuse2017scheduling}.
There are many other applications of this model, as discussed in \cite{DBLP:journals/jors/BlazewiczCMO11,DBLP:journals/eor/DelormeDKK19,DBLP:journals/eor/DolguiKKMS18,DBLP:journals/ior/FotakisMP25,mounie2007,WuZC23}.
In this paper we consider the problem of minimizing the makespan, i.e., the maximum completion time of any job in a schedule.

\subsection{Related Work}
Note that in moldable job scheduling the number of machines assigned to a job is fixed during the execution of the job.
Some older publications refer to this problem as \emph{malleable} job scheduling \cite{decker200654,DBLP:journals/scheduling/FotakisMP23,DBLP:journals/ior/FotakisMP25,jansen2002linear,jansen2008approximation,ludwig1994,mounie2007}.
However, the notation has recently changed to moldable job scheduling, whereas malleable job scheduling is now used to refer to the problem, where the number of machines assigned to a job can change during its execution.

Moldable job scheduling as well as the monotone variant are known to be NP-hard~\cite{du1989complexity, land2013boundingmoldable}. Turek, Wolf, and Yu~\cite{turek1992approximate} presented a \(2\)-approximation algorithm for the general case of moldable job scheduling. Later, Ludwig and Tiwari \cite{ludwig1994} presented an algorithm with the same approximation ratio but improved the running time to be polynomial in \(\log m\).
It has been proven that for moldable job scheduling (without monotony!) no polynomial-time approximation algorithm with a guarantee below \(3/2\) exists, unless P~=~NP~\cite{drozdowski1995no32approx,DBLP:journals/scheduling/Johannes06}.
But there exists a pseudo-polynomial \((\frac{5}{4}+\varepsilon)\)-approximation algorithm for scheduling contiguous moldable jobs with a time complexity of \(O(n\log n) \cdot m^{f(1/\varepsilon)}\), where \(f(\cdot)\) is a computable function~\cite{DBLP:conf/esa/JansenR19}.
For special cases of the problem there exist better approximation algorithms:
Jansen and Porkolab~\cite{jansen2002linear} presented an approximation scheme with linear running time for a constant number of machines \(m\); Decker, Lücking, and Monien~\cite{decker200654} gave a \(1.25\)-approximation under the additional assumption of identical jobs. Jobs are called identical if the execution time on any number of machines is the same for all jobs.

In this work we study the specific problem of \emph{monotone} moldable job scheduling.
Mounié, Rapine, and Trystram~\cite{DBLP:conf/spaa/MounieRT99,mounie2007} presented a \((\frac{3}{2} + \varepsilon)\)-approximation algorithm with a time complexity of \(O(nm \log \frac{1}{\epsilon})\) for the contiguous case.
Over the past years, various authors have improved this result.
Jansen and Thöle~\cite{DBLP:conf/icalp/JansenT08,jansen2008approximation} have shown the existence of an algorithm with an approximation ratio arbitrarily close to \(1.5\) for the contiguous problem and presented a \ac{PTAS} for the same problem without the contiguous restriction when \(m\) is bounded by a polynomial in \(n\).
Jansen and Land~\cite{jansenland2018schedulingmoldable} generalized the \ac{PTAS} in~\cite{jansen2008approximation} by presenting a \ac{FPTAS} for the contiguous case under the assumption of \(m > 8\frac{n}{\varepsilon}\).
Additionally, they improved the \((\frac{3}{2} + \varepsilon)\)-approximation algorithm in~\cite{mounie2007} to have a running time with only poly-logarithmic dependence on the number of machines \(m\).
These result were further improved by Grage, Jansen, and Ohnesorge~\cite{grage2023improved}.
Just recently, Wu, Zhang, and Chen~\cite{WuZC23} gave a \(\frac{3}{2}\)-approximation algorithm with a running time of \(O(nm \log nm)\) and stated the open question of whether an approximation algorithm for this problem exists with an approximation ratio better than \(\frac{3}{2}\) and a similar running time.

\renewcommand{\arraystretch}{1.2} 
\begin{table}[htbp]
    \centering
    \begin{tabular}{|l | l | l |}
        \hline
        \textbf{Ratio}             & \textbf{Complexity}                                                                                                               & \textbf{Author (Year)}                                      \\ [0.5ex]
        \hline\hline
        \(\frac{3}{2} + \epsilon\) & \(O(\log(1/\varepsilon) nm)\)                                                                                                     & Mounié, Rapine, Trystram (2007) \cite{mounie2007}           \\ [1ex]
        \hline

        \(\frac{3}{2} + \epsilon\) & \(O(\frac{n}{\varepsilon^2}\log m(\frac{\log m}{\epsilon} + \log^3(\varepsilon m)))\)                                             & Jansen, Land (2018) \cite{jansenland2018schedulingmoldable} \\[1ex]
        \hline

        \(\frac{3}{2} + \epsilon\) & \(O(n\log^2(\frac{1 + \log(\varepsilon m)}{\varepsilon}) + \frac{n}{\varepsilon}\log(\frac{1}{\varepsilon})\log(\varepsilon m))\) & Grage, Jansen, Ohnesorge (2023) \cite{grage2023improved}    \\[1ex]
        \hline
        \(\frac{3}{2}\)            & \(O(nm \log(nm))\)                                                                                                                & Wu, Zhang, Chen (2023) \cite{WuZC23}                        \\ [1ex]
        \hline
        \(1.4593 + \epsilon\)      & \(O(\log(1 / \varepsilon)nm)\)                                                                                                    & This Work                                                   \\ [1ex]
        \hline
    \end{tabular}
    \caption{Overview of approximation algorithms for monotone moldable job scheduling polynomial in \(n\), \(m\) and \(1/\varepsilon\).}
    \label{table:related work}
\end{table}
\subsection{Our Contribution}
The best known approximation ratio for the problem of monotone moldable job scheduling (non-contiguous) is \(1+\varepsilon\), given by the \ac{PTAS} in \cite{DBLP:conf/icalp/JansenT08,jansen2008approximation}.
This \ac{PTAS} uses a dynamic program to place jobs across \(\delta^{-2}\) many starting points.
When setting \(\varepsilon \coloneqq 0.5\), we get the number of starting points by calculating \(\delta \coloneqq \sigma_{36}\), such that \(\sigma_1 \coloneqq \frac{1}{18}\) and \(\sigma_k \coloneqq \frac{\sigma_{k-1}^3}{(4\cdot 7^2)}\).
This resolves to roughly \(\sigma_{36} \approx 10^{-1.2 \cdot 10^{17}}\).
The overall dynamic program then has a running time of roughly \(nm^{2 \cdot 10^{2.4\cdot 10^{17}}}\).
When comparing the \((\frac{5}{4} + \varepsilon)\)-approximation in \cite{DBLP:conf/esa/JansenR19}, to current \(\frac{3}{2}\)-approximation algorithms by setting \(\varepsilon \coloneqq \frac{1}{4}\), we get a running time of \(\Omega((mn)^{16^{4^{13}}})\).
While these algorithms are of theoretical interest, they have an impractically large running time.
Motivated by this, we focus on \emph{practically efficient} approximation algorithms (\cref{table:related work}).

Previously it was assumed that there exists no practically efficient \((\frac{3}{2} - \varepsilon)\)-approximation for the problem of monotone moldable job scheduling, for any \(\varepsilon > 0\) \cite{WuZC23}.
We break this barrier by presenting an approximation algorithm with a worst-case approximation ratio of \(\approx 1.4593 < 1.5\).
It is worth noting that the \ac{PTAS} presented in \cite{jansen2008approximation} cannot solve the contiguous variant of the problem, and it is unknown whether there exists a \ac{PTAS} for this variant of the problem.
Surprisingly, we show that our algorithm creates a contiguous schedule with a makespan of at most \(1.4593 \cdot \widetilde{OPT}\), where \(\widetilde{OPT}\) is a lower bound on the optimum of the non-contiguous problem.
This result also bounds the gap between the contiguous and non-contiguous variant of the problem.

Our algorithm has a running time of \(O(mn \log \frac{1}{\varepsilon})\).
This is technically only pseudo-polynomial in the encoding length of the problem because of the linear dependency on \(m\).
Similar to the algorithms in \cite{DBLP:conf/esa/JansenR19,jansen2008approximation}, we can use the \ac{FPTAS} presented in \cite{grage2023improved,jansenland2018schedulingmoldable} for instances with \(m > 8\frac{n}{\varepsilon}\).
Thus, we may assume that \(m \leq 8\frac{n}{\varepsilon}\).
Our running time of \(O(mn \log \frac{1}{\varepsilon})\) is fully polynomial in this case.
\begin{theorem}
    \label{thm:main}
    Let \(\varepsilon > 0\). For contiguous monotone moldable job scheduling, there exists an algorithm with an approximation ratio of \((1.4593 + \varepsilon)\) and a running time of \(O(nm \log \frac{1}{\varepsilon})\).
\end{theorem}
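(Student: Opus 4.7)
The plan is to follow the standard dual-approximation framework. Starting from a trivial lower bound $\widetilde{OPT}$ on $OPT$ (the maximum of the per-job value $\max_j t(j,m)$ and a min-work estimate $\tfrac{1}{m}\sum_j \min_k w(j,k)$), I perform a binary search over candidate makespans $d$ in the interval $[\widetilde{OPT}, 2\widetilde{OPT}]$ at $O(\log \tfrac{1}{\varepsilon})$ geometrically spaced values. For each candidate $d$, the inner procedure either outputs a contiguous schedule of makespan at most $\rho d$ with $\rho = 1.4593 + \varepsilon$, or certifies that $d < OPT$. Combined with the binary search, this yields the claimed $O(nm \log \tfrac{1}{\varepsilon})$ running time.

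For a fixed $d$, the first step is allotment: for each job $j$, compute the minimum number of machines $\gamma_d(j) := \min\{k : t(j,k) \leq d\}$, well-defined by the monotony assumption and computable in $O(nm)$ time in total by a single sweep over each job. The resulting allotment has the property that every job fits in any shelf of height $\geq d$, and its total work is a lower bound on any feasible solution with makespan $\leq d$. Classify jobs by their processing time at $\gamma_d$: \emph{long} jobs have $t(j,\gamma_d(j)) > d/2$ and \emph{short} jobs have $t(j,\gamma_d(j)) \leq d/2$. The classical $\tfrac{3}{2}$-approximation of Mounié--Rapine--Trystram packs long jobs side-by-side in a contiguous shelf of height $d$ and short jobs in a complementary shelf of height $d/2$, for total makespan $\tfrac{3}{2}d$.

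The key ingredient for breaking $\tfrac{3}{2}$ is a refined shelf structure parameterized by $\rho$. Reclassify jobs into three groups using thresholds derived from $\rho$: \emph{very long}, \emph{medium}, and short. Medium jobs can be combined with the short-job shelf provided its height is increased from $d/2$ to $(\rho-1)d$. To reduce the number of very long jobs that must remain in the lower shelf, the algorithm \emph{boosts} the allotments of selected long jobs, trading additional work for reduced processing time; the trade-off is controlled by monotony so that the marginal work cost of each boost is bounded. A linear-time scan over jobs sorted by processing time identifies which boosts to perform. If the resulting packing fits in height $\rho d$, a contiguous schedule is output; otherwise, the work of the boosted allotment provably exceeds $m \cdot d$, certifying $d < OPT$. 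Contiguity is preserved by construction because each shelf is packed on a disjoint range of consecutive machines, as in the classical schemes.

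The hardest part of the proof will be the tight worst-case analysis that pins down the constant $\rho \approx 1.4593$. The analysis balances, on the one hand, the additional work incurred by boosting very long jobs against the area budget $m \cdot d$, and on the other hand the reduced slack available to the combined upper shelf for accommodating medium and short jobs. I expect to resolve this through a case analysis on the number and aggregate work of very long jobs after boosting, strictly tightening the two-shelf analysis underlying the $\tfrac{3}{2}$-approximations. Once the trade-off is established and the packing is shown to always fit in height $\rho d$ when the infeasibility certificate does not trigger, the claimed approximation ratio, the contiguity, and the $O(nm\log\tfrac{1}{\varepsilon})$ running time all follow directly from the structure above, with the $\log\tfrac{1}{\varepsilon}$ factor coming from the binary search and the $O(nm)$ factor from the allotment and packing phases.
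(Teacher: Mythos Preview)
Your proposal captures the outer dual-approximation shell correctly, but the inner procedure is where the entire technical content lives, and here you are missing the two ideas that make the result work.

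First, the partition of big jobs is not obtained by a greedy scan or by ``boosting'' allotments. The paper formulates it as a \emph{Multiple-Choice Knapsack Problem}: each big job is an item with three choices, corresponding to being alloted $\gamma(j,d)$, $\gamma(j,\tfrac{4}{7}d)$, or $\gamma(j,\tfrac{3}{7}d)$ machines, with costs equal to the respective works and sizes $\gamma(j,d)$, $\tfrac{1}{2}\gamma(j,\tfrac{4}{7}d)$, and $0$. The MCKP is solved by a $O(nm)$ dynamic program, and Lemma~\ref{lem:KP_W} shows that whenever $d\ge OPT$ a feasible partition exists. Your ``boosting selected long jobs via a linear-time scan'' does not provide the two-sided guarantee (work $\le md-\mathcal{W}_S$ \emph{and} the machine constraint~\ref{const2:machines}) that the rest of the argument needs, and there is no reason to expect a greedy rule to find such a partition. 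Also note that the shelves are stacked on the \emph{same} machines (shelf~2 sits on top of shelf~1), not on disjoint machine ranges as you write; contiguity therefore requires nontrivial care (\cref{rem:noncontiguous}, Observations~\ref{obs:contiguous:1}--\ref{obs:contiguous:3}).

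Second, and more seriously, you have no mechanism that produces the constant $1.4593$. In the paper this number is not the outcome of a routine case analysis: after the repair steps one is left with a single overflowing job $j_0$ in shelf~2 and $q>m/6$ idle shelf-1 machines. Algorithm~\ref{alg:RepairS2-2} tries placing $j_0$ on the $m-i$ least loaded machines for every $i$; if every placement fails, one shows that the shelf-1 load profile must lie above the curve $g(i)=\lambda d-\tfrac{dm-\tfrac{\lambda}{2}d(m-q)}{m-i}$. Lower-bounding $\mathcal{W}_1$ by the integral of $\max\{g(i),\tfrac{\lambda}{2}d\}$ and minimizing over $q$ yields the inequality $\ln\lambda<3\lambda-4$, whose threshold is $\lambda=-\tfrac{1}{3}W_{-1}(-3e^{-4})\approx 1.4593$. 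Without this geometric/integral argument (or an equivalent replacement) there is no way to certify that the packing fits in height $\rho d$ with $\rho<\tfrac{3}{2}$; your sketch stops exactly where the real difficulty begins.
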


We achieve this result by improving the long-standing algorithm of Mounié, Rapine, and Trystram \cite{mounie2007}.
Although this algorithm has been improved many times with respect to the running time \cite{jansenland2018schedulingmoldable,grage2023improved}, improving the approximation ratio seemed to be a difficult task (except for eliminating the \(\varepsilon\) factor \cite{WuZC23}).
We manage this with the following new techniques:
(1) relaxing the standard 0/1-Knapsack used in their algorithm to a \acl{MCKP}, (2) packing the jobs into more containers, which results in more complex repair algorithms, and (3) analyzing a single critical case carefully with an elegant geometric argument.
Our method of analyzing this critical case results in the so-called Lambert W function appearing in the approximation ratio.

We believe that this result is of theoretical interest as it breaks the long-standing barrier of \(\frac{3}{2}\) for these practically efficient algorithms and the techniques employed, particularly the geometric argument, are potentially of broader interest.
A similar geometric analysis could be beneficial in other scheduling and packing problems~\cite{DBLP:conf/soda/EberleHRW25,DBLP:conf/focs/GalvezGHI0W17,DBLP:conf/wads/HarrenJPS11}.
Furthermore, our theoretical results are supported by practical experiments demonstrating that a schedule length of at most \(\frac{10}{7}OPT\) can be guaranteed in most cases.
This makes this algorithm a promising candidate for practical applications.

\subsection{Preliminaries}

As many of the previous works, we utilize a dual approximation framework as proposed by Hochbaum and Shmoys~\cite{hochbaum1987}. First, we obtain a lower and upper bound on the optimum makespan by a constant approximation algorithm \cite{turek1992approximate,ludwig1994} and then perform binary search in this interval to find the optimal makespan up to an accuracy of \(\varepsilon\).
Therefore, for this work we assume that we are given a makespan guess \(d\) and
want to find a schedule with makespan at most \(\lambda d\) or reject \(d\) if \(d < OPT\).

Similar to Mounié, Rapine, and Trystram \cite{mounie2007}, we define the \emph{canonical number of machines} as follows:
\begin{definition}[Canonical Number of Machines]
    Given a real number $h$, we define for each job $j$ its canonical number of machines $\gamma(j, h)$ as the minimal number of machines needed to execute $j$ in time at most $h$. If $j$ cannot be processed in time at most $h$ on $m$ machines, we set $\gamma(j, h) = \infty$ by convention.
\end{definition}

We do note that, due to monotony, this number can be found in time $O(\log m)$ using binary search.

\section{Description of the Algorithm}
\label{sec:algorithm}

In order to motivate the core idea of our algorithm, we first give a brief overview of the algorithm presented by Mounié, Rapine, and Trystram \cite{mounie2007}:
First, find a partition \(\mathcal{S}_1 \sqcup \mathcal{S}_2 = J_B = \set{j \in J}[t(j, 1) > \frac{d}{2}]\) with some specific properties.
Then, schedule the first set \(\mathcal{S}_1\) with a total width and height of \(m\) and \(d\), respectively. The second set \(\mathcal{S}_2\) is scheduled with a height of at most \(\frac{d}{2}\) on top, which then results in a total height of at most \(\frac{3}{2}d\). Afterwards, the remaining small jobs \(J_S = \set{j \in J}[t(j, 1)\leq \frac{d}{2}]\) are scheduled greedily.
For ease of notation we will, similar to~\cite{mounie2007,jansen2002linear,grage2023improved}, call the sets \(\mathcal{S}_1, \mathcal{S}_2\) \emph{shelf 1} and \emph{shelf 2}, respectively.
We describe the maximum allowed execution time of jobs in each shelf as the \emph{height} of the shelf, e.g. in the algorithm of~\cite{mounie2007} the height of shelf 1 is \(d\) and the height of shelf 2 is \(\frac{d}{2}\).

Although this algorithm is quite tailored to an approximation ratio of \(\frac{3}{2}\), we manage to adapt it to obtain a better approximation ratio.
A key insight is that we can partition the jobs into more sets to more closely resemble the allotment in an optimal schedule.
We believe that this technique might be of more general interest.
Improving the approximation ratio of this algorithm comes with a few major challenges:
(1) Greedily adding jobs \(j \in J_S\) with \(t(j, 1) \leq \frac{d}{2}\) will result in an approximation ratio \(\geq 3/2\).
(2) We need to be more careful about the partitioning of jobs into shelves.

We overcome the first challenge by defining \(J_S = \set{j \in J}[t(j, 1)\leq\frac{3}{7}d]\) as the set of small jobs, and \(J_B = J \setminus J_S\) as the set of big jobs.
This choice, while it may seem arbitrary, will become clear in the subsequent analysis.
The minimal work of the small jobs in any schedule is denoted as \(\mathcal{W}_S = \sum_{j \in J_S} w(j, 1)\).

For the partition we allow a third shelf, which is allowed to exceed the width of \(m\). By exploiting the work monotony of jobs, we can later reduce the number of machines assigned to jobs in this shelf.
This leads us to the central theorem of our work:

\begin{theorem}
    \label{thm:dual-approx}
    Given a partitioning of jobs $\mathcal{C}_1 \sqcup \mathcal{C}_2 \sqcup \mathcal{C}_3 = J_B$ s.t.
    \renewcommand{\labelenumi}{C\arabic{enumi}}
    \renewcommand{\theenumi}{C\arabic{enumi}}
    \begin{enumerate}
        \item \label{const1:work}
              $\sum_{j\in \mathcal{C}_1}w(j, \gamma(j, d))
                  + \sum_{j\in \mathcal{C}_2}w\left(j, \gamma\left(j, \frac{4}{7}d\right)\right)
                  + \sum_{j\in \mathcal{C}_3}w\left(j, \gamma\left(j, \frac{3}{7}d\right)\right)
                  \leq md - \mathcal{W}_S$

        \item \label{const2:machines}
              $\sum_{j\in \mathcal{C}_1}{\gamma(j, d)} + \sum_{j \in \mathcal{C}_2}\gamma\left(j, \frac{4}{7}d\right)\cdot \frac{1}{2}\leq m$ \mbox{}\hfill
    \end{enumerate}
    one can compute a contiguous schedule with makespan at most $\lambda d$ in time $O(mn)$, for any \(\lambda > -\frac{1}{3}W_{-1}(-\frac{3}{e^4}) \approx 1.4593\).
\end{theorem}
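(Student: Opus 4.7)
The plan is to prove the theorem in three stages: (i) lay out the three big-job containers $\mathcal{C}_1,\mathcal{C}_2,\mathcal{C}_3$ in a specific geometric configuration driven by the canonical allotments; (ii) greedily add the small jobs from $J_S$ into the remaining free space, reducing the allotment of a small job via work-monotonicity whenever necessary; and (iii) bound the resulting makespan by a case analysis whose critical case gives rise to the Lambert $W$ function.

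For stage~(i), let $M_1=\sum_{j\in\mathcal{C}_1}\gamma(j,d)$. Place $\mathcal{C}_1$ as a single shelf of height $d$ on the leftmost $M_1$ machines, each job occupying a contiguous block of $\gamma(j,d)$ machines. Pack $\mathcal{C}_2$ into two stacked rows of height $\tfrac{4}{7}d$ each by a pairing procedure; constraint~\ref{const2:machines} is exactly what is required to guarantee that the resulting block fits on the remaining $m-M_1$ machines and spans time $[0,\tfrac{8}{7}d]$. Finally, place $\mathcal{C}_3$ as a shelf of height $\tfrac{3}{7}d$ above $\mathcal{C}_1$ in the interval $[d,\tfrac{10}{7}d]$, and use a repair step in the spirit of~\cite{mounie2007} to redirect any overflow into the free strip $[\tfrac{8}{7}d,\tfrac{10}{7}d]$ that lies above shelf~2.

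For stage~(ii), process the small jobs in non-increasing order of $t(j,1)$ and insert each one into a free contiguous rectangle of the current schedule, using the smallest allotment for which its height still fits. By~\ref{const1:work} the big jobs have total area at most $md-\mathcal{W}_S$, so the bounding box of height $\tfrac{10}{7}d$ and width $m$ contains enough free area to hold the small jobs. In the easy case the greedy insertion never exceeds height $\tfrac{10}{7}d$, giving makespan at most $\tfrac{10}{7}d<\lambda d$.

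The hard part of the proof lies in stage~(iii): the critical case in which a small job is forced to protrude above $\tfrac{10}{7}d$. Here the protrusion height and the number of machines beneath it are simultaneously constrained by a geometric packing inequality (only the thin strip of width $m-M_1$ lies above shelf~2) and by the area bound from~\ref{const1:work}. Combining the two relations and minimizing the worst case over all admissible configurations yields an implicit equation of the form $(-3\lambda)\,e^{-3\lambda}=-3/e^{4}$, whose relevant root lies on the $W_{-1}$ branch, so $-3\lambda=W_{-1}(-3/e^{4})$ and therefore $\lambda=-\tfrac{1}{3}W_{-1}(-3/e^{4})\approx 1.4593$. The $O(nm)$ running time is then straightforward: canonical allotments cost $O(\log m)$ per job, the pairing step for $\mathcal{C}_2$ is $O(n\log n)$, and each of the $O(n)$ small-job insertions costs $O(m)$. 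I expect the main obstacle to be pinning down the extremal configuration in the critical case and verifying the geometric inequality precisely enough to obtain the tight Lambert-$W$ bound rather than a weaker one; the shelf construction itself is a natural extension of the two-shelf framework of~\cite{mounie2007}.
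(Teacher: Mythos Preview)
Your layout in stage~(i) diverges from the paper and already runs into trouble. Stacking $\mathcal{C}_2$ into two rows of height $\tfrac{4}{7}d$ on $m-M_1$ machines need not be possible: constraint~\ref{const2:machines} only bounds the \emph{total} width by $2(m-M_1)$, and a single job $j\in\mathcal{C}_2$ with $\gamma(j,\tfrac{4}{7}d)>m-M_1$ cannot be split across rows. The paper instead \emph{reduces} the machine count of each $\mathcal{C}_2$ job (roughly halving it, with special handling for $\gamma\in\{1,2,3\}$), letting its height grow up to $\tfrac{10}{7}d$; the resulting jobs, together with $\mathcal{C}_1$, are then sorted by height into shelves $S_0$ (height $>d$) and $S_1$ (height $\le d$), and $\mathcal{C}_3$ becomes shelf $S_2$ sitting on top of $S_1$. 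Your ``overflow strip'' $[\tfrac{8}{7}d,\tfrac{10}{7}d]$ for $\mathcal{C}_3$ has height only $\tfrac{2}{7}d<\tfrac{3}{7}d$, so $\mathcal{C}_3$ jobs at their canonical allotment do not fit there.

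The more serious gap is in stage~(iii): you attribute the Lambert-$W$ bottleneck to a \emph{small} job protruding above $\tfrac{10}{7}d$. In the paper small jobs are never the issue: once the big-job schedule has work at most $md-\mathcal{W}_S$ and makespan at most $\lambda d$, greedy insertion of $J_S$ on the least-loaded machine stays within $\lambda d$ by a one-line area argument. The hard case is that shelf $S_2$ (the big $\mathcal{C}_3$ jobs) may require more than $m$ machines. The paper repairs this by transformations that force every job in $S_2$ to have work $>\lambda d q$ (where $q$ is the number of idle machines in $S_1$), then splits on $q$. For $q\le m/6$ a compression argument suffices for ratio $\tfrac{13}{9}$. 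For $q>m/6$ one shows $|S_2|=1$; call that job $j_0$ and try every allotment $m-i$ for it. If none succeeds, the load on machine $i$ in $S_1$ must exceed $g(i)=\lambda d-\frac{dm-\tfrac{\lambda}{2}d(m-q)}{m-i}$, and integrating $\max\{g(i),\tfrac{\lambda}{2}d\}$ over $i\in[0,m-q]$, plus the $S_2$ work, is minimized at $q=\tfrac{\lambda-1}{\lambda}m$ and exceeds $dm$ precisely when $\lambda>-\tfrac{1}{3}W_{-1}(-3e^{-4})$. So the Lambert-$W$ equation comes from the interplay between the single remaining big job and the load profile of $S_1$, not from small-job insertion.
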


The Lambert W function, also called omega function or product logarithm is defined such that \(ye^y = x\) holds if and only if \(y = W(x)\). The \(W_{-1}\) branch is used for \(-\frac{1}{e}\leq x < 0\) \cite{corless1996Lambert}.
The fact that the Lambert W function appears is due to the nature of our analysis, where we define integrals to get a bound on the work of jobs in a schedule.
This technique, while not new, is not commonly used in the analysis of scheduling problems.
Fotakis, Matuschke, and Papadigenopoulos \cite{DBLP:journals/scheduling/FotakisMP23} use a similar analysis technique, but the Lambert W function does not appear in their analysis.

\subsection{Finding a Partition}
We start by showing the existence of a partition of jobs into three classes, as required by \cref{thm:dual-approx}, given a schedule with makespan at most \(d\) exists.

\begin{lemma}
    \label{lem:KP_W}
    Given any schedule with makespan at most \(d\), there exists a partition \(\mathcal{C}_1 \sqcup \mathcal{C}_2 \sqcup \mathcal{C}_3 = J_B\) that satisfies constraints \ref{const1:work} and \ref{const2:machines} of \cref{thm:dual-approx}.
\end{lemma}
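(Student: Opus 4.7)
The plan is to read off the partition directly from the given schedule. For each $j \in J_B$, let $\alpha^{\ast}(j)$ and $s_j$ denote its allotment and start time in this schedule, and define $\mathcal{C}_1 = \{j \in J_B : t(j, \alpha^{\ast}(j)) > \tfrac{4}{7}d\}$, $\mathcal{C}_2 = \{j \in J_B : \tfrac{3}{7}d < t(j, \alpha^{\ast}(j)) \leq \tfrac{4}{7}d\}$, and $\mathcal{C}_3 = J_B \setminus (\mathcal{C}_1 \cup \mathcal{C}_2)$. Since the schedule has makespan at most $d$, the allotment $\alpha^{\ast}(j)$ certifies that $j$ can be processed in time $d$, $\tfrac{4}{7}d$, or $\tfrac{3}{7}d$ in each class respectively, so by minimality of $\gamma$ one obtains $\gamma(j, d) \leq \alpha^{\ast}(j)$ on $\mathcal{C}_1$, $\gamma(j, \tfrac{4}{7}d) \leq \alpha^{\ast}(j)$ on $\mathcal{C}_2$, and $\gamma(j, \tfrac{3}{7}d) \leq \alpha^{\ast}(j)$ on $\mathcal{C}_3$.

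For constraint \ref{const1:work}, work-monotony upgrades each of the above to the corresponding bound $w(j, \gamma(j, \cdot)) \leq w(j, \alpha^{\ast}(j))$, and monotony also gives $\mathcal{W}_S = \sum_{j \in J_S} w(j,1) \leq \sum_{j \in J_S} w(j, \alpha^{\ast}(j))$. Summing, the left-hand side of \ref{const1:work} plus $\mathcal{W}_S$ is at most $\sum_{j \in J} w(j, \alpha^{\ast}(j)) \leq md$ by the standard area bound on the schedule, which rearranges to \ref{const1:work}.

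Constraint \ref{const2:machines} is the heart of the proof, and I would establish it by a double-counting argument at the two time points $t_1 = \tfrac{3}{7}d$ and $t_2 = \tfrac{4}{7}d$. Every $\mathcal{C}_1$ job is active at both: from $t(j, \alpha^{\ast}(j)) > \tfrac{4}{7}d$ and $s_j + t(j, \alpha^{\ast}(j)) \leq d$ I get $s_j < \tfrac{3}{7}d$, so the job covers the entire interval $[\tfrac{3}{7}d, \tfrac{4}{7}d]$. Every $\mathcal{C}_2$ job is active at $t_1$ or $t_2$: a job of length $> \tfrac{3}{7}d$ cannot fit entirely inside $[0, \tfrac{3}{7}d]$, inside $[\tfrac{3}{7}d, \tfrac{4}{7}d]$ (width $\tfrac{1}{7}d$), or inside $[\tfrac{4}{7}d, d]$, and these are the only placements that avoid both thresholds. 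Since at most $m$ machines are occupied at each time point, adding the two bounds yields $2\sum_{j \in \mathcal{C}_1} \alpha^{\ast}(j) + \sum_{j \in \mathcal{C}_2} \alpha^{\ast}(j) \leq 2m$, and dividing by two followed by substituting the $\gamma$-inequalities gives \ref{const2:machines}.

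The main subtlety is the calibration of the two thresholds $\tfrac{3}{7}d$ and $\tfrac{4}{7}d$: they are chosen so that the middle window has width $\tfrac{1}{7}d < \tfrac{3}{7}d$, denying any $\mathcal{C}_2$ job a hiding place between them, while being close enough to the edges $0$ and $d$ that every $\mathcal{C}_1$ job (length $> \tfrac{4}{7}d$) must straddle both. This is also what dictates the somewhat unusual cutoff $\tfrac{3}{7}d$ used in the definition of the big-job set $J_B$, as opposed to the $\tfrac{1}{2}d$ appearing in \cite{mounie2007}.
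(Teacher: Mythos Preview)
Your proof is correct and follows the same overall strategy as the paper: the partition is identical, and the verification of \ref{const1:work} via work-monotony matches the paper's argument exactly. For \ref{const2:machines} the paper argues machine-by-machine (no $\mathcal{C}_1$ job can share a machine with another $\mathcal{C}_1$ or with a $\mathcal{C}_2$ job, and at most two $\mathcal{C}_2$ jobs can share a machine), whereas you take the dual view and count busy machines at the two time slices $\tfrac{3}{7}d$ and $\tfrac{4}{7}d$; the two arguments encode the same combinatorial fact and yield the same inequality, with your time-slice presentation arguably the cleaner of the two.
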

\begin{proof}
    Consider any schedule for an instance \(I\) with makespan \(d^* \leq d\). The total work of all small jobs, defined as \(J_S = \set{j \in J}[t(j, 1) \leq \frac{3}{7}d]\), is at least \(\mathcal{W}_S = \sum_{j \in J_S} t(j, 1)\), due to the assumption of work monotony. Therefore, the work of all remaining jobs is at most \(\mathcal{W} = md - \mathcal{W}_S\). For any job \(j \in J_B\), we will denote the number of machines assigned to this job in the optimal schedule as \(OPT(j)\).
    We now argue that the partition \(\mathcal{C}_1 \coloneqq \set{j \in J_B}[t(j, OPT(j)) > \frac{4}{7}d]\), \(\mathcal{C}_2 \coloneqq \set{j \in J_B}[\frac{4}{7}d \geq t(j, OPT(j)) > \frac{3}{7}d]\), and \(\mathcal{C}_3 \coloneqq \set{j \in J_B}[\frac{3}{7}d \geq t(j, OPT(j))]\) of jobs in \(J_B\) satisfies both constraints \ref{const1:work} and \ref{const2:machines}.

    Due to monotony, we know that the work of any job in \(\mathcal{C}_3\) is at least \(w(j, \gamma(j, \frac{3}{7}d))\). Therefore:
    \begin{align}
        \label{eq:lemKPW:5} \sum_{j \in \mathcal{C}_3} w\left(j, \gamma\left(j, \frac{3}{7}d\right)\right) & \leq \sum_{j \in \mathcal{C}_3} w(j, OPT(j))
    \end{align}

    Each job in \(\mathcal{C}_1\) uses at least \(\gamma(j, d)\) machines and has a work of at least \(w(j, \gamma(j, d))\), therefore:
    \begin{align}
        \label{eq:lemKPW:1} \sum_{j \in \mathcal{C}_1} w(j, \gamma(j, d)) & \leq \sum_{j \in \mathcal{C}_1} w(j, OPT(j)) \\
        \label{eq:lemKPW:2} \sum_{j \in \mathcal{C}_1} \gamma(j, d)       & \leq \sum_{j \in \mathcal{C}_1} OPT(j)
    \end{align}

    For the last partition \(\mathcal{C}_2\), first observe that in any optimal schedule of height at most \(d\) no job from \(\mathcal{C}_2\) can be scheduled on the same machine as a job from \(\mathcal{C}_1\).
    Thus, all jobs in \(\mathcal{C}_2\) are scheduled on at most \(m - \sum_{j \in \mathcal{C}_1} OPT(j)\) machines. Each job \(j \in \mathcal{C}_2\) uses at least \(\gamma(j, \frac{4}{7}d)\) machines and, since \(t(j, OPT(j)) > \frac{3}{7}d\), at most two such jobs can be scheduled on the same machine. Therefore:
    \begin{align}
        \label{eq:lemKPW:4} \sum_{j \in \mathcal{C}_2} \gamma\left(j, \frac{4}{7}d\right) \cdot \frac{1}{2} & \leq m - \sum_{j \in \mathcal{C}_1} OPT(j)
    \end{align}
    For the work area, we give a similar argument as above and get:
    \begin{align}
        \label{eq:lemKPW:3} \sum_{j \in \mathcal{C}_2} w\left(j, \gamma\left(j, \frac{4}{7}d\right)\right) & \leq \sum_{j \in \mathcal{C}_2} w(j, OPT(j))
    \end{align}

    Combining \Cref{eq:lemKPW:2,eq:lemKPW:4}, yields constraint \ref{const2:machines} and combining \Cref{eq:lemKPW:1,eq:lemKPW:3,eq:lemKPW:5}, yields constraint \ref{const1:work}.
    This proves the lemma.
\end{proof}

Thus, given a makespan guess \(d \geq \textsc{OPT}\), we know a partition of jobs \(\mathcal{C}_1 \sqcup \mathcal{C}_2 \sqcup \mathcal{C}_3 = J_B\) exists such that the constraints of \cref{thm:dual-approx} hold.
Finding such a partition is equivalent to solving a \ac{MCKP}.

\begin{definition}[\acf{MCKP}]
    Given a capacity \(m \in \mathbb{N}\), \(k \in \mathbb{N}\) classes and \(n \in \mathbb{N}\) items, each item \(j \in [n]\) with a cost \(c_{jl}\) and a size \(s_{jl}\) for each \(l \in [k]\).
    Find a partition \(\mathcal{C}_1 \sqcup \dots \sqcup \mathcal{C}_k = [n]\) with:
    \begin{align*}
         & \min         \sum_{l = 1}^{k}{\sum_{j \in \mathcal{C}_l}{c_{jl}}}        \\
         & \text{s.t.}  \sum_{l = 1}^{k}{\sum_{j \in \mathcal{C}_l}{s_{jl}}} \leq m
    \end{align*}
\end{definition}

Solving the \ac{MCKP} can be done in time \(O(mn)\) using dynamic programming \cite{knapsackproblems}.
Given a makespan guess \(d \geq OPT\), finding a partition that satisfies the constraints of \cref{thm:dual-approx} is equivalent to solving a \ac{MCKP} with capacity \(m\) and \(k = 3\) classes, where each job \(j \in J_B\) is an item with costs \(c_{j1} = w(j, \gamma(j, d))\), \(c_{j2} = w(j, \gamma(j, \frac{4}{7}d))\), and \(c_{j3} = w(j, \gamma(j, \frac{3}{7}d))\) and sizes \(s_{j1} = \gamma(j, d)\), \(s_{j2} = \frac{1}{2}\gamma(j, \frac{4}{7}d)\), and \(s_{j3} = 0\).
Therefore, we will assume for the remainder of this work that we are given a partition \(\mathcal{C}_1 \sqcup \mathcal{C}_2 \sqcup \mathcal{C}_3 = J_B\) that satisfies the constraints of \cref{thm:dual-approx}.

\subsection{Constructing a Schedule}

In this section we assume that we are given a makespan guess \(d \geq OPT\) and a partition \(\mathcal{C}_1 \sqcup \mathcal{C}_2 \sqcup \mathcal{C}_3 = J_B\) that satisfies the constraints of \cref{thm:dual-approx}.
We will prove \cref{thm:dual-approx} by providing an algorithm with the required approximation ratio.
An overview of this algorithm is given in \cref{alg:DualApprox}.


\begin{algorithm}[H]
    \caption{Description of the Dual Approximation Algorithm}\label{alg:DualApprox}
    \begin{algorithmic}[1]
        \Require \(m\) identical machines, \(n\) moldable jobs, and a makespan guess \(d\)
        \State Let \(J_S = \{j \in J | t(j, 1) \leq \frac{3}{7}d\}\) and \(J_B = J \setminus J_S\).
        \State Construct partition \(\mathcal{C}_1 \sqcup \mathcal{C}_2 \sqcup \mathcal{C}_3 = J_B\) via \ac{MCKP}. \Comment{\cref{lem:KP_W}}
        \If{\(W(\mathcal{C}_1, \mathcal{C}_2, \mathcal{C}_3) > dm - \mathcal{W}_S\)}
        \State \textbf{reject} \(d\).
        \EndIf
        \State Distribute jobs in \(\mathcal{C}_1, \mathcal{C}_2, \mathcal{C}_3\) onto shelves according to \Cref{lem:machinesS0S1}. \Comment{\cref{def:3shelf-schedule}}
        \State Apply \Cref{alg:repair1}.
        \If{\(q < m/6\)}
        \State Apply \Cref{alg:RepairS2}. \Comment{\cref{lem:RepairS2-1:correctness}}
        \Else
        \State Apply \Cref{alg:RepairS2-2}. \Comment{\cref{lem:RepairS2-2:correctness}}
        \EndIf
        \State Greedily add small jobs \(J_S\) via \Cref{alg:addSmall}. \Comment{\cref{lem:addsmall}}
    \end{algorithmic}
\end{algorithm}

Our goal is now to construct an intermediate schedule, as shown in \Cref{fig:3shelf-schedule-rules} (left) for all jobs \(J_B = J \setminus J_S\).
Any job $j \in \mathcal{C}_1$ is scheduled on $\gamma(j, d)$ machines at time 0, any job $j \in \mathcal{C}_2$ is scheduled on $\gamma(j, \frac{4}{7}d)$ machines at time 0, and any job \(j \in \mathcal{C}_3\) on \(\gamma(j, \frac{3}{7}d)\) machines such that it completes at time \(\frac{10}{7}d\).
When analyzing the resulting schedule, we notice that (1) the jobs in \(\mathcal{C}_1\) and \(\mathcal{C}_2\) require more than \(m\) machines, and (2) the jobs in \(\mathcal{C}_3\) might also require more than \(m\) machines.

\begin{figure}[h]
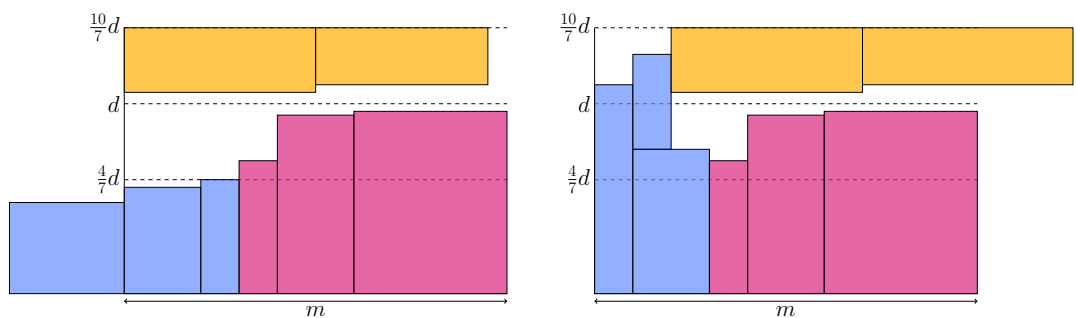

    \centering
    \resizebox{\columnwidth}{!}{%
        \includestandalone[width=\textwidth]{figures/3-shelf-construction}%
    }
    \caption{Schedule before \cref{lem:machinesS0S1} (left) and after (right). Pink jobs are in \(\mathcal{C}_1\), blue jobs in \(\mathcal{C}_2\), and yellow jobs in \(\mathcal{C}_3\).}
    \label{fig:3shelf-schedule-rules}
\end{figure}

We fix the first problem by scheduling the jobs \(j \in \mathcal{C}_2\) on fewer machines and focus on the second problem in \cref{sec:repair}.
Note that all jobs are scheduled in a way such that Constraint \ref{const1:work} remains satisfied and the total work of all big jobs is at most \(md - \mathcal{W}_S\). This remains true, due to work-monotony, for the remainder of this work by ensuring that no alterations to this schedule increase the number of machines assigned to any job.

\begin{lemma}
    \label{lem:machinesS0S1}
    All jobs in \(\mathcal{C}_2\) can be scheduled on \(m - \sum_{j \in \mathcal{C}_1}{\gamma(j, d)}\) machines without exceeding the makespan of \(\frac{10}{7}d\).
\end{lemma}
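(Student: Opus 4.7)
The plan is to schedule $\mathcal{C}_2$ on $M := m - \sum_{j\in\mathcal{C}_1}\gamma(j,d)$ machines by stacking two horizontal shelves of heights $\tfrac{4}{7}d$ and $\tfrac{6}{7}d$ (total $\tfrac{10}{7}d$). Setting $\Gamma_j := \gamma(j,\tfrac{4}{7}d)$, constraint C2 rewrites as $\sum_{j\in\mathcal{C}_2}\Gamma_j \le 2M$, which is the budget that makes a two-shelf packing conceivable.

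Jobs placed in the top shelf use their canonical allotment $\Gamma_j$ with runtime at most $\tfrac{4}{7}d$. Jobs placed in the bottom shelf use a reduced allotment $\nu_j \ge 2\Gamma_j/3$ (with $\nu_j \le \Gamma_j$), and by work monotony $w(j,\nu_j) \le w(j,\Gamma_j) \le \tfrac{4}{7}d\cdot\Gamma_j$, which combined with $\nu_j \ge 2\Gamma_j/3$ yields $t(j,\nu_j) \le \tfrac{6}{7}d$, so each bottom-shelf job fits vertically. To pick the partition $\mathcal{A},\mathcal{B}$, I would sort $\mathcal{C}_2$ in non-increasing order of $\Gamma_j$ and greedily add jobs to the top shelf while the running width total stays at most $M$, putting the remainder in the bottom shelf.

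The core of the verification is that the bottom shelf fits horizontally. A short argument using the decreasing sort shows that either $\mathcal{A}=\emptyset$ (which occurs iff $\Gamma_{j_1}>M$), or $\sum_\mathcal{A}\Gamma_j \ge M/2$: otherwise the first rejected job $j_i$ ($i\ge 2$) would satisfy $\Gamma_{j_i} > M - \sum_\mathcal{A}\Gamma_j > M/2$, and by the sort every earlier $\Gamma_{j_l}$ is also greater than $M/2$, contradicting $\sum_\mathcal{A}\Gamma_j < M/2$. Hence in the generic case $\sum_\mathcal{B}\Gamma_j \le \tfrac{3M}{2}$ and therefore $\sum_\mathcal{B}\nu_j \le M$, up to standard integer-rounding considerations that can be absorbed by choosing $\nu_j$ via $\gamma(j,\tfrac{6}{7}d)$ directly rather than via the ceiling $\lceil 2\Gamma_j/3\rceil$.

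The main obstacle is the degenerate case $\Gamma_{j_1}>M$. I would handle it as follows. If $\Gamma_{j_1}\le\tfrac{3M}{2}$, then work monotony gives $t(j_1,M)\le\tfrac{4}{7}d\cdot\Gamma_{j_1}/M\le\tfrac{6}{7}d$, so $j_1$ alone fills the bottom shelf on all $M$ machines and the remaining jobs (with total width $\le M$) fit in the top shelf. If instead $\Gamma_{j_1}\in(\tfrac{3M}{2},2M]$, the remaining jobs have combined width $<M/2$ and individual widths $<M/2$, and I would schedule them in parallel to $j_1$ by allotting $j_1$ about $\lceil 2\Gamma_{j_1}/5\rceil \le \tfrac{4M}{5}$ machines (runtime $\le\tfrac{10}{7}d$ by the same work-monotony bound) and placing the remainder in the leftover $\ge M/5$ columns, again shrinking their allotments so they fit within height $\tfrac{10}{7}d$. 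Carefully verifying this parallel layout together with the integrality of the reduced allotments is the delicate technical point of the argument.
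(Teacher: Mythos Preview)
Your two-shelf approach with heights $\tfrac{4}{7}d$ and $\tfrac{6}{7}d$ has a genuine gap: the claimed passage from $\sum_{\mathcal{B}}\Gamma_j\le\tfrac{3M}{2}$ to $\sum_{\mathcal{B}}\nu_j\le M$ does \emph{not} survive integer rounding, and switching to $\nu_j=\gamma(j,\tfrac{6}{7}d)$ does not rescue it. Consider $M=3$ and three jobs each with $\Gamma_j=2$, say $t(j,1)=d$ and $t(j,2)=\tfrac{4}{7}d$ (monotone, and $j\in J_B$). Then $\sum\Gamma_j=6=2M$, your greedy puts one job in $\mathcal{A}$ and two in $\mathcal{B}$, but each bottom-shelf job has $\gamma(j,\tfrac{6}{7}d)=2$ since $t(j,1)=d>\tfrac{6}{7}d$, so $\sum_{\mathcal{B}}\nu_j=4>M$. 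The bottom shelf overflows. More generally, whenever $\Gamma_j=2$ the interval $[2\Gamma_j/3,\Gamma_j]$ contains only the integer $2$, so no compression is possible under your height budget. Your degenerate case $\Gamma_{j_1}\in(\tfrac{3M}{2},2M]$ has an analogous issue: after allotting $j_1$ roughly $4M/5$ machines, the remaining jobs (up to nearly $M/2$ of them if all have $\Gamma_j=1$) each require at least one machine and cannot all fit in the leftover $M/5$ columns without stacking, which your argument does not set up.

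The paper avoids this by dispensing with the two-shelf split and instead compressing each job individually to use the full height $\tfrac{10}{7}d$. For $\Gamma_j\ge 4$ it allots $\lfloor\Gamma_j/2\rfloor$ machines, where work monotony gives $t(j,\lfloor\Gamma_j/2\rfloor)\le\tfrac{4}{7}d\cdot\Gamma_j/\lfloor\Gamma_j/2\rfloor\le\tfrac{4}{7}d\cdot\tfrac{5}{2}=\tfrac{10}{7}d$; for $\Gamma_j=2$ it uses one machine (height $\le\tfrac{8}{7}d$); and for $\Gamma_j\in\{1,3\}$ it stacks pairs vertically, handling at most one leftover of each type explicitly. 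The essential point is that the target compression ratio is $2$ (not $3/2$), which is exactly what integrality permits once you allow the full $\tfrac{10}{7}d$ height rather than $\tfrac{6}{7}d$.
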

\begin{proof}
    First notice that by Constraint \ref{const2:machines}, we get \(\sum_{j \in \mathcal{C}_2}{\gamma(j, \frac{4}{7}d) \cdot \frac{1}{2}} \leq m - \sum_{j \in \mathcal{C}_1}\gamma(j, d)\).
    Therefore, we need to effectively schedule each job, such that it requires at most \(\gamma(j, \frac{4}{7}d) \cdot \frac{1}{2}\) machines.

    We analyze jobs \(j \in \mathcal{C}_2\) based on their canonical number of machines \(\gamma(j, \frac{4}{7}d)\).
    \begin{case}[\(\gamma(j, \frac{4}{7}d) \geq 4\)]
        These jobs can be scheduled in such a way that they do not exceed an execution time of \(\frac{10}{7}d\) when assigned to \(\lfloor \gamma(j, \frac{4}{7}d) / 2 \rfloor\) machines. Due to work monotony, we have:
        \begin{align*}
            w\left(j, \gamma\left(j, \frac{4}{7}d\right)\right)                                                                         & \geq w\left(j, \left\lfloor \frac{\gamma\left(j, \frac{4}{7}d\right)}{2} \right\rfloor\right)                                                         \\
            \Leftrightarrow t\left(j, \gamma\left(j, \frac{4}{7}d\right)\right) \cdot \gamma\left(j, \frac{4}{7}d\right)                & \geq t\left(j, \left\lfloor \frac{\gamma(j, \frac{4}{7}d)}{2} \right\rfloor\right) \cdot \left\lfloor \frac{\gamma(j, \frac{4}{7}d)}{2} \right\rfloor \\
            \Rightarrow \frac{4}{7}d \cdot \frac{\gamma(j, \frac{4}{7}d)}{\left\lfloor \frac{\gamma(j, \frac{4}{7}d)}{2} \right\rfloor} & \geq t\left(j, \left\lfloor \frac{\gamma(j, \frac{4}{7}d)}{2} \right\rfloor\right)
        \end{align*}
        This implies that \(t\left(j, \left\lfloor \gamma(j, \frac{4}{7}d) / 2 \right\rfloor\right)\) is smaller than or equal to \(\frac{10}{7}d\) for any \(\gamma(j, \frac{4}{7}d) \geq 4\).
    \end{case}

    \begin{case}[\(\gamma(j, \frac{4}{7}d) = 2\)]
        These jobs can be scheduled on 1 machine without exceeding the makespan of \(\frac{10}{7}d\), exactly halving the number of required machines.
    \end{case}

    \begin{case}[\(\gamma(j, \frac{4}{7}d) = 3\)]
        If there are multiple such jobs, they can be scheduled in pairs on top of each other without idle time. In the remainder of this work, we may treat these jobs as a single bigger job. Thus, we may assume there exists at most one of these jobs and denote it as \(j_3\).
    \end{case}

    \begin{case}[\(\gamma(j, \frac{4}{7}d) = 1\)]
        These jobs can also be scheduled in pairs on top of each other. In the remainder of this work, we may treat these jobs as a single bigger job. A remaining single job, denoted \(j_1\), may need to be processed differently.
    \end{case}

    If neither \(j_1\) nor \(j_3\) exists, we have successfully scheduled all jobs in \(\mathcal{C}_2\) such that they require no more than \(m - \sum_{j \in \mathcal{C}_1}\gamma(j, d)\) machines.

    If only \(j_1\) exists, the schedule still fits within the machine limit since \(m - \sum_{j \in \mathcal{C}_1}\gamma(j, d)\) is of integer value.

    If only \(j_3\) exists, it can be scheduled on \(\gamma(j_3, \frac{10}{7}d) \leq 2\) machines, and again the schedule fits within the machine limit.

    If both \(j_1\) and \(j_3\) exist, \(j_3\) is scheduled on 2 machines, with a height of at most \(\frac{4}{7}d \cdot \frac{3}{2} = \frac{6}{7}d\), and \(j_1\) is placed on top of one of these machines.
    We consider these two jobs as one job of height \(t(j_3, 2) + t(j_1, 1)\) scheduled on 1 machine, and one job of height \(t(j_3, 2)\) scheduled on 1 machine.
\end{proof}
\begin{remark}
    \label{rem:noncontiguous}
    Note that splitting the job \(j_3\) into two parts might break the contiguous property of the schedule.
    However, we can simply reorder the machines such that this job is scheduled on adjacent machines.
    We will revisit this problem in each of the following proofs individually to ensure at any point that both parts of \(j_3\) can be scheduled on adjacent machines.
\end{remark}

For the following sections, we will refer to the schedule constructed in \cref{lem:machinesS0S1} as a 3-shelf schedule (\cref{def:3shelf-schedule} with \(\lambda = \frac{10}{7}\)).
To that end, we distribute jobs in \(\mathcal{C}_1 \sqcup \mathcal{C}_2 \sqcup \mathcal{C}_3\) onto three shelves, as shown in \Cref{fig:3shelf-schedule-trans}. Jobs in \(\mathcal{C}_1 \sqcup \mathcal{C}_2\) are distributed onto the first two shelves (\(S_0\) and \(S_1\)) such that any job with an execution time greater than \(d\) is in \(S_0\) and the remaining jobs are in \(S_1\).
The jobs in \(\mathcal{C}_3\) are put into \(S_2\) with an execution time of at most \(\frac{3}{7}d\).

\begin{definition}[3-Shelf Schedule]
    \label{def:3shelf-schedule}
    Given a makespan guess \(d\) and \(\lambda \in [\frac{10}{7}, \frac{3}{2})\), a 3-shelf schedule is a partition of jobs \(S_0 \sqcup S_1 \sqcup S_2 = J_B = J \setminus J_S\) such that:
    \begin{itemize}
        \item \(\sum_{j \in S_0}{w(j, \gamma(j, \lambda d))} + \sum_{j \in S_1}{w(j, \gamma(j, d))} + \sum_{j \in S_2}{w(j, \gamma(j, (\lambda - 1)d))} \leq md - \mathcal{W}_S\)
        \item \(\sum_{j \in S_0}{\gamma(j, \lambda d)} + \sum_{j \in S_1}{\gamma(j, d)} \leq m\)
    \end{itemize}
\end{definition}

Note that by \cref{lem:machinesS0S1}, \(\sum_{j \in S_0}{\gamma(j, \frac{10}{7}d)}+\sum_{j \in S_1}{\gamma(j, d)} \leq m\).
The total work of the jobs is still bounded by \(md - \mathcal{W}_S\).
As illustrated in \Cref{fig:3shelf-schedule-trans}, we now have to consider the remaining problem that shelf 2 might require too many machines.

\subsection{Repairing the 3-Shelf Schedule}
\label{sec:repair}
This section is devoted to repairing the 3-shelf schedule obtained in the previous section. Although the above knapsack formulation suggests a makespan guarantee of \(\frac{10}{7}d\) is achievable, we will have to relax this to obtain a feasible schedule.
To this end, we parameterize the approximation ratio of our algorithm by a parameter \(\lambda \in [\frac{10}{7}, \frac{3}{2})\), to be chosen later.
Thus, we aim to construct a schedule of height at most \(\lambda d\), where we allow jobs in shelf 2 to have an execution time of at most \((\lambda - 1)d\).

We will denote the number of machines used by \(S_0\) and \(S_2\) as \(m_0 \coloneqq \sum_{j \in S_0}{\gamma(j, \lambda d)}\) and \(m_2 \coloneqq \sum_{j \in S_2}{\gamma(j, (\lambda - 1)d)}\), respectively.
By construction \(m - m_0\) machines are used to schedule jobs in \(S_1\) and \(S_2\).
\cref{lem:machinesS0S1} guarantees that the jobs in \(S_1\) require at most \(m - m_0\) machines with height of at most \(d\).
We denote the number of idle machines in shelf 1, i.e. machines that do not execute any job \(j \in S_1\), as \(q\).
Note that it is not yet guaranteed that the jobs in \(S_2\) can be scheduled on the remaining \(m - m_0\) machines with a maximal height of \((\lambda - 1)d\), for any \(\lambda \in [\frac{10}{7}, \frac{3}{2})\).
For the remainder of this section, we assume that we are given a 3-shelf schedule with \(m_2 > m - m_0\), as the schedule is feasible otherwise (\Cref{fig:3shelf-schedule-trans}).

The following proofs use a lower bound on the total work of jobs to show the desired approximation ratio.
Remember that, by Constraint \ref{const1:work} of \cref{thm:dual-approx}, the given schedule satisfies \(\mathcal{W}_0 + \mathcal{W}_1 + \mathcal{W}_2 \leq md\), where \(\mathcal{W}_k\) denotes the total work area of jobs in shelf \(k\).
Since, by construction, the total work of jobs in shelf 0 is greater than \(dm_0\), we get \(\mathcal{W}_1 + \mathcal{W}_2 \leq d(m - m_0)\).
For ease of notation, we will w.l.o.g ignore shelf 0 and assume that \(m_0 = 0\) as illustrated in \Cref{fig:3shelf-schedule-trans} (right).

\begin{restatable}{lemma}{LemmaRepairProperties}
    \label{lem:repair-properties}
    Given a 3-shelf schedule and any \(\lambda \in [\frac{10}{7}, \frac{3}{2})\), we can either find a feasible contiguous schedule with makespan at most \(\lambda d\) in time \(O(mn)\), or the following properties hold:
    \begin{enumerate}[(i)]
        \item At most one machine executes a job \(j \in S_1\) with height less than \(\frac{\lambda}{2}d\).\label{lem:repair-properties:1}
        \item The work area of any job in \(S_2\) is greater than \(\lambda dq\).\label{lem:repair-properties:2}
        \item The work area of \(S_1\) is greater than \(\frac{\lambda}{2}d\left(m - q\right)\).\label{lem:repair-properties:3}
    \end{enumerate}
\end{restatable}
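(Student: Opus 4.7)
The plan is to implement Algorithm~1 as a staged repair loop built around two primary moves: stacking an $S_2$-job on top of short shelf-1 columns (columns whose $S_1$-job has height below $\frac{\lambda}{2}d$), and absorbing an $S_2$-job into the idle region of shelf~1. Each successful move either reduces the shelf-2 machine demand $m_2$ below $m-m_0$, in which case the loop returns a feasible schedule, or makes monotone progress toward one of the three target properties. A loop invariant is that the total work never grows, so constraint~\ref{const1:work} is preserved throughout; the $O(mn)$ budget follows by amortising the number of moves against jobs and columns.

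For the first move, while at least two shelf-1 columns are short I reorder the columns so the short ones are adjacent, pick any $S_2$-job $j$ with $\gamma(j,(\lambda-1)d)$ at most the number of short columns, and stack $j$ on top. The slack on every short column is $\lambda d - h > \frac{\lambda}{2}d > (\lambda-1)d$ (using $\lambda < 2$), so $j$ fits vertically and contiguity is preserved by the reordering. Each stack reduces $m_2$ by $\gamma(j,(\lambda-1)d)$, so upon exit either the schedule has been repaired or at most one short shelf-1 column remains, which is exactly~(i). For the second move, while some $S_2$-job $j$ satisfies $w(j,\gamma(j,(\lambda-1)d)) \le \lambda d q$ I reassign it to $\min(\gamma(j,(\lambda-1)d),q)$ idle columns; work-monotony then gives a runtime of at most $\lambda d$, so the job fits. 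On exit, either the schedule is feasible or every surviving $S_2$-job has work strictly more than $\lambda d q$ for the current $q$, giving~(ii).

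The main obstacle is establishing~(iii). Property~(i) alone only yields $\mathcal{W}_1 \ge (m-q-1)\frac{\lambda}{2}d$ whenever one short shelf-1 column persists, falling short of the strict bound $\mathcal{W}_1 > \frac{\lambda}{2}d(m-q)$. My plan is to bridge this gap with a final cleanup rule: if a single short $S_1$-job of height $h^* < \frac{\lambda}{2}d$ persists after the first move, either try to stack a single-machine $S_2$-job on top of it (always feasible since $h^*+(\lambda-1)d < \lambda d$), or re-allot the short $S_1$-job to fewer machines by trading height up to $\lambda d$ using work-monotony, releasing machines to shelf~2. Only if neither rule fires can (iii) be concluded, since then no short column can survive and at least one column must strictly exceed $\frac{\lambda}{2}d$ (otherwise another repair would have applied). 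The delicate part, and the step I expect to require the most care, is verifying that all swaps and reallotments maintain contiguity and preserve the total-work invariant, while keeping the values of $q$, $m_2$, and the short-column set coherent across phases within the $O(mn)$ time bound.
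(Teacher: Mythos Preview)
Your first move has a genuine gap: stacking an $S_2$-job on top of short $S_1$-columns does not change the height of the $S_1$-jobs on those columns. After the stack, each such machine still ``executes a job $j\in S_1$ with height less than $\frac{\lambda}{2}d$'', so property~(i) is not established. Worse, your loop can exit with arbitrarily many short columns intact simply because every remaining $S_2$-job needs more than that many machines --- a termination branch you do not handle. The same defect propagates to~(iii): since short columns are never actually removed, the claim ``no short column can survive'' is unsupported, and the work bound $\mathcal{W}_1>\frac{\lambda}{2}d(m-q)$ does not follow. Your cleanup rules do not close the gap either: stacking a one-machine $S_2$-job on the short column again leaves the $S_1$-job short, and ``re-allot to fewer machines'' is vacuous once the short job sits on a single machine (which, by the definition of $\gamma(j,d)$, is the typical case).

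The paper's repair acts on $S_1$ directly rather than covering it from above. Transformation~T1 takes any short $S_1$-job on $p>1$ machines and re-allots it to $\gamma(j,\lambda d)<p$ machines in $S_0$ (work monotony gives $t(j,\lceil p/2\rceil)\le\lambda d$); transformation~T2 stacks two single-machine short $S_1$-jobs together on one machine in $S_0$. Both strictly decrease the count of short $S_1$-columns, which is what forces~(i). Your second move is essentially the paper's~T3 and does yield~(ii). For~(iii) one further idea is needed that your proposal lacks: when exactly one short job $j$ survives, the paper stacks $j$ on top of another $S_1$-job $j'$ with $t(j')<\lambda d-t(j)$ and moves the pair to $S_0$; if no such $j'$ exists, a separate small-$m$ case analysis ($m\le 2$) finishes the argument.
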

\begin{proof}
    The proof of this lemma uses techniques originally proposed by Mounié, Rapine, and Trystram \cite{mounie2007}. We give a full proof in Appendix \ref{sec:app:repair2:properties} for completeness.
\end{proof}

\begin{figure}[h]
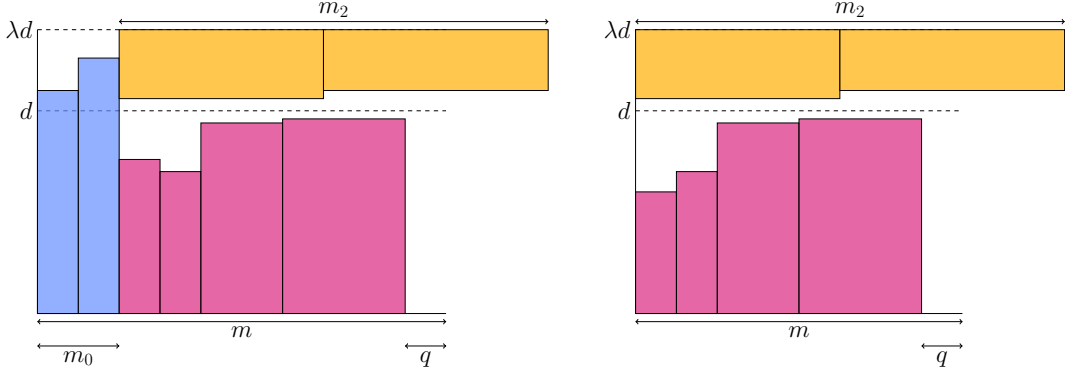

    \centering
    \resizebox{\columnwidth}{!}{%
        \includestandalone[width=\textwidth]{figures/3-shelf-transformation}%
    }
    \caption{Infeasible 3-shelf schedule. Blue jobs are in \(S_0\), pink jobs in \(S_1\), and yellow jobs in \(S_2\).}
    \label{fig:3shelf-schedule-trans}
\end{figure}

Our goal for the remainder of this section is now to show that \(S_2\) does not require more than \(m\) machines.
The new idea is to compress jobs in \(S_2\) to a height greater than the height of their shelf (\((\lambda - 1)d\)), since the jobs in \(S_1\) cannot be utilizing the entire height of their shelf.
We formulate multiple new repair algorithms that utilize this observation, and show that they can construct a feasible schedule in several ways.

\begin{restatable}{lemma}{LemmaRepairCorrectness}
    \label{lem:RepairS2-1:correctness}
    Given a 3-shelf schedule with \(q \leq m/6\), we can find a feasible contiguous schedule with makespan at most \(\frac{10}{7}d\) (if \(q = 0\)), and \(\frac{13}{9}d\) (if \(0 < q \leq m/6\)) in time \(O(mn)\).
\end{restatable}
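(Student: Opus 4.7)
The plan is to invoke \cref{lem:repair-properties} first; if it already returns a feasible contiguous schedule we are done, otherwise we may assume properties (i)--(iii) hold. I instantiate \cref{lem:repair-properties} with $\lambda = \frac{10}{7}$ when $q = 0$ and with $\lambda = \frac{13}{9}$ when $0 < q \leq m/6$. The remaining task in both subcases is to re-allot each job of $S_2$ to fewer machines---using work monotony to trade machine count for height---and place the re-allotted jobs above the tall $S_1$-columns, on the (at most one) short $S_1$-column guaranteed by (i), and on the $q$ idle columns of $S_1$, so that the total column demand stays at most $m$ while every stack has height at most $\lambda d$.

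For the subcase $q=0, \lambda = \frac{10}{7}$, property (iii) gives $\mathcal{W}_1 > \frac{5}{7}dm$ and hence $\mathcal{W}_2 < \frac{2}{7}dm$. Property (i) provides a vertical slot of height at least $\frac{10}{7}d-\frac{5}{7}d = \frac{5}{7}d$ above each of $m-1$ columns of $S_1$, so any job of $S_2$ may be compressed from its canonical width $\gamma(j,\frac{3}{7}d)$ down to roughly $\frac{3}{5}\gamma(j,\frac{3}{7}d)$ while its new height stays below $\frac{5}{7}d$. I would then close the argument by the accounting bound $w(j,\gamma(j,\frac{3}{7}d)) > \frac{3}{7}d\bigl(\gamma(j,\frac{3}{7}d)-1\bigr)$ (which follows from the canonicality of $\gamma$): summing gives $\sum_{j \in S_2}\bigl(\gamma(j,\frac{3}{7}d)-1\bigr) < \frac{7}{3d}\mathcal{W}_2 < \frac{2m}{3}$, and combining with the compression ratio $\frac{3}{5}$ shows that the total machine demand of $S_2$ after re-allotment falls strictly below $m$, producing a feasible packing of makespan $\frac{10}{7}d$.

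For the subcase $0<q\le m/6, \lambda=\frac{13}{9}$, property (ii) becomes non-vacuous and is the central tool: each job of $S_2$ has work strictly greater than $\frac{13}{9}dq$, while property (iii) and the total-work bound give $\mathcal{W}_2 < d\bigl(m - \frac{13}{18}(m-q)\bigr)$, so the cardinality of $S_2$ is bounded by a linear function of $m/q$. The tall-column height budget is now $\lambda d - \frac{\lambda}{2}d = \frac{13}{18}d$, versus a shelf-$2$ default of $(\lambda-1)d = \frac{4}{9}d$, so each $S_2$-job may be compressed by up to a factor $\frac{13}{8}$ in height (equivalently $\frac{8}{13}$ in width) while still fitting above a tall column; jobs that are too wide even after this compression are instead placed onto the $q$ idle columns (and, if needed, the one short column from (i)) using the full height budget $\frac{13}{9}d$. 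The hypothesis $q \leq m/6$ is what makes the counting from (ii) tight enough to verify that the re-allotted jobs all fit.

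The main obstacle I expect is the second subcase, where the balance between property (ii) and the width budget $q+1$ of the idle-plus-short region is tight; additionally, preserving contiguity requires a machine reordering in the spirit of \cref{rem:noncontiguous} so that idle and short columns form one interval and the compressed $S_2$-jobs are placed in a single left-to-right sweep. Both the invocation of \cref{lem:repair-properties} and the sweep run in $O(mn)$, matching the claimed running time.
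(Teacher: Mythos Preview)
Your proposal has a genuine gap that invalidates both subcases: you have misread property~(i) of \cref{lem:repair-properties}. That property says at most one non-idle machine in $S_1$ has height \emph{less} than $\frac{\lambda}{2}d$; hence on the remaining columns the $S_1$-height is \emph{at least} $\frac{\lambda}{2}d$ and can be as large as $d$. Consequently the vertical slot above a ``tall'' $S_1$-column is at \emph{most} $\lambda d-\frac{\lambda}{2}d=\frac{\lambda}{2}d$ and can be as small as $(\lambda-1)d$. Your computation ``$\frac{10}{7}d-\frac{5}{7}d=\frac{5}{7}d$'' subtracts the lower bound on the $S_1$-height and therefore yields an \emph{upper} bound on the slot, not the lower bound you claim. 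Thus compressing an $S_2$-job to height $\frac{5}{7}d$ (respectively $\frac{13}{18}d$) gives no guarantee that it fits above a column whose $S_1$-job may have height close to $d$; the compression budget you use in both subcases is simply not available. A secondary issue is that, even granting your slot height, the counting ``$\sum(\gamma_j-1)<\frac{2m}{3}$ combined with compression ratio $\frac{3}{5}$'' does not by itself bound $\sum\lceil\frac{3}{5}\gamma_j\rceil$ below $m$ without a separate bound on $|S_2|$, which is vacuous from~(ii) when $q=0$.

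The paper's argument is structurally different and sidesteps the slot-height issue entirely. It repeatedly compresses the currently \emph{shortest} job of $S_2$ by one machine until $m_2\le m$, then sorts $S_1$ by decreasing height and $S_2$ by increasing height so that tall $S_2$-jobs sit above short $S_1$-jobs and vice versa. Feasibility is then proved by contradiction on total work: if some machine $i$ has load $>\lambda d$, one shows (using the sorted order together with height bounds such as $t\le\frac{4}{7}d$ when $q=0$, or $t\le\frac{43}{81}d$ when $0<q\le m/6$, both derived from the greedy compression rule) that every machine has load exceeding $d$ on average, contradicting $\mathcal W_1+\mathcal W_2\le dm$. In other words, the paper never tries to guarantee a uniform slot height; the sorting is what absorbs the variability of the $S_1$-column heights, and the correctness is an area argument rather than a direct packing count.
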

\begin{proof}
    The proof of this lemma uses fairly simple techniques, therefore we defer it to Appendix \ref{sec:app:repair2-1:correctness}.
\end{proof}

It remains to show that we can also find a feasible schedule for the case \(q > m/6\).
We prepare this proof with the following observation:

\begin{observation}
    Given a 3-shelf schedule with \(q > m/6\) and \(\lambda \geq \frac{4}{3}\), the number of jobs in \(S_2\) is at most \(|S_2| \leq 1\).
\end{observation}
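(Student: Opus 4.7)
The plan is a short proof by contradiction that just combines the two lower bounds from \cref{lem:repair-properties} with the total work budget. First I would observe that \cref{lem:repair-properties} offers a dichotomy: either we already have a feasible schedule of makespan at most $\lambda d$ (in which case the repair stage we are preparing for is moot and there is nothing to prove), or properties \emph{(ii)} and \emph{(iii)} of that lemma are in force. Assume therefore that \emph{(ii)} and \emph{(iii)} hold, and suppose for contradiction that $|S_2| \geq 2$.

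Next I would apply the two lower bounds directly. Property \emph{(ii)} says that every job in $S_2$ has work area greater than $\lambda d q$, so with at least two such jobs we get $\mathcal{W}_2 > 2\lambda d q$. Property \emph{(iii)} gives $\mathcal{W}_1 > \tfrac{\lambda}{2} d(m-q)$. Summing these,
\begin{equation*}
\mathcal{W}_1 + \mathcal{W}_2 \;>\; \tfrac{\lambda}{2}d(m-q) + 2\lambda d q \;=\; \tfrac{\lambda}{2} d m + \tfrac{3\lambda}{2} d q.
\end{equation*}
Plugging in the assumption $q > m/6$ yields $\mathcal{W}_1 + \mathcal{W}_2 > \tfrac{\lambda}{2} dm + \tfrac{\lambda}{4} dm = \tfrac{3\lambda}{4} dm$, and since $\lambda \geq 4/3$ this is at least $dm$.

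Finally, I would appeal to the work budget already derived above the observation: after we ignore shelf $0$ (which the excerpt does w.l.o.g., setting $m_0 = 0$), constraint \ref{const1:work} becomes $\mathcal{W}_1 + \mathcal{W}_2 \leq dm$. This contradicts the strict lower bound just obtained, so $|S_2| \leq 1$ as claimed. No step should be technically difficult; the only mild subtlety is being explicit that \cref{lem:repair-properties} is applicable here (so we may assume its properties) and that the $m_0 = 0$ reduction carries the work budget into the clean form $\mathcal{W}_1 + \mathcal{W}_2 \leq dm$, which is exactly what makes the two inequalities $q > m/6$ and $\lambda \geq 4/3$ line up tightly.
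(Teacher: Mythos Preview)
Your proposal is correct and follows essentially the same argument as the paper: assume $|S_2|\ge 2$, combine the bounds $\mathcal{W}_2>2\lambda dq$ and $\mathcal{W}_1>\tfrac{\lambda}{2}d(m-q)$ from \cref{lem:repair-properties}, use $q>m/6$ to obtain $\mathcal{W}_1+\mathcal{W}_2>\tfrac{3\lambda}{4}dm\ge dm$, and contradict the work budget. The paper merely rearranges the same chain to conclude $\lambda<\tfrac{4}{3}$ instead, which is the identical contradiction viewed from the other side.
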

\begin{proof}
    Suppose \(|S_2| > 1\).
    By \cref{lem:repair-properties}.(\ref{lem:repair-properties:2}), we have (\(\mathcal{W}_2 > \lambda dq \cdot |S_2| \geq 2\lambda dq\)).
    With \cref{lem:repair-properties}.(\ref{lem:repair-properties:3}), we get:
    \begin{align*}
        \mathcal{W}_1 + \mathcal{W}_2                             & \leq dm                                 \\
        \Rightarrow \frac{\lambda}{2}d(m-q) + 2\lambda dq         & < dm                                    \\
        \Leftrightarrow \frac{\lambda}{2}m + \frac{3}{2}\lambda q & < m                                     \\
        \Leftrightarrow \frac{3}{4}\lambda m                      & < m           &  & \text{(\(q > m/6\))} \\
        \Leftrightarrow \lambda                                   & < \frac{4}{3}
    \end{align*}
    This is a contradiction to \(\lambda \geq \frac{4}{3}\).
\end{proof}

\begin{algorithm}[H]
    \caption{RepairS2-2}\label{alg:RepairS2-2}
    \begin{algorithmic}[1]
        \State Denote the single job in $S_2$ as $j_0$
        \State Sort all jobs in $S_1$ in descending order of execution time
        \State Set \(i = 0\)
        \For{\textbf{each} \(i \in \{0, \dots, m - q - 1\}\)}
        \State Place $j_0$ on the $m - i$ least loaded machines
        \If{\(C_{\max} \leq \lambda d\)}
        \State \textbf{return}
        \EndIf
        \EndFor
    \end{algorithmic}
\end{algorithm}

With this, we may assume that \(|S_2| = 1\) and denote the single job in \(S_2\) as \(j_0\).
This observation can be exploited by enumerating all possible allotments for this job (\Cref{alg:RepairS2-2}).

The following lemma analyzes the correctness of the last (and most complex) case.

\begin{lemma}
    \label{lem:RepairS2-2:correctness}
    Given a 3-shelf schedule with \(q > m/6\), we can find a feasible schedule with makespan at most $\lambda d$, for any \(\lambda \geq -\frac{1}{3}W_{-1}(-\frac{3}{e^4}) \approx 1.4593\).
\end{lemma}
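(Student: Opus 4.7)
The plan is to argue by contradiction. Suppose Algorithm~\ref{alg:RepairS2-2} fails to produce a makespan of at most \(\lambda d\) at every iteration \(i \in \{0, \ldots, m-q-1\}\); I will derive that \(\mathcal{W}_1 + \mathcal{W}_2 > md\), contradicting the work constraint of every 3-shelf schedule.

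Setup. By the observation preceding this lemma, \(|S_2| \leq 1\); if \(|S_2| = 0\) the 3-shelf schedule is already feasible, so I may assume \(|S_2| = 1\) and call the single job \(j_0\). Sort the \(S_1\)-loads in non-decreasing order, \(L_1 \leq L_2 \leq \cdots \leq L_m\), so that \(L_1 = \cdots = L_q = 0\) and, by property~(i) of \cref{lem:repair-properties}, \(L_k \geq \tfrac{\lambda}{2} d\) for all \(k \geq q+2\) (with at most one exception at rank \(q+1\)). Let \(g(k) := t(j_0, k)\); by work-monotony together with \(\gamma(j_0, (\lambda-1)d) > m\), we have \(k\, g(k) \leq \mathcal{W}_2\) and \(g(k) > (\lambda-1)d\) for every \(k \leq m\).

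Failure and a pointwise bound. Failure at iteration \(i\) says \(L_{m-i} + g(m-i) > \lambda d\); all iterations failing therefore gives \(L_k + g(k) > \lambda d\) for every \(k \in \{q+1, \ldots, m\}\). Combining with \(g(k) \leq \mathcal{W}_2/k\) and property~(i) yields
\[
L_k \;\geq\; \max\!\left(\tfrac{\lambda}{2} d,\; \lambda d - \tfrac{\mathcal{W}_2}{k}\right) \qquad \text{for every non-idle } k.
\]
The two branches of the \(\max\) coincide at the threshold \(k^\ast = 2\mathcal{W}_2/(\lambda d)\); below \(k^\ast\) the constant branch dominates, above \(k^\ast\) the hyperbolic branch does.

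The central geometric step is to sum/integrate this lower envelope to bound \(\mathcal{W}_1\). Passing to the continuous relaxation (the discrete corrections being lower order), the hyperbolic tail on \([k^\ast, m]\) integrates to a logarithmic term \(\mathcal{W}_2 \ln\!\bigl(2\mathcal{W}_2/(\lambda d m)\bigr)\); this is the elegant geometric ingredient, and it is the source of the Lambert W function in the final answer. Adding \(\mathcal{W}_2\) to both sides and invoking \(\mathcal{W}_1 + \mathcal{W}_2 \leq md\) reduces the failure hypothesis to one inequality in \(\mu := q/m\), \(\mathcal{W}_2\), and \(\lambda\). I then minimize its left-hand side over \(\mathcal{W}_2\), taking into account the lower bounds \(\mathcal{W}_2 > \lambda dq\) (property~(ii)) and \(\mathcal{W}_2 > (\lambda-1)d\,m\) (from \(\gamma(j_0, (\lambda-1)d) > m\)), and then over \(\mu > 1/6\). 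After a case analysis to determine which lower bound is active at the extremum and a one-variable optimization, the adversarial feasibility condition reduces to the canonical Lambert-W equation \(\lambda e^{-3\lambda} = e^{-4}\), whose unique solution with \(\lambda > 1\) is \(\lambda^\ast = -\tfrac{1}{3} W_{-1}(-3/e^4) \approx 1.4593\). For \(\lambda \geq \lambda^\ast\) the derived inequality fails, contradicting the failure assumption and proving the lemma.

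The main obstacle will be this final optimization: identifying which of the two lower bounds on \(\mathcal{W}_2\) is active at the extremum, and performing the two-variable calculus cleanly enough to expose an equation of the form \(\alpha e^\alpha = \beta\). The geometric step itself (the hyperbolic integral) is short; it is the case analysis and optimization afterwards that demand the careful bookkeeping indicated by the paper's phrase ``elegant geometric argument'', and the inversion via \(W_{-1}\) is then mechanical.
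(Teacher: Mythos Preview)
Your strategy---assume Algorithm~\ref{alg:RepairS2-2} fails at every step, extract a per-machine load bound, integrate the hyperbolic envelope, and contradict the work constraint---is exactly the paper's. The differences are in the bookkeeping, and one of them matters.

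\medskip
\textbf{Where you diverge.} You keep \(\mathcal{W}_2\) as a free variable in the load bound \(L_k \geq \lambda d - \mathcal{W}_2/k\), and then plan to minimise the resulting expression for \(\mathcal{W}_1+\mathcal{W}_2\) over both \(\mathcal{W}_2\) and \(\mu=q/m\). The paper instead substitutes the crude upper bound \(\mathcal{W}_2 < dm - \tfrac{\lambda}{2}d(m-q)\) (which follows from property~(iii)) into the hyperbolic term \emph{immediately}, obtaining \(L_i > \lambda d - \bigl(dm-\tfrac{\lambda}{2}d(m-q)\bigr)/(m-i)\). This sacrifices some tightness but collapses the problem to a one-variable minimisation in \(q\) alone; the minimiser is found at the crossover \(q=\tfrac{\lambda-1}{\lambda}m\) of the two lower bounds on \(\mathcal{W}_2\), and evaluating there produces the clean inequality \(\ln\lambda < 3\lambda-4\) in three lines. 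Your two-variable optimisation may well land at the same point, but you have not shown it, and the interaction between the logarithmic term \(\mathcal{W}_2\ln(2\mathcal{W}_2/(\lambda d m))\) and the additive \(\mathcal{W}_2\) is delicate enough that ``after a case analysis \ldots\ reduces to'' is not yet a proof.

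\medskip
\textbf{A genuine gap.} Your phrase ``discrete corrections being lower order'' hides a real issue. Property~(i) allows one non-idle machine with load below \(\tfrac{\lambda}{2}d\), so the integral lower bound on \(\mathcal{W}_1\) can fail by up to \(\tfrac{\lambda}{2}d\). The paper handles this by (a) treating \(m<6\) by explicit enumeration, and (b) showing for \(m\geq6\) that at least two machines fall into the constant-\(\tfrac{\lambda}{2}d\) branch of the \(\max\), so the average-load property~(iii) absorbs the single exception. You need an analogous argument; ``lower order'' is not enough when the target inequality is tight at \(\lambda^\ast\).
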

\begin{proof}
    We intend to prove the lemma by showing that the height of any job in \(S_1\) scheduled on machine \(i \in \{0, \dots, m - 1\}\) is greater than \(\max\left(\lambda d-\frac{dm-\frac{\lambda}{2}d(m-q)}{m-i}, \frac{\lambda}{2}d \right)\), provided we cannot find a feasible schedule with \cref{alg:RepairS2-2} (as illustrated in \Cref{fig:k1shelf-schedule}). With this observation, we can analyze the total work area of the schedule and contradict \(\mathcal{W}_1 + \mathcal{W}_2 \leq dm\).

    Consider a schedule as illustrated in \Cref{fig:k1shelf-schedule} (left), where \cref{alg:RepairS2-2} cannot place \(j_0 \in S_2\).
    We start by showing that when placing all jobs in shelf 1 in non-increasing order of processing time, the processing time of jobs (load \(L_i\)) on each machine \(i \in \{0, \dots, m-1\}\) is greater than $g(i) = \lambda d-\frac{dm-\frac{\lambda}{2}d(m-q)}{m-i}$.
    \begin{figure}[h]
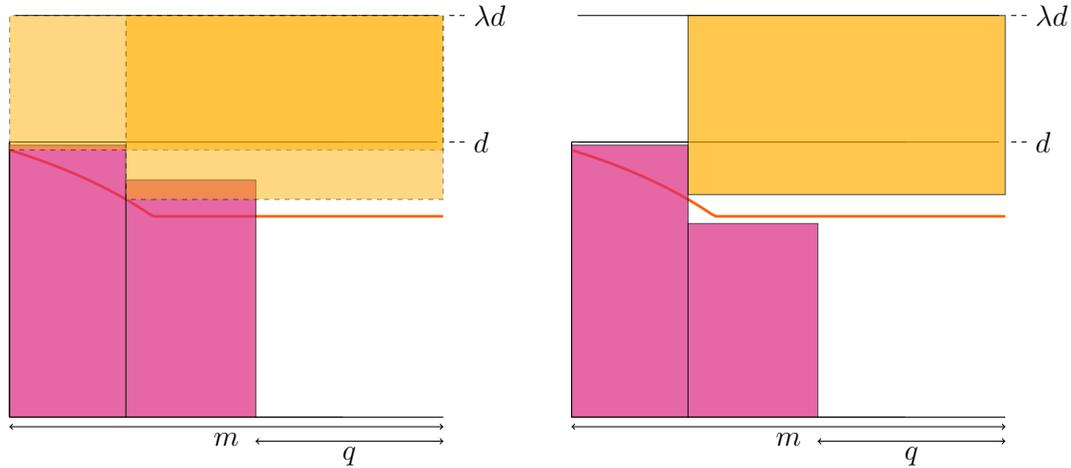

        \centering
        \resizebox{\columnwidth}{!}{%
            \includestandalone[width=\textwidth]{figures/shelf-k=1}%
        }
        \caption{Infeasible (left) and feasible (right) schedule after applying Algorithm \ref{alg:RepairS2-2}. The red line indicates \(\max\left(\frac{\lambda d}{2}, \lambda d-\frac{dm-\frac{\lambda}{2}d(m-q)}{m-i}\right)\). Pink jobs are in \(S_1\) and yellow jobs in \(S_2\).}
        \label{fig:k1shelf-schedule}
    \end{figure}

    \begin{restatable}{assumption}{loadAssumption}
        \label{assumption:load}
        If \cref{alg:RepairS2-2} does not find a feasible schedule, the jobs in \(S_1\) can be placed such that the load \(L_i\) on each machine \(i \in \{0, \dots, m - 1\}\) satisfies:
        \begin{align*}
            L_i > g(i) = \lambda d-\frac{dm-\frac{\lambda}{2}d(m-q)}{m-i}
        \end{align*}
    \end{restatable}
    This assumption can be proven by a simple contradiction argument.
    We formally show this in Appendix \ref{sec:app:repair2-2:correctness:loadAssumption}.

    It remains for us to analyze the total work area of the schedule under \cref{assumption:load} to reach a contradiction to \(\mathcal{W}_1 + \mathcal{W}_2 \leq dm\). We will split this analysis into two cases.

    \textit{Case 1 (\(m < 6\))}:
    In this case, we explicitly calculate a bound on \(\mathcal{W}_1 + \mathcal{W}_2\). For more details please refer to Appendix \ref{sec:app:repair2-2:correctness:case2}.

    \textit{Case 2 (\(m \geq 6\))}:
    The most important observation in this case is that we can correctly bound the work on shelf 1 by the following integral:
    \begin{align}
        \label{eq:WBound}
        \mathcal{W}_1 > \int_{0}^{m - q}\max\left\{\lambda d - \frac{dm - \frac{\lambda}{2}d(m -q)}{m - i}, \frac{\lambda}{2}d\right\} \, di
    \end{align}
    This bound is not entirely trivial, since shelf 1 might contain one job with an execution time of less than \(\frac{\lambda}{2}d\), by \cref{lem:repair-properties}.(\ref{lem:repair-properties:1}). We formally show the correctness of this bound, under the assumption of \(m \geq 6\), in Appendix \ref{sec:app:repair2-2:correctness:WBound}.

    In order to reach a contradiction, we need to show that
    \begin{align*}
         & \mathcal{W}_1 + \mathcal{W}_2 \\                                                                                                                                                              & > \int_{0}^{m - q}\max\left\{\lambda d - \frac{dm - \frac{\lambda}{2}d(m -q)}{m - i}, \frac{\lambda}{2}d\right\} \, di + \max\left\{(\lambda - 1)dm, \lambda dq\right\} \\
         & > dm
    \end{align*}
    The first inequation follows directly from \cref{eq:WBound}, \cref{lem:repair-properties}(\ref{lem:repair-properties:2}), and \(\gamma(j_0, (\lambda - 1)d) > m \Rightarrow \mathcal{W}_2 > (\lambda - 1)dm\).
    In order to show the second inequation, we will interpret this expression as a function in \(q\) and show that its minimum, at \(q = \frac{\lambda - 1}{\lambda}m\), is greater than \(dm\).
    \begin{restatable}{observation}{ObservationMinimumWq}
        \label{obs:minimum:Wq}
        The function
        \[
            \mathcal{W}(q) = \int_{0}^{m - q}\max\left\{\lambda d - \frac{dm - \frac{\lambda}{2}d(m -q)}{m - i}, \frac{\lambda}{2}d\right\} \, di + \max\left\{(\lambda - 1)dm, \lambda dq\right\}
        \]
        has a minimum at \(q = \frac{\lambda - 1}{\lambda}m\), for any \(\lambda \in [\frac{10}{7}, \frac{3}{2})\).
    \end{restatable}
    We give a full proof of this observation in Appendix \ref{sec:app:repair2-2:correctness:Wq}.

    Since \(\mathcal{W}(q)\) has the minimal value (in the interval \([0, m]\)) at \(q = \frac{\lambda - 1}{\lambda}m\), we need to show that:
    \begin{align}
        \mathcal{W}\left(\frac{\lambda - 1}{\lambda}m\right)                                                                                                              & > dm              \nonumber                \\
        \Leftrightarrow \int_{0}^{m - \frac{\lambda - 1}{\lambda}m}\max\left\{\lambda d - \frac{\frac{1}{2}dm}{m - i}, \frac{\lambda}{2}d\right\} \, di + (\lambda - 1)dm & > dm              \nonumber                \\
        \Leftrightarrow \int_{0}^{m - \frac{\lambda - 1}{\lambda}m}\max\left\{\lambda d - \frac{\frac{1}{2}dm}{m - i}, \frac{\lambda}{2}d\right\} \, di                   & > (2 - \lambda)dm \label{eq:Wq:inequation}
    \end{align}
    In order to further analyze this expression, we need to take a closer look at the integral term. First, let's find the switching point \(i'\) of the max-function, where \(\lambda d - \frac{\frac{1}{2}dm}{m - i'} = \frac{\lambda}{2}d\).
    \begin{align*}
        \lambda d - \frac{\frac{1}{2}dm}{m - i'}     & = \frac{\lambda}{2}d \\
        \Leftrightarrow \frac{\frac{1}{2}dm}{m - i'} & = \frac{\lambda}{2}d \\
        \Leftrightarrow dm                           & = \lambda d (m -i')  \\
        \Leftrightarrow \frac{\lambda - 1}{\lambda}m & = i'
    \end{align*}
    Then, we can split the integral from \cref{eq:Wq:inequation} at \(i'\) and analyze both parts separately.
    \begin{align*}
         & \int_{0}^{m - \frac{\lambda - 1}{\lambda}m}\max\left\{\lambda d - \frac{\frac{1}{2}dm}{m - i}, \frac{\lambda}{2}d\right\} \, di                                                                                                                    \\
         & = \int_{0}^{\frac{\lambda - 1}{\lambda}m}\left(\lambda d - \frac{\frac{1}{2}dm}{m - i}\right) \, di + \int_{\frac{\lambda - 1}{\lambda}m}^{m - \frac{\lambda - 1}{\lambda}m}\frac{\lambda}{2}d \, di                                               \\
         & = \lambda d\int_{0}^{\frac{\lambda - 1}{\lambda}m}1 \, di - \int_{0}^{\frac{\lambda - 1}{\lambda}m}\frac{\frac{1}{2}dm}{m - i} \, di + \left(\left(m - \frac{\lambda - 1}{\lambda}m\right) - \frac{\lambda - 1}{\lambda}m\right)\frac{\lambda}{2}d \\
         & = \left(\lambda - 1\right)dm - \frac{1}{2}dm \int_{0}^{\frac{\lambda - 1}{\lambda}m}\frac{1}{m - i} \, di + \frac{2 - \lambda}{2}dm                                                                                                                \\
         & = \left(\lambda - 1\right)dm - \frac{1}{2}dm \left(\ln(m)-\ln\left(m - \frac{\lambda - 1}{\lambda}m\right)\right) + \frac{2 - \lambda}{2}dm                                                                                                        \\
         & = \left(\lambda - 1\right)dm - \frac{1}{2}dm \ln(\lambda)+ \frac{2 - \lambda}{2}dm
    \end{align*}

    Together with \cref{eq:Wq:inequation}, we get:
    \begin{align*}
        \left(\lambda - 1\right)dm - \frac{1}{2}dm\ln\left(\lambda\right) + \frac{2 - \lambda}{2}dm & > (2 - \lambda)dm  \\
        \Leftrightarrow - dm\ln\left(\lambda\right) + \left(2 - \lambda\right) dm                   & > (6 - 4\lambda)dm \\
        \Leftrightarrow \ln(\lambda)                                                                & < 3\lambda - 4     \\
        \Leftrightarrow \lambda                                                                     & < e^{3\lambda - 4} \\
        \Leftrightarrow \lambda e^{-3\lambda}                                                       & < e^{-4}           \\
        \Leftrightarrow -3\lambda e^{-3\lambda}                                                     & > -3e^{-4}
    \end{align*}
    The Lambert W function is defined such that \(ye^y=x\), iff \(W(x)=y\). We note that we use the \(W_{-1}\) branch of the Lambert function and since this branch is monotonically decreasing we get, with \(x:=-3e^{-4}\), \(y:=-3\lambda\), the following inequation.
    \begin{align*}
        -3\lambda               & < W_{-1}(-3e^{-4})              \\
        \Leftrightarrow \lambda & > -\frac{1}{3} W_{-1}(-3e^{-4})
    \end{align*}
    Thus, we reach a contradiction to \(\mathcal{W}_1 + \mathcal{W}_2 > dm\) for any \(\lambda > -\frac{1}{3} W_{-1}(-3e^{-4}) \approx 1.4593\).

    This concludes the proof of \cref{lem:RepairS2-2:correctness}.
\end{proof}

\begin{observation}
    \label{obs:contiguous:3}
    The feasible schedule constructed by \cref{alg:RepairS2-2} can be rearranged to a contiguous schedule without altering the makespan.
\end{observation}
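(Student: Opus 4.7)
The plan is to exhibit a permutation of the machines under which every job in the schedule produced by \cref{alg:RepairS2-2} occupies a contiguous block.

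First I would observe that in the three-shelf schedule of \cref{def:3shelf-schedule}, every job $j \in S_0 \cup S_1$ uses a set of machines carrying the \emph{same} load, namely $t(j, \gamma(j, \cdot))$, because jobs within a single shelf are never stacked on top of each other. Therefore, sorting the $m$ machines in non-increasing order of their total $S_0 \cup S_1$-load, breaking ties so that machines belonging to the same job stay together, produces a layout in which each such job spans a contiguous set of machines: the $S_0$-blocks come first, then the $S_1$-blocks in decreasing processing-time order, and finally the $q$ idle machines. Since \cref{alg:RepairS2-2} assigns $j_0$ to the $m - i$ least-loaded machines, and these are exactly the last $m - i$ positions in the sorted order, $j_0$ also ends up on a contiguous block.

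The remaining subtlety is the split pair from \cref{lem:machinesS0S1}: the jobs $j_1$ and $j_3$ are treated by the algorithm as two separate single-machine entities, but $j_3$ physically occupies two machines with different loads ($t(j_3,2)$ on one, $t(j_3,2) + t(j_1,1)$ on the other). I would address this by freezing the $j_3$-pair as an indivisible 2-wide block throughout the sorting process and inserting it at the rank determined by its heavier machine. Apart from a single adjacent swap, this preserves the non-increasing load order.

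The main obstacle is then ensuring that, after this adjustment, $j_0$'s coverage does not straddle the $j_3$-block, i.e., does not cover one of its two machines but not the other. I would resolve this by using the fact that \cref{alg:RepairS2-2} tries all values of $i$ in its loop and that the analysis in \cref{lem:RepairS2-2:correctness} does not depend on a specific successful $i$, only on the existence of one. By monotony of $t(j_0, \cdot)$ and of the sorted loads, shifting the index $i$ by at most one so as to align $j_0$'s boundary with the edge of the $j_3$-block preserves feasibility. The resulting schedule has makespan at most $\lambda d$ and is contiguous.
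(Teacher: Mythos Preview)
Your first two paragraphs are sound: each job that lives entirely inside one shelf does occupy machines of equal load, so a load-sort (with tie-breaking) makes it contiguous, and the successful placement of $j_0$ on the $m-i$ least loaded machines is then a suffix. The gap is in how you dispose of the $j_3$-pair.

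Recall from \cref{lem:machinesS0S1} that when both $j_1$ and $j_3$ exist, the two virtual single-machine pieces carry loads $t(j_3,2)+t(j_1,1)$ and $t(j_3,2)$. Since $\gamma(j_3,\tfrac{4}{7}d)=3$ we have $t(j_3,2)>\tfrac{4}{7}d$, and since $j_1\in J_B$ we have $t(j_1,1)>\tfrac{3}{7}d$; hence the heavier piece has load strictly greater than $d$ and sits in the $S_0$-range of your sorted order, while the lighter piece has load at most $\tfrac{6}{7}d$ and sits among the $S_1$-machines. The two ranks are therefore separated not by one position but by every $S_0$-machine lighter than the heavy piece together with every $S_1$-machine whose load exceeds $t(j_3,2)$. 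Inserting the block at the heavier rank does \emph{not} leave the sequence ``one adjacent swap'' away from sorted; the lighter machine can be displaced by arbitrarily many positions. Consequently, if the feasible index $i$ found by \cref{alg:RepairS2-2} has $j_0$ covering the light machine but not all of those intermediate $S_1$-machines, the set $j_0$ must occupy is genuinely disconnected in your order, and a shift of $i$ by one cannot reconnect it without either dropping the light machine (making $j_0$ narrower, hence taller) or absorbing a heavier $S_1$-machine (raising the makespan above $\lambda d$). Your monotonicity appeal does not bridge this.

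The paper sidesteps the difficulty with a different observation: once \cref{alg:RepairS2-2} has fixed a feasible $i$, the sorting of $S_1$ is needed only up to the bipartition into jobs that fit beneath $j_0$ and jobs that do not; within each part the order is immaterial. That freedom lets one place the light half of $j_3$ at the $S_0$/$S_1$ interface outright and, if that half belongs to the ``fits'' part, permute the two $S_1$-groups and the idle block so that $j_0$ still spans a contiguous range. No index shift is required.
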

\begin{proof}
    Note that by definition, any job contained entirely in one shelf can be scheduled on a contiguous set of machines.
    There exists at most one job \(j_3\), mentioned in \cref{rem:noncontiguous}, that might be split between two shelves.
    A key observation for this proof is that, although the order of jobs in \(S_1\) is important, we can partition the jobs in \(S_1\) into two groups. The first group contains all jobs that fit underneath the singular job in \(S_2\) and the second group contains all jobs that do not fit underneath this job. The order within these two groups does not matter.
    Thus, we can place the job \(j_3\) on the edge of shelf 1, and if required, the allotment can be mirrored.
    With this we can guarantee that the job \(j_3\) is scheduled next to its counterpart in \(S_0\) on adjacent machines.
\end{proof}

\cref{lem:RepairS2-1:correctness,lem:RepairS2-2:correctness}, and \cref{obs:contiguous:3} complete the proof of \cref{thm:dual-approx} and show that we can find a feasible contiguous schedule for all jobs in \(J_B\). The previously ignored small jobs in \(J_S\) can be added greedily to this schedule. We give a more precise description of this process in Appendix \ref{sec:app:small-jobs}.

The time complexity of the presented algorithm is \(O(nm)\).
With a dual approximation framework, we can find a feasible schedule with makespan at most \((1.4593 + \varepsilon)OPT\) in time \(O(nm \log 1/\varepsilon)\).

\section{Experimental Evaluation}
\label{sec:experiments}
In addition to the theoretical improvement on the current state-of-the-art algorithms, we also implemented the presented algorithm.
In the following, we present some experimental results of the algorithm on randomly generated instances and show that the algorithm is capable of solving realistic instances in reasonable time.
The implementation is in Java and publicly available on GitHub (https://anonymous.4open.science/r/MoldableJobScheduling-9AE3) as well as the test-instances and results.
We tested on randomly generated instances with \(n\) jobs and \(m\) machines, where the execution time of each job is uniformly distributed in \([1, 100]\) (on one machine) and the remaining values are uniformly distributed, respecting monotony restrictions.

All tests were run on a single core of an Intel i5-8400T CPU with 2GB of RAM available to the Java process.

We split the tests into two groups, one where we fixed the number of jobs to \(n = 1000\) and varied the number of machines \(m \in [500, 2000]\), and one where we fixed the number of machines to \(m = 1000\) and varied the number of jobs \(n \in [500, 2000]\).
For all tests, we set \(\varepsilon = 0.05\).

\Cref{fig:runtime_machines,fig:runtime_jobs} show the execution time of our implementation for these two groups of tests.
Although we did not take any special care to optimize the implementation, the algorithm is able to solve all instance in less than 30 seconds. The execution time decreases for \(n >> m\), since in these instances the optimal makespan is bigger. This results in more jobs being classified as small jobs and an easier to solve instance.

\begin{figure}[H]
    \centering
    \includegraphics[width=\textwidth]{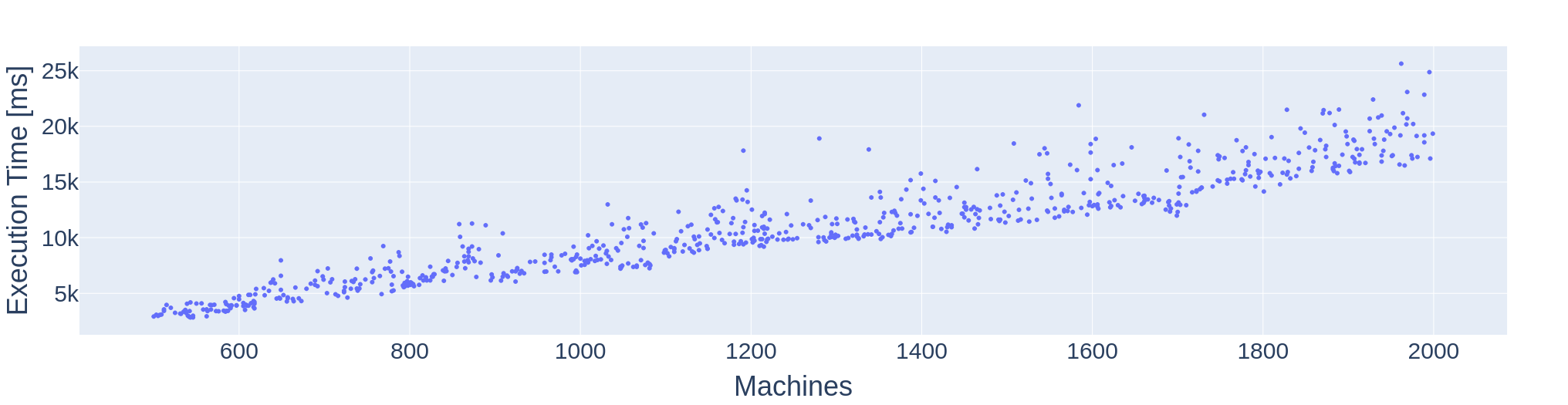}
    \caption{Runtime of the algorithm for varying number of machines.}
    \label{fig:runtime_machines}
\end{figure}
\begin{figure}[H]
    \centering
    \includegraphics[width=\textwidth]{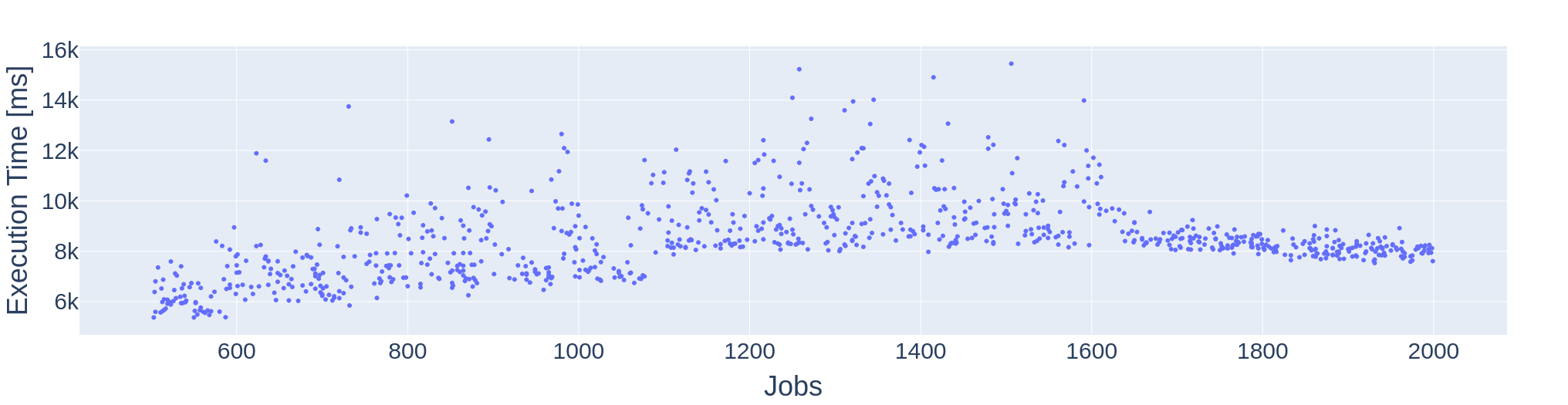}
    \caption{Runtime of the algorithm for varying number of jobs.}
    \label{fig:runtime_jobs}
\end{figure}

We note that while the theoretical worst-case approximation ratio is \(1.4593 + \varepsilon\), the algorithm was able to guarantee a makespan below \((\frac{10}{7} + \varepsilon)OPT\) in all randomly generated instances.
This is due to the fact that the worst-case approximation ratio only occurs in very specific cases.
To be more precise, the presented algorithm can guarantee a makespan of at most \((\frac{13}{9} + \varepsilon)OPT\), unless the number of idle machines in shelf 1 is greater than \(m/6\) (see \cref{lem:RepairS2-2:correctness}).

However, we can show that our analysis is tight in the worst case.
Consider the following instance with \(m = 13\) and \(n=10\).
All jobs have a constant work-function with the following values:
\begin{itemize}
    \item \(j_1\): \(w(j, k) = 6.01, \quad\forall k \in \{1, \dots, m\}\)
    \item \(j_2\): \(w(j, k) = 0.99, \quad\forall k \in \{1, \dots, m\}\)
    \item \(j_3-j_{10}\): \(w(j, k) = \frac{3}{4}, \quad\forall k \in \{1, \dots, m\}\)
\end{itemize}

An optimal schedule for this instance has a makespan of \(C_{\max} = 1\), since the work of all jobs is \(W = 13\) and by placing all jobs on \(m = 13\) machines, we can achieve a makespan of \(OPT = W/m = 1\).
On the other hand, an optimal solution to the \ac{MCKP} problem in our algorithm is given by \(\mathcal{C}_1 = \{j_2, \dots, j_{10}\}\), \(\mathcal{C}_2 = \emptyset\), \(\mathcal{C}_3 = \{j_1\}\).
In this case, the algorithm will schedule \(j_1\) on \(13\) machines and all remaining jobs on \(1\) machine each.
This results in a makespan of \(C_{\max} = 6.01/13 + 0.99 \approx 1.452\).
For larger \(m\), we can construct similar instances and achieve an approximation ratio arbitrarily close to our theoretical worst-case guarantee of \(1.4593 + \varepsilon\).
Due to the very specific structure of these instances, they are highly unlikely to occur in practice.

%
%

\section{Conclusion}
The algorithm presented in this work achieves an approximation ratio slightly below \((73/50 + \varepsilon)\) for any \(\varepsilon > 0\) in time complexity \(O(nm \log 1/\varepsilon)\) for the problem of scheduling monotone moldable jobs. Even more surprisingly, we also show how to apply this algorithm to the contiguous variant of the problem.
Since we construct the same schedule for the contiguous variant as for the non-contiguous variant, this bounds the gap between these two variants, which is an interesting side effect of our algorithm.

We show in an implementation that the algorithm achieves a ratio of \(10/7\) in most cases and the worst-case approximation ratio of \(1.4593 + \varepsilon\) only happens in very specific problematic cases.

\subsection{Future Work}
The presented algorithm breaks the long-standing barrier of a practically efficient algorithm with an approximation ratio below \(1.5\) for the problem of scheduling monotone moldable jobs.

We hope that this work will lead to further improvements in the field of scheduling monotone moldable jobs.
In particular, we leave the following open questions for future work:

\begin{itemize}
    \item \emph{Can the approximation ratio be improved further?}
          There is no known lower bound on the approximation ratio for practically efficient algorithms. As we showed, the presented algorithm already achieves an approximation ratio below the worst case guarantee in most cases.
          This question is specifically interesting for the contiguous variant, since no \ac{PTAS} is known.
    \item \emph{Can the running time be improved?}
          Jansen and Land \cite{jansenland2018schedulingmoldable} introduced the concept of \emph{compression} to achieve a logarithmic dependence on the number of machines \(m\) in the running time of the algorithm.
          This was improved by Grage, Jansen, and Ohnesorge \cite{grage2023improved} by utilizing (min, +)-convolution.
          These techniques could potentially be applied to our algorithm to reduce the dependence on \(m\) in the running time.
    \item \emph{Can this result be applied to related problems?}
          The presented algorithm could lead to improvements in other the closely related problems such as CPU/GPU scheduling~\cite{germann20243}, 2D Knapsack~\cite{DBLP:conf/focs/GalvezGHI0W17}, strip packing~\cite{DBLP:conf/wads/HarrenJPS11}, or demand strip packing~\cite{DBLP:conf/soda/EberleHRW25} (for practically efficient algorithms).
\end{itemize}
\bibliography{src}

@article{mounie2007,
  author = {Gregory Mounie and Christophe Rapine and Denis Trystram},
  title = {A 3/2-Approximation Algorithm for Scheduling Independent Monotonic Malleable Tasks},
  journal = {{SIAM} J. Comput.},
  volume = {37},
  number = {2},
  pages = {401--412},
  year = {2007},
  url = {https://doi.org/10.1137/S0097539701385995},
  doi = {10.1137/S0097539701385995},
  timestamp = {Sat, 31 May 2025 01:00:00 +0200},
  biburl = {https://dblp.org/rec/journals/siamcomp/MounieRT07.bib},
  bibsource = {dblp computer science bibliography, https://dblp.org},
  _bib2doi_selected = {dblp:/rec/journals/siamcomp/MounieRT07.bib},
  _bib2doi_confirmed = {true},
}

@inproceedings{ludwig1994,
  author = {Walter Ludwig and Prasoon Tiwari},
  editor = {Daniel Dominic Sleator},
  title = {Scheduling Malleable and Nonmalleable Parallel Tasks},
  booktitle = {Proceedings of the Fifth Annual {ACM-SIAM} Symposium on Discrete Algorithms, (SODA 1994)},
  pages = {167--176},
  publisher = {{ACM/SIAM}},
  year = {1994},
  url = {http://dl.acm.org/citation.cfm?id=314464.314491},
  timestamp = {Fri, 07 Dec 2012 00:00:00 +0100},
  biburl = {https://dblp.org/rec/conf/soda/LudwigT94.bib},
  bibsource = {dblp computer science bibliography, https://dblp.org},
  _bib2doi_selected = {dblp:/rec/conf/soda/LudwigT94.bib},
  _bib2doi_confirmed = {true},
}

@article{hochbaum1987,
  author = {Dorit S. Hochbaum and David B. Shmoys},
  title = {Using dual approximation algorithms for scheduling problems theoretical and practical results},
  journal = {J. {ACM}},
  volume = {34},
  number = {1},
  pages = {144--162},
  year = {1987},
  url = {https://doi.org/10.1145/7531.7535},
  doi = {10.1145/7531.7535},
  timestamp = {Thu, 14 Oct 2021 01:00:00 +0200},
  biburl = {https://dblp.org/rec/journals/jacm/HochbaumS87.bib},
  bibsource = {dblp computer science bibliography, https://dblp.org},
  _bib2doi_selected = {dblp:/rec/journals/jacm/HochbaumS87.bib},
  _bib2doi_confirmed = {true},
}

@article{land2013boundingmoldable,
  author = {Klaus Jansen and Felix Land and Kati Land},
  title = {Bounding the Running Time of Algorithms for Scheduling and Packing Problems},
  journal = {{SIAM} J. Discret. Math.},
  volume = {30},
  number = {1},
  pages = {343--366},
  year = {2016},
  url = {https://doi.org/10.1137/140952636},
  doi = {10.1137/140952636},
  timestamp = {Sat, 25 Apr 2020 01:00:00 +0200},
  biburl = {https://dblp.org/rec/journals/siamdm/JansenLL16.bib},
  bibsource = {dblp computer science bibliography, https://dblp.org},
  _bib2doi_selected = {dblp:/rec/journals/siamdm/JansenLL16.bib},
  _bib2doi_confirmed = {true},
}

@article{drozdowski1995no32approx,
  title = {On the complexity of multiprocessor task scheduling},
  author = {Maciej Drozdowski},
  journal = {Bulletin of The Polish Academy of Sciences-technical Sciences},
  year = {1995},
  volume = {43},
  pages = {381-392},
  _bib2doi_finished = {true},
}

@inproceedings{jansenland2018schedulingmoldable,
  author = {Klaus Jansen and Felix Land},
  title = {Scheduling Monotone Moldable Jobs in Linear Time},
  booktitle = {2018 {IEEE} International Parallel and Distributed Processing Symposium, (IPDPS 2018)},
  pages = {172--181},
  publisher = {{IEEE} Computer Society},
  year = {2018},
  url = {https://doi.org/10.1109/IPDPS.2018.00027},
  doi = {10.1109/IPDPS.2018.00027},
  timestamp = {Fri, 24 Mar 2023 00:00:00 +0100},
  biburl = {https://dblp.org/rec/conf/ipps/JansenL18.bib},
  bibsource = {dblp computer science bibliography, https://dblp.org},
  _bib2doi_selected = {dblp:/rec/conf/ipps/JansenL18.bib},
  _bib2doi_confirmed = {true},
}

@inproceedings{grage2023improved,
  author = {Kilian Grage and Klaus Jansen and Felix Ohnesorge},
  title = {Improved Algorithms for Monotone Moldable Job Scheduling Using Compression and Convolution},
  booktitle = {Euro-Par},
  series = {Lecture Notes in Computer Science},
  volume = {14100},
  pages = {503--517},
  publisher = {Springer},
  year = {2023},
  timestamp = {Fri, 25 Aug 2023 01:00:00 +0200},
  biburl = {https://dblp.org/rec/conf/europar/GrageJO23.bib},
  bibsource = {dblp computer science bibliography, https://dblp.org},
  doi = {10.1007/978-3-031-39698-4_34},
  _bib2doi_selected = {dblp:/rec/conf/europar/GrageJO23.bib},
  _bib2doi_confirmed = {true},
}

@book{knapsackproblems,
  author = {Hans Kellerer and Ulrich Pferschy and David Pisinger},
  title = {Knapsack problems},
  publisher = {Springer},
  year = {2004},
  url = {https://doi.org/10.1007/978-3-540-24777-7},
  doi = {10.1007/978-3-540-24777-7},
  isbn = {978-3-540-40286-2},
  timestamp = {Mon, 16 Sep 2019 01:00:00 +0200},
  biburl = {https://dblp.org/rec/books/daglib/0010031.bib},
  bibsource = {dblp computer science bibliography, https://dblp.org},
  _bib2doi_selected = {dblp:/rec/books/daglib/0010031.bib},
  _bib2doi_confirmed = {true},
}

@article{WuZC23,
  author = {Fangfang Wu and Xiandong Zhang and Bo Chen},
  title = {An improved approximation algorithm for scheduling monotonic moldable tasks},
  journal = {Eur. J. Oper. Res.},
  volume = {306},
  number = {2},
  pages = {567--578},
  year = {2023},
  timestamp = {Sun, 12 Nov 2023 00:00:00 +0100},
  biburl = {https://dblp.org/rec/journals/eor/WuZC23.bib},
  bibsource = {dblp computer science bibliography, https://dblp.org},
  doi = {10.1016/j.ejor.2022.08.034},
  _bib2doi_selected = {dblp:/rec/journals/eor/WuZC23.bib},
  _bib2doi_confirmed = {true},
}

@article{du1989complexity,
  title = {Complexity of scheduling parallel task systems},
  author = {Du, Jianzhong and Leung, Joseph Y-T},
  journal = {SIAM Journal on Discrete Mathematics},
  volume = {2},
  number = {4},
  pages = {473--487},
  year = {1989},
  publisher = {SIAM},
  timestamp = {Sat, 25 Apr 2020 01:00:00 +0200},
  biburl = {https://dblp.org/rec/journals/siamdm/DuL89a.bib},
  bibsource = {dblp computer science bibliography, https://dblp.org},
  doi = {10.1137/0402042},
  _bib2doi_selected = {dblp:/rec/journals/siamdm/DuL89a.bib},
  _bib2doi_confirmed = {true},
}

@article{jansen2008approximation,
  author = {Klaus Jansen and Ralf Th{\"{o}}le},
  title = {Approximation Algorithms for Scheduling Parallel Jobs},
  journal = {{SIAM} J. Comput.},
  volume = {39},
  number = {8},
  pages = {3571--3615},
  year = {2010},
  timestamp = {Sat, 27 May 2017 01:00:00 +0200},
  biburl = {https://dblp.org/rec/journals/siamcomp/JansenT10.bib},
  bibsource = {dblp computer science bibliography, https://dblp.org},
  doi = {10.1137/080736491},
  _bib2doi_selected = {dblp:/rec/journals/siamcomp/JansenT10.bib},
  _bib2doi_confirmed = {true},
}

@inproceedings{DBLP:conf/icalp/JansenT08,
  author = {Klaus Jansen and Ralf Th{\"{o}}le},
  title = {Approximation Algorithms for Scheduling Parallel Jobs: Breaking the Approximation Ratio of 2},
  booktitle = {{ICALP} {(1)}},
  series = {Lecture Notes in Computer Science},
  volume = {5125},
  pages = {234--245},
  publisher = {Springer},
  year = {2008},
  timestamp = {Tue, 23 May 2017 01:00:00 +0200},
  biburl = {https://dblp.org/rec/conf/icalp/JansenT08.bib},
  bibsource = {dblp computer science bibliography, https://dblp.org},
  doi = {10.1007/978-3-540-70575-8_20},
  _bib2doi_selected = {dblp:/rec/conf/icalp/JansenT08.bib},
  _bib2doi_confirmed = {true},
}

@inproceedings{DBLP:conf/spaa/MounieRT99,
  author = {Gr{\'{e}}gory Mouni{\'{e}} and Christophe Rapine and Denis Trystram},
  title = {Efficient Approximation Algorithms for Scheduling Malleable Tasks},
  booktitle = {Proceedings of the Eleventh Annual Symposium on Parallel Algorithms and Architectures (SPAA 1999)},
  pages = {23--32},
  publisher = {{ACM}},
  year = {1999},
  timestamp = {Sat, 31 May 2025 01:00:00 +0200},
  biburl = {https://dblp.org/rec/conf/spaa/MounieRT99.bib},
  bibsource = {dblp computer science bibliography, https://dblp.org},
  doi = {10.1145/305619.305622},
  _bib2doi_selected = {dblp:/rec/conf/spaa/MounieRT99.bib},
  _bib2doi_confirmed = {true},
}

@inproceedings{jansen2002linear,
  title = {Linear-Time Approximation Schemes for Scheduling Malleable Parallel Tasks},
  author = {Klaus Jansen and Lorant Porkolab},
  booktitle = {Proceedings of the tenth Annual ACM-SIAM Symposium on Discrete Algorithms (SODA 1999)},
  pages = {490--498},
  year = {1999},
  publisher = {{ACM/SIAM}},
  timestamp = {Fri, 07 Dec 2012 00:00:00 +0100},
  biburl = {https://dblp.org/rec/conf/soda/JansenP99.bib},
  bibsource = {dblp computer science bibliography, https://dblp.org},
  url = {http://dl.acm.org/citation.cfm?id=314500.314870},
  _bib2doi_selected = {dblp:/rec/conf/soda/JansenP99.bib},
  _bib2doi_confirmed = {true},
}

@inproceedings{germann20243,
  title = {3/2-Dual Approximation for CPU/GPU Scheduling},
  author = {Germann, Bernhard Sebastian and Jansen, Klaus and Ohnesorge, Felix and Tutas, Malte},
  booktitle = {22nd International Symposium on Experimental Algorithms (SEA 2024)},
  year = {2024},
  organization = {Schloss Dagstuhl--Leibniz-Zentrum f{\"u}r Informatik},
  timestamp = {Fri, 12 Jul 2024 01:00:00 +0200},
  biburl = {https://dblp.org/rec/conf/wea/GermannJOT24.bib},
  bibsource = {dblp computer science bibliography, https://dblp.org},
  doi = {10.4230/LIPIcs.SEA.2024.13},
  _bib2doi_selected = {dblp:/rec/conf/wea/GermannJOT24.bib},
  _bib2doi_confirmed = {true},
  _bib2doi_finished = {true},
}

@article{decker200654,
  title = {A 54-approximation algorithm for scheduling identical malleable tasks},
  author = {Decker, Thomas and L{\"u}cking, Thomas and Monien, Burkhard},
  journal = {Theoretical Computer Science},
  volume = {361},
  number = {2-3},
  pages = {226--240},
  year = {2006},
  publisher = {Elsevier},
  timestamp = {Tue, 26 Sep 2023 01:00:00 +0200},
  biburl = {https://dblp.org/rec/journals/tcs/DeckerLM06.bib},
  bibsource = {dblp computer science bibliography, https://dblp.org},
  doi = {10.1016/j.tcs.2006.05.012},
  _bib2doi_selected = {dblp:/rec/journals/tcs/DeckerLM06.bib},
  _bib2doi_confirmed = {true},
}

@inproceedings{turek1992approximate,
  title = {Approximate algorithms scheduling parallelizable tasks},
  author = {Turek, John and Wolf, Joel L and Yu, Philip S},
  booktitle = {Proceedings of the fourth annual ACM symposium on Parallel algorithms and architectures (SPAA 1992)},
  pages = {323--332},
  year = {1992},
  timestamp = {Wed, 21 Nov 2018 00:00:00 +0100},
  biburl = {https://dblp.org/rec/conf/spaa/TurekWY92.bib},
  bibsource = {dblp computer science bibliography, https://dblp.org},
  doi = {10.1145/140901.141909},
  _bib2doi_selected = {dblp:/rec/conf/spaa/TurekWY92.bib},
  _bib2doi_confirmed = {true},
}

@book{drozdowski2009scheduling,
  title = {Scheduling for parallel processing},
  author = {Drozdowski, Maciej},
  volume = {18},
  year = {2009},
  publisher = {Springer},
  timestamp = {Tue, 16 May 2017 01:00:00 +0200},
  biburl = {https://dblp.org/rec/series/ccn/Drozdowski09.bib},
  bibsource = {dblp computer science bibliography, https://dblp.org},
  doi = {10.1007/978-1-84882-310-5},
  _bib2doi_selected = {dblp:/rec/series/ccn/Drozdowski09.bib},
  _bib2doi_confirmed = {true},
}

@article{bleuse2017scheduling,
  title = {Scheduling independent moldable tasks on multi-cores with GPUs},
  author = {Bleuse, Rapha{\"e}l and Hunold, Sascha and Kedad-Sidhoum, Safia and Monna, Florence and Mouni{\'e}, Gr{\'e}gory and Trystram, Denis},
  journal = {IEEE Transactions on Parallel and Distributed Systems},
  volume = {28},
  number = {9},
  pages = {2689--2702},
  year = {2017},
  publisher = {IEEE},
  timestamp = {Sun, 12 Nov 2023 00:00:00 +0100},
  biburl = {https://dblp.org/rec/journals/tpds/BleuseHKMMT17.bib},
  bibsource = {dblp computer science bibliography, https://dblp.org},
  doi = {10.1109/TPDS.2017.2675891},
  _bib2doi_selected = {dblp:/rec/journals/tpds/BleuseHKMMT17.bib},
  _bib2doi_confirmed = {true},
}

@article{corless1996lambert,
  title = {On the Lambert W function},
  author = {Corless, Robert M and Gonnet, Gaston H and Hare, David EG and Jeffrey, David J and Knuth, Donald E},
  journal = {Advances in Computational mathematics},
  volume = {5},
  pages = {329--359},
  year = {1996},
  publisher = {Springer},
  timestamp = {Sun, 02 Oct 2022 01:00:00 +0200},
  biburl = {https://dblp.org/rec/journals/adcm/CorlessGHJK96.bib},
  bibsource = {dblp computer science bibliography, https://dblp.org},
  doi = {10.1007/BF02124750},
  _bib2doi_selected = {dblp:/rec/journals/adcm/CorlessGHJK96.bib},
  _bib2doi_confirmed = {true},
}

@inproceedings{DBLP:conf/esa/JansenR19,
  author = {Klaus Jansen and Malin Rau},
  title = {Closing the Gap for Pseudo-Polynomial Strip Packing},
  booktitle = {27th Annual European Symposium on Algorithms (ESA 2019)},
  series = {LIPIcs},
  volume = {144},
  pages = {62:1--62:14},
  publisher = {Schloss Dagstuhl - Leibniz-Zentrum f{\"{u}}r Informatik},
  year = {2019},
  timestamp = {Mon, 23 Sep 2019 01:00:00 +0200},
  biburl = {https://dblp.org/rec/conf/esa/JansenR19.bib},
  bibsource = {dblp computer science bibliography, https://dblp.org},
  doi = {10.4230/LIPIcs.ESA.2019.62},
  _bib2doi_selected = {dblp:/rec/conf/esa/JansenR19.bib},
  _bib2doi_confirmed = {true},
}

@article{DBLP:journals/jors/BlazewiczCMO11,
  author = {Jacek Blazewicz and T. C. Edwin Cheng and Maciej Machowiak and Ceyda Oguz},
  title = {Berth and quay crane allocation: a moldable task scheduling model},
  journal = {J. Oper. Res. Soc.},
  volume = {62},
  number = {7},
  pages = {1189--1197},
  year = {2011},
  timestamp = {Mon, 08 Jun 2020 01:00:00 +0200},
  biburl = {https://dblp.org/rec/journals/jors/BlazewiczCMO11.bib},
  bibsource = {dblp computer science bibliography, https://dblp.org},
  doi = {10.1057/jors.2010.54},
  _bib2doi_selected = {dblp:/rec/journals/jors/BlazewiczCMO11.bib},
  _bib2doi_confirmed = {true},
}

@article{DBLP:journals/ior/FotakisMP25,
  author = {Dimitris Fotakis and Jannik Matuschke and Orestis Papadigenopoulos},
  title = {Assigning and Scheduling Generalized Malleable Jobs Under Subadditive or Submodular Processing Speeds},
  journal = {Oper. Res.},
  volume = {73},
  number = {3},
  pages = {1598--1614},
  year = {2025},
  timestamp = {Wed, 11 Jun 2025 01:00:00 +0200},
  biburl = {https://dblp.org/rec/journals/ior/FotakisMP25.bib},
  bibsource = {dblp computer science bibliography, https://dblp.org},
  doi = {10.1287/opre.2022.0168},
  _bib2doi_selected = {dblp:/rec/journals/ior/FotakisMP25.bib},
  _bib2doi_confirmed = {true},
}

@article{DBLP:journals/eor/DelormeDKK19,
  author = {Xavier Delorme and Alexandre Dolgui and Sergey Kovalev and Mikhail Y. Kovalyov},
  title = {Minimizing the number of workers in a paced mixed-model assembly line},
  journal = {Eur. J. Oper. Res.},
  volume = {272},
  number = {1},
  pages = {188--194},
  year = {2019},
  timestamp = {Sun, 19 Jan 2025 00:00:00 +0100},
  biburl = {https://dblp.org/rec/journals/eor/DelormeDKK19.bib},
  bibsource = {dblp computer science bibliography, https://dblp.org},
  doi = {10.1016/j.ejor.2018.05.072},
  _bib2doi_selected = {dblp:/rec/journals/eor/DelormeDKK19.bib},
  _bib2doi_confirmed = {true},
}

@article{DBLP:journals/eor/DolguiKKMS18,
  author = {Alexandre Dolgui and Sergey Kovalev and Mikhail Y. Kovalyov and Sergey Malyutin and Ameur Soukhal},
  title = {Optimal workforce assignment to operations of a paced assembly line},
  journal = {Eur. J. Oper. Res.},
  volume = {264},
  number = {1},
  pages = {200--211},
  year = {2018},
  timestamp = {Fri, 21 Feb 2020 00:00:00 +0100},
  biburl = {https://dblp.org/rec/journals/eor/DolguiKKMS18.bib},
  bibsource = {dblp computer science bibliography, https://dblp.org},
  doi = {10.1016/j.ejor.2017.06.017},
  _bib2doi_selected = {dblp:/rec/journals/eor/DolguiKKMS18.bib},
  _bib2doi_confirmed = {true},
}

@inproceedings{DBLP:journals/scheduling/FotakisMP23,
  author = {Dimitris Fotakis and Jannik Matuschke and Orestis Papadigenopoulos},
  title = {Malleable Scheduling Beyond Identical Machines},
  booktitle = {{APPROX-RANDOM}},
  series = {LIPIcs},
  volume = {145},
  pages = {17:1--17:14},
  publisher = {Schloss Dagstuhl - Leibniz-Zentrum f{\"{u}}r Informatik},
  year = {2019},
  timestamp = {Thu, 24 Mar 2022 00:00:00 +0100},
  biburl = {https://dblp.org/rec/conf/approx/FotakisMP19.bib},
  bibsource = {dblp computer science bibliography, https://dblp.org},
  doi = {10.4230/LIPIcs.APPROX-RANDOM.2019.17},
  _bib2doi_selected = {dblp:/rec/conf/approx/FotakisMP19.bib},
  _bib2doi_confirmed = {true},
}

@article{DBLP:journals/scheduling/Johannes06,
  author = {Berit Johannes},
  title = {Scheduling parallel jobs to minimize the makespan},
  journal = {J. Sched.},
  volume = {9},
  number = {5},
  pages = {433--452},
  year = {2006},
  timestamp = {Wed, 01 Apr 2020 01:00:00 +0200},
  biburl = {https://dblp.org/rec/journals/scheduling/Johannes06.bib},
  bibsource = {dblp computer science bibliography, https://dblp.org},
  doi = {10.1007/s10951-006-8497-6},
  _bib2doi_selected = {dblp:/rec/journals/scheduling/Johannes06.bib},
  _bib2doi_confirmed = {true},
}

@inproceedings{DBLP:conf/soda/EberleHRW25,
  author = {Franziska Eberle and Felix Hommelsheim and Malin Rau and Stefan Walzer},
  title = {A Tight {(3/2} + \emph{{\(\in\)}} )-Approximation Algorithm for Demand Strip Packing},
  booktitle = {{SODA}},
  pages = {641--699},
  publisher = {{SIAM}},
  year = {2025},
  timestamp = {Sat, 15 Nov 2025 00:00:00 +0100},
  biburl = {https://dblp.org/rec/conf/soda/EberleHRW25.bib},
  bibsource = {dblp computer science bibliography, https://dblp.org},
  doi = {10.1137/1.9781611978322.20},
  _bib2doi_selected = {dblp:/rec/conf/soda/EberleHRW25.bib},
  _bib2doi_confirmed = {true},
}

@inproceedings{DBLP:conf/wads/HarrenJPS11,
  author = {Rolf Harren and Klaus Jansen and Lars Pr{\"{a}}del and Rob van Stee},
  title = {A {(5/3} + {\(\epsilon\)})-Approximation for Strip Packing},
  booktitle = {{WADS}},
  series = {Lecture Notes in Computer Science},
  volume = {6844},
  pages = {475--487},
  publisher = {Springer},
  year = {2011},
  timestamp = {Sun, 21 May 2017 01:00:00 +0200},
  biburl = {https://dblp.org/rec/conf/wads/HarrenJPS11.bib},
  bibsource = {dblp computer science bibliography, https://dblp.org},
  doi = {10.1007/978-3-642-22300-6_40},
  _bib2doi_selected = {dblp:/rec/conf/wads/HarrenJPS11.bib},
  _bib2doi_confirmed = {true},
}

@inproceedings{DBLP:conf/focs/GalvezGHI0W17,
  author = {Waldo G{\'{a}}lvez and Fabrizio Grandoni and Sandy Heydrich and Salvatore Ingala and Arindam Khan and Andreas Wiese},
  title = {Approximating Geometric Knapsack via L-Packings},
  booktitle = {{FOCS}},
  pages = {260--271},
  publisher = {{IEEE} Computer Society},
  year = {2017},
  timestamp = {Thu, 23 Mar 2023 00:00:00 +0100},
  biburl = {https://dblp.org/rec/conf/focs/GalvezGHI0W17.bib},
  bibsource = {dblp computer science bibliography, https://dblp.org},
  doi = {10.1109/FOCS.2017.32},
  _bib2doi_selected = {dblp:/rec/conf/focs/GalvezGHI0W17.bib},
  _bib2doi_confirmed = {true},
}

@inproceedings{DBLP:conf/jsspp/FeitelsonR96,
  author = {Dror G. Feitelson and Larry Rudolph},
  title = {Towards Convergence in Job Schedulers for Parallel Supercomputers},
  booktitle = {{JSSPP}},
  series = {Lecture Notes in Computer Science},
  volume = {1162},
  pages = {1--26},
  publisher = {Springer},
  year = {1996},
  timestamp = {Fri, 30 Nov 2018 00:00:00 +0100},
  biburl = {https://dblp.org/rec/conf/jsspp/FeitelsonR96.bib},
  bibsource = {dblp computer science bibliography, https://dblp.org},
  doi = {10.1007/BFb0022284},
  _bib2doi_selected = {dblp:/rec/conf/jsspp/FeitelsonR96.bib},
  _bib2doi_confirmed = {true},
}

\newpage
\appendix

\section{Proof of \cref{lem:repair-properties}}
\label{sec:app:repair2:properties}

We start by applying transformations (\cref{alg:repair1}) to the 3-shelf schedule, which will help reduce the number of machines used by \(S_2\). These transformations are based on ideas from Mounié, Rapine, and Trystram \cite{mounie2007} but are parameterized with \(\lambda\).

\renewcommand{\labelenumi}{T\arabic{enumi}}
\renewcommand{\theenumi}{T\arabic{enumi}}
\begin{enumerate}
    \item \label{trans:1}If a job \(j \in S_1\) has execution time at most \(\frac{\lambda}{2}d\) and is allotted to \(p > 1\) machines, allocate \(j\) to \(\gamma(j, \lambda d)\) machines in \(S_0\).
    \item \label{trans:2}If \(j, j' \in S_1\) have execution time less than \(\frac{\lambda}{2}d\) and are each allotted to \(1\) machine, allocate \(j\) and \(j'\) to the same machine in \(S_0\).
    \item \label{trans:3}Let \(q\) denote the number of idle machines in \(S_1\). If there exists a job \(j \in S_2\) with an execution time of less than \(\lambda d\) on \(q\) machines, allocate \(j\) on \(\gamma(j, \lambda d)\) machines either to \(S_1\) or \(S_0\), depending on its execution time.
\end{enumerate}

\begin{algorithm}[H]
    \caption{Repair Shelf Schedule}\label{alg:repair1}
    \begin{algorithmic}[1]
        \Require 3-shelf schedule; \(\lambda \in [\frac{10}{7}, \frac{3}{2})\).
        \While{possible}
        \State apply Transformation \ref{trans:1}, \ref{trans:2}, or \ref{trans:3}.
        \EndWhile
    \end{algorithmic}
\end{algorithm}

In the following, we will analyze the schedule that results after applying \Cref{alg:repair1}.
Remember, that for ease of notation, we assume \(m_0 = 0\).
This does not change the correctness of the proof, since we contradict the total work of the schedule, and shelf 0 has a total work greater than \(dm_0\).

\LemmaRepairProperties*
\begin{proof}
    We prove the lemma by first showing that Properties (\ref{lem:repair-properties:1}) and (\ref{lem:repair-properties:2}) hold after applying \cref{alg:repair1}.
    Then, we will establish the contiguity of the schedule and analyze the time complexity of the algorithm.

    We begin by showing the three properties of the lemma.
    By the definition of \cref{alg:repair1}, any job \(j \in S_1\) is scheduled on \(\gamma(j, d)\) machines.
    This implies that any job with an execution time of less than \(\frac{\lambda}{2}d\) is scheduled on exactly one machine, by the definition of \(\gamma(j, d)\) and monotony.
    The schedule contains at most one such job, since we could otherwise apply Transformation \ref{trans:2}, contradicting the algorithms' termination condition.
    Property (\ref{lem:repair-properties:2}) follows directly, since Transformation \ref{trans:3} cannot be applied.

    For the last property we use a proof by contradiction: Suppose Property (\ref{lem:repair-properties:3}) does not hold. This implies that there exists a job \(j \in S_1\) with an execution time of \(t \in (\frac{3}{7}d, \frac{\lambda}{2}d]\).
    By Property (\ref{lem:repair-properties:1}), there exists at most one such job, which is scheduled on exactly one machine.
    We consider three cases:

    \textit{Case 1 (\(S_1 \setminus \{j\} = \emptyset\) and \(q = 0\))}: In this case, \(m = 1\). We place all jobs on this one machine and obtain a feasible schedule, if \(\mathcal{W}_1 + \mathcal{W}_2 \leq d\).

    \textit{Case 2 (\(S_1 \setminus \{j\} = \emptyset\) and \(q > 0\))}: In this case, \(q = m - 1\).
    Since $S_2$ contains at least one job $\mathcal{W}_2 > \lambda dq = \lambda d(m - 1)$, by Property (\ref{lem:repair-properties:2}). With $\mathcal{W}_1 + \mathcal{W}_2 \leq dm$, we get:
    \begin{align*}
        \frac{3}{7}d + \lambda d(m - 1)                   & < dm                                        \\
        \Leftrightarrow \frac{3}{7} + \lambda m - \lambda & < m                                         \\
        \Leftrightarrow (\lambda - 1)m                    & < \lambda - \frac{3}{7}                     \\
        \Leftrightarrow m                                 & < \frac{\lambda - \frac{3}{7}}{\lambda - 1}
    \end{align*}
    For \(\lambda \in [\frac{10}{7}, \frac{3}{2})\), this implies \(m \leq 2\).
    Since \(q \geq 1\) and \(m = q + 1\), we must have \(m=2\) and \(q=1\).
    We will first show that $S_2$ can contain at most one job. To this end, denote the number of jobs in $S_2$ as $k$, then the total work area of $S_2$ is $\mathcal{W}_2 > \lambda dqk = \lambda dk$, by Property~(\ref{lem:repair-properties:2}) and \(q = 1\). With $\mathcal{W}_1 + \mathcal{W}_2 \leq dm$ we get:
    \begin{align*}
        \frac{3}{7}d + \lambda dk & < 2d                                                                                  \\
        \Leftrightarrow k         & < \frac{11}{7\lambda} < \frac{11}{10} < 2 &  & \text{(\(\lambda \geq \frac{10}{7}\))}
    \end{align*}
    This leads to a final contradiction.
    If we cannot find a feasible schedule, the single job in \(S_2\) has a processing time of at least \(\lambda d - t\) on two machines.
    This implies a total work area of \(2(\lambda d - t) + t \geq \frac{3}{2}\lambda d > 2d\), where the first inequality follows from \(t \leq \frac{\lambda}{2}d\) and the second from \(\lambda > \frac{10}{7}\).
    Since \(\mathcal{W}_1 + \mathcal{W}_2 \leq 2d\), this is a contradiction.

    \textit{Case 3 (\(S_1 \setminus \{j\} \neq \emptyset\))}: If Property (\ref{lem:repair-properties:3}) does not hold, we may assume that there exists another job \(j' \in S_1 \setminus \{j\}\) with an execution time less than \(\lambda d - t\).
    Then, we can schedule \(j\) on top of \(j'\).
    Observe that \(t(j, d) + t(j', d) > d\), therefore we place \(j\) (and \(j'\) partially) in \(S_0\). The new schedule satisfies (\ref{lem:repair-properties:3}).

    It remains to show that the resulting schedule is contiguous.
    We must ensure that no job is split across non-adjacent machines.
    Note that this can only happen for jobs that are split between two shelves.
    If such a job \(j_3\) exists (see \cref{rem:noncontiguous}), we can choose \(j\) (the job violating Property (\ref{lem:repair-properties:3})) to be \(j \coloneqq j_3\).
    We consider the three cases from above:
    In Cases 1 and 2, job \(j\) (\(=j_3\)) is the only job in \(S_1\).
    We can schedule it at the boundary between shelf 1 and shelf 0, ensuring it occupies a contiguous block of machines.
    In case 3, \(j\) (\(=j\)) is moved entirely into shelf 0, so it is no longer split.
    The other job \(j'\) can be chosen as the smallest job in \(S_1\) and placed at the boundary, preserving contiguity.
    This guarantees that all jobs are scheduled on adjacent machines.

    The time complexity results from the following argument:
    Denote the number of big jobs as \(n_B\).
    Since the transformations move jobs from shelf 2 to shelf 1 or 0, and from shelf 1 to shelf 0, the while-loop in \cref{alg:repair1} can be executed at most \(O(n_B)\) times.
    Transformations~\ref{trans:1}~and~\ref{trans:2} can be implemented in \(O(n_B)\). Transformation \ref{trans:3} can be implemented in \(O(n_B)\) for scanning the jobs in \(S_2\) and \(O(\log m)\) to determine \(\gamma(j, \lambda d)\) for the chosen job \(j \in S_2\). This results in a time complexity of \(O(n_B^2 + n_B \log m)\). Since \(n_B\) is bounded by both \(n\) and \(m\), we can conclude that the time complexity of \cref{alg:repair1} is \(O(nm)\).
\end{proof}

\section{Proof of \cref{lem:RepairS2-1:correctness}}
\label{sec:app:repair2-1:correctness}

\LemmaRepairCorrectness*

\begin{algorithm}[H]
    \caption{RepairS2-1}\label{alg:RepairS2}
    \begin{algorithmic}[1]
        \While {$m_2 > m$}
        \State Let $j$ be the job with the smallest height among jobs in $S_2$.
        \State Let $p$ be the number of machines $j$ is allotted to.
        \State Allot $j$ to $p - 1$ machines.
        \EndWhile
        \State Sort jobs in $S_1$ in descending order of execution time.
        \State Sort jobs in $S_2$ in ascending order of execution time.
    \end{algorithmic}
\end{algorithm}
The feasible schedule is constructed, by applying \cref{alg:RepairS2}. During the procedure, we iteratively compress the job with the smallest height among jobs in \(S_2\) by one machine until the number of machines in \(S_2\) is at most \(m\). We will show that the jobs of \(S_1\) and \(S_2\) do not intersect, when sorted in descending (ascending) order of execution time in shelf 1 (shelf 2).
The termination of \cref{alg:RepairS2} is proven in Appendix \ref{sec:app:repair2:termination}.
We separate the proof of \cref{lem:RepairS2-1:correctness} into two cases, depending on the value of \(q\).

\textit{Case 1 (\(q = 0\)):}
This is the simplest case in which we will show that we can obtain a feasible schedule with makespan at most \(\lambda d\), for any \(\lambda \geq \frac{10}{7}\), if \(\mathcal{W}_1 + \mathcal{W}_2 \leq dm\).

\begin{observation}
    \label{obs:tS2:upper:afterAlg}
    After applying \cref{alg:RepairS2}, the height of any job $j \in S_2$ is at most $\frac{4}{7}d$.
\end{observation}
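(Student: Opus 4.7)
The plan is to prove the observation by induction on the number of iterations of \cref{alg:RepairS2}, maintaining the invariant that every job in $S_2$ has execution time at most $\tfrac{4}{7}d$ throughout. Since we are in Case 1 ($q=0$) of \cref{lem:RepairS2-1:correctness}, we work with $\lambda=\tfrac{10}{7}$. The base case is immediate: by the definition of a 3-shelf schedule, every $j\in S_2$ is initially scheduled on $\gamma(j,(\lambda-1)d)=\gamma(j,\tfrac{3}{7}d)$ machines and so has execution time at most $\tfrac{3}{7}d\leq \tfrac{4}{7}d$.

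For the inductive step, consider an iteration that selects the smallest-height job $j\in S_2$ with current allotment $p$ and height $t_0:=t(j,p)$, and reallocates it to $p-1$ machines. The crux is the bound $t_0<\tfrac{2}{7}d$, which I would derive from three ingredients: (i) because $j$ is smallest, every job in $S_2$ has height at least $t_0$, so $\mathcal{W}_2\geq t_0\, m_2$; (ii) the loop has not terminated, so $m_2>m$; (iii) by \cref{lem:repair-properties}.(\ref{lem:repair-properties:3}) applied with $q=0$, $\mathcal{W}_1>\tfrac{\lambda}{2}dm=\tfrac{5}{7}dm$, which combined with the total work budget $\mathcal{W}_1+\mathcal{W}_2\leq dm$ from Constraint \ref{const1:work} gives $\mathcal{W}_2<\tfrac{2}{7}dm$. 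Chaining these yields $t_0\, m<\tfrac{2}{7}dm$ and hence $t_0<\tfrac{2}{7}d$.

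Finally, I would invoke work monotony: $w(j,p-1)\leq w(j,p)$ rearranges to $t(j,p-1)\leq t_0\cdot\tfrac{p}{p-1}$. Since $j\in J_B$ implies $t(j,1)>\tfrac{3}{7}d>t_0$, monotony of $t(j,\cdot)$ forces $p\geq 2$, so $\tfrac{p}{p-1}\leq 2$ and therefore $t(j,p-1)<\tfrac{4}{7}d$. This restores the invariant after the compression step and closes the induction. The main obstacle is the tight interplay between the three inequalities pinning down $t_0<\tfrac{2}{7}d$: any weaker lower bound on $\mathcal{W}_1$, or merely $m_2\geq m$ instead of $m_2>m$, would not leave enough margin for the factor-$2$ blow-up coming from a unit compression at $p=2$, which is precisely the extremal case that makes the bound $\tfrac{4}{7}d$ sharp.
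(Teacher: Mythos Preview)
Your proof is correct and takes essentially the same approach as the paper. The paper phrases it as a proof by contradiction---assume some job ends with height greater than $\tfrac{4}{7}d$ and look at the moment before its last compression---but the substantive computation (bounding the pre-compression height below $\tfrac{2}{7}d$ via $\mathcal{W}_1>\tfrac{5}{7}dm$, $\mathcal{W}_1+\mathcal{W}_2\leq dm$, $m_2>m$, and minimality of the selected job, then applying work monotony with $p\geq 2$ to absorb the factor-$2$ blow-up) is identical to your forward inductive step.
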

\begin{proof}
    Suppose there exists a job \(j\) with execution time strictly greater than \(\frac{4}{7}d\) in \(S_2\). Due to work monotony this job had a height of at least \(\frac{2}{7}d\), before its last compression during \cref{alg:RepairS2}.
    Since the algorithm selects the smallest job for compression, all other jobs in \(S_2\) have a height of at least \(\frac{2}{7}d\), thus \(\mathcal{W}_2 > \frac{2}{7}dm_2\). With \cref{lem:repair-properties}.(\ref{lem:repair-properties:3}) we get: \(dm \geq \mathcal{W}_2 + \mathcal{W}_1 > \frac{2}{7}dm_2 + \frac{5}{7}dm > dm\). A contradiction.
\end{proof}

\begin{lemma}
    \label{lem:RepairS2-1-1:correctness}
    If \(q = 0\), \cref{alg:RepairS2} delivers a feasible schedule with makespan at most \(\frac{10}{7}d\), unless $\mathcal{W}_1 + \mathcal{W}_2 > dm$.
\end{lemma}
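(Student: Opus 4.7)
The plan is to argue by contrapositive: assume the algorithm produces an infeasible schedule and derive $\mathcal{W}_1 + \mathcal{W}_2 > dm$. First I would dispose of the trivial case where the initial $m_2 \leq m$: no compression happens, every $S_2$ job keeps its original height of at most $\frac{3}{7}d$, every $S_1$ height is at most $d$, and any placement has makespan at most $\frac{10}{7}d$. So I may assume the algorithm performed at least one compression and terminated with $m_2 = m$.

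Now suppose some machine $i^*$ has $h_1^{(i^*)} + h_2^{(i^*)} > \frac{10}{7}d$; write $a := h_1^{(i^*)}$ and $b := h_2^{(i^*)}$. Observation~\ref{obs:tS2:upper:afterAlg} forces $b \leq \frac{4}{7}d$, hence $a > \frac{6}{7}d$; together with $a \leq d$ this also gives $b > \frac{3}{7}d$. The descending sort on $S_1$ propagates $h_1^{(j)} > \frac{6}{7}d$ to every $j \leq i^*$; the ascending sort on $S_2$ propagates $h_2^{(j)} > \frac{3}{7}d$ to every $j \geq i^*$; and Lemma~\ref{lem:repair-properties}(i) guarantees $h_1^{(j)} \geq \frac{5}{7}d$ for every $j \leq m-1$, the only possible exception sitting on machine $m$ by the descending sort. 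Summing per-machine contributions gives the basic area bound
\[
    \mathcal{W}_1 + \mathcal{W}_2 > \tfrac{6}{7}d \cdot i^{*} + \tfrac{5}{7}d\,(m-1-i^{*}) + \tfrac{3}{7}d\,(m-i^{*}+1) = \tfrac{(8m-2i^{*}-2)\,d}{7},
\]
which already exceeds $dm$ whenever $i^* \leq (m-2)/2$ and therefore closes that range.

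For the remaining range $i^* > (m-2)/2$ I would strengthen the bound on $\mathcal{W}_2$ by recycling the compression argument behind Observation~\ref{obs:tS2:upper:afterAlg}. The current tallest $S_2$ job has height $b' \geq b$ on current width $w$ and was last compressed from width $w+1$; by work monotony its pre-compression height was at least $b'w/(w+1)$, and since the algorithm always compresses the smallest $S_2$ job, every other $S_2$ job was at least that tall at that moment, while later compressions only raise heights. Thus every current $S_2$ height is at least $b'w/(w+1)$, giving $\mathcal{W}_2 \geq b'wm/(w+1)$. For $i^* = m$ one argues that $w = 1$ (else the violation would propagate to machine $m-1$), so $\mathcal{W}_2 > \frac{3}{14}dm$; combined with $h_1^{(j)} > \frac{6}{7}d$ for all $j$ this already yields $\mathcal{W}_1 + \mathcal{W}_2 > \frac{15}{14}dm$. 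For $i^* \in ((m-2)/2,\,m-1]$ one observes that minimality of $i^*$ prevents the tallest $S_2$ block from extending strictly below $i^*$, which pins down $w$ and sharpens $b'w/(w+1)$; a short case split on $w$ then pushes $\mathcal{W}_1 + \mathcal{W}_2$ past $dm$.

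The main obstacle is precisely this intermediate regime of $i^*$: the plain area bound is tight there, and the proof genuinely needs the compression-history refinement of $\mathcal{W}_2$ coupled to the width $w$ of the tallest $S_2$ block via the minimality of the violating machine. A handful of small-$m$ corner cases can be verified by direct inspection.
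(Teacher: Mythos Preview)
Your contrapositive setup and the trivial case are fine, and your first area bound is correct as far as it goes. The gap is that you discard two cheap pieces of information and then have to work hard to compensate. First, \emph{every} machine, not only those with index $\ge i^*$, carries an $S_2$ height strictly greater than $\tfrac{3}{14}d$: before any compression a job $j\in S_2$ sits on $\gamma:=\gamma(j,\tfrac{3}{7}d)\ge 2$ machines (it lies in $J_B$), so by work monotony and the definition of $\gamma$ its height exceeds $\tfrac{3}{7}d\cdot\tfrac{\gamma-1}{\gamma}\ge\tfrac{3}{14}d$, and compression only raises heights. Your compression-history detour through the tallest block is a roundabout rediscovery of exactly this bound; the width $w$ is irrelevant. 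Second, the exceptional $S_1$ machine (machine $m$ under your descending sort) still carries height $>\tfrac{3}{7}d$, because the exceptional job sits on a single machine and lies in $J_B$.

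With these two additions the per-machine loads become $>\tfrac{15}{14}d$ on machines $1,\dots,i^*-1$, $>\tfrac{10}{7}d$ on machine $i^*$, $>\tfrac{8}{7}d$ on machines $i^*+1,\dots,m-1$, and $>\tfrac{6}{7}d$ on machine $m$; pairing $i^*$ with $m$ gives average load $>d$, and every other machine already exceeds $d$ on its own. Hence $\mathcal{W}_1+\mathcal{W}_2>dm$ uniformly in $i^*$, the split at $(m-2)/2$ evaporates, and this is precisely the paper's argument. As written, your proof is incomplete in the intermediate range: the ``short case split on $w$'' is a promissory note with no concrete bound attached, and the $w=1$ claim for $i^*=m$ silently presupposes that $i^*$ is the leftmost violating machine, which you never fixed.
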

\begin{proof}
    For the sake of contradiction, let's assume that Algorithm \ref{alg:RepairS2} delivers an infeasible schedule. This implies that there exists a machine $i$ with load $L_i > \frac{10}{7}d$, where we denote the height of $S_2$ on that machine as $t$.

    Let us first establish that $\frac{3}{7}d \leq t \leq \frac{4}{7}d$. Indeed, the lower bound is straightforward, since the load on this machine is greater than $\frac{10}{7}d$ and any job on $S_1$ has an execution time at most $d$. The upper bound is stated in Observation \ref{obs:tS2:upper:afterAlg}.

    We can now proceed to analyze the work of this schedule. The height of each job in $S_2$ to the right of $i$ is at least $t$ since they are placed in ascending order of processing time and each job (apart from at most one) in $S_1$ has height at least $\frac{\lambda}{2}d = \frac{5}{7}d$, by \cref{lem:repair-properties}.(\ref{lem:repair-properties:1}). With this, we get:
    \begin{align*}
        L_{i'} \geq t + \frac{5}{7}d \geq \frac{8}{7}d \quad \forall i' \in \{i+1, \dots, m - 1 \}
    \end{align*}
    The height of each job in $S_1$ to the left of $i$ is at least $\frac{10}{7}d - t$ since they are placed in descending order of processing time and each job in $S_2$ has an execution time greater than $\frac{3}{14}d$, since they are placed on \(\gamma(j, \frac{3}{7}d)\) machines. With this, we get:
    \begin{align*}
        L_{i'} \geq \frac{10}{7}d - t + \frac{3}{14}d \geq \frac{15}{14}d \quad \forall i' \in \{1, \dots, i - 1 \}
    \end{align*}
    Any of these machines has a load strictly greater than \(d\). It remains to consider the two left-out machines, $i$ and \(m\).

    The load on machine \(m\) is at least \(t + \frac{3}{7}d \geq \frac{6}{7}d\), since any job in \(J_B\) has a sequential height of at least \(\frac{3}{7}d\) on one machine. The load on machine \(i\) is at least \(\frac{10}{7}d\). Thus, their average load is also strictly greater than \(d\), which contradicts \(\mathcal{W}_1 + \mathcal{W}_2 \leq dm\).
\end{proof}

\begin{observation}
    \label{obs:contiguous:1}
    If \(q = 0\), the feasible schedule constructed by \cref{alg:RepairS2} can be modified to be contiguous.
\end{observation}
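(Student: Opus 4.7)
The plan is to mirror the strategy of \cref{obs:contiguous:3}: identify the only job that can straddle a shelf boundary, namely the distinguished job \(j_3\) from \cref{rem:noncontiguous}, and show that it can be repositioned so that its two pieces lie on adjacent machines without harming the feasibility argument of \cref{lem:RepairS2-1-1:correctness}.

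First, I would observe that every job of \(J_B\) other than \(j_3\) already occupies a single contiguous block of machines within a single shelf. This is because \cref{def:3shelf-schedule} assigns each job to exactly one shelf, and the compression step in \cref{alg:RepairS2} only reduces the allotment of a job in \(S_2\) by one machine at a time, never splitting the job across shelves. The only source of potential non-contiguity is therefore the pair of pieces produced in the last case of \cref{lem:machinesS0S1}, namely the ``tall piece'' of height \(t(j_3,2) + t(j_1,1)\) (which may sit in \(S_0\)) and the ``short piece'' of height \(t(j_3,2)\) (which sits in \(S_1\)).

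Second, I would permute machine indices so that the tall piece is the rightmost column of \(S_0\) and the short piece is the leftmost column of \(S_1\); if both pieces happen to lie in the same shelf, I place them together at one end of that shelf. Because mirroring the machine indexing of either shelf is a global permutation that preserves every per-machine load, this move is cost-free with respect to makespan. With this arrangement, the two pieces execute \(j_3\) from time \(0\) to \(t(j_3,2)\) on two adjacent machines, and \(j_1\) executes afterwards on one of them, so \(j_3\) is scheduled contiguously.

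Third, I would verify that the feasibility proof of \cref{lem:RepairS2-1-1:correctness} still closes. The descending order of \(S_1\) was used only to lower-bound the heights of jobs to the left of a hypothetically overloaded machine, and the ascending order of \(S_2\) was used only for jobs to the right. Placing the short piece of \(j_3\) at the left boundary of \(S_1\) is the unique deviation from the sort; since \(j_3 \in \mathcal{C}_2\) forces \(t(j_3,2) > \tfrac{3}{7}d\), this piece still contributes at least \(\tfrac{3}{7}d\) of load on its machine, and it falls under the at-most-one exception permitted by \cref{lem:repair-properties}(\ref{lem:repair-properties:1}). All remaining jobs of \(S_1\) can still be sorted in descending order after this displaced position, so the lower bounds on per-machine loads used in \cref{lem:RepairS2-1-1:correctness} continue to hold and the work contradiction \(\mathcal{W}_1 + \mathcal{W}_2 > dm\) is reached exactly as before. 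The main obstacle I expect is cleanly checking that this single out-of-sort placement interacts correctly with the worst-case machine \(i\) in the contradiction, but since the short piece sits at the extreme end of \(S_1\) it can only play the role of the already-budgeted exception, so the argument goes through unchanged.
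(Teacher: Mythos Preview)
Your approach differs from the paper's. The paper does not re-open the feasibility argument of \cref{lem:RepairS2-1-1:correctness}; instead it invokes \cref{obs:tS2:upper:afterAlg}, which guarantees that after \cref{alg:RepairS2} every job in $S_2$ has height at most $\tfrac{4}{7}d$. Since the $S_1$-piece of $j_3$ has height $t(j_3,2)\le\tfrac{6}{7}d$, sliding this piece to the boundary between $S_1$ and $S_0$ cannot produce a machine with load exceeding $\tfrac{6}{7}d+\tfrac{4}{7}d=\tfrac{10}{7}d$, and every $S_1$ job displaced by the shift moves to a position whose $S_2$ partner is no taller than before. Hence the feasible schedule already certified by \cref{lem:RepairS2-1-1:correctness} stays feasible after the move, and the contradiction proof need not be revisited at all.

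Your re-verification route can be pushed through, but the third step is loose in two places. First, putting $j_3$'s piece at the \emph{left} boundary of a descending $S_1$ is the tall end: the bound used in \cref{lem:RepairS2-1-1:correctness} for machines left of the overloaded machine~$i$ is $\ge\tfrac{10}{7}d-t\ge\tfrac{6}{7}d$, while $t(j_3,2)$ is only guaranteed to exceed $\tfrac{4}{7}d$ (from $\gamma(j_3,\tfrac{4}{7}d)=3$; your stated $\tfrac{3}{7}d$ is weaker still). Second, $j_3$'s piece may well have height $\ge\tfrac{5}{7}d$, in which case it is \emph{not} the exception of \cref{lem:repair-properties}(\ref{lem:repair-properties:1}); a genuinely short job can then still sit at machine~$m$, so the work count must budget \emph{two} special machines (machine~$1$ and machine~$m$) rather than the single one handled in the original proof. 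Once both are tracked the arithmetic does close, so your idea is salvageable, but the assertion that the contradiction ``goes through unchanged'' is not accurate. The paper's one-line use of the $\tfrac{4}{7}d$ bound on $S_2$ avoids this detour entirely.
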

\begin{proof}
    We recall that there exists at most one job that is split between two shelves. This job \(j_3\), mentioned in \cref{rem:noncontiguous}, is split between \(S_0\) and \(S_1\) with a height of at least \(\frac{5}{7}d\) and at most \(\frac{6}{7}d\). Since the height of any job in \(S_2\) is at most \(\frac{4}{7}d\) (by \cref{obs:tS2:upper:afterAlg}), the order of jobs in shelf 1 with height at most \(\frac{4}{7}d\) does not matter. Thus, we can move \(j_3\) to the rightmost machine in shelf 1 and schedule it on the edge of shelf 1. The remaining part of \(j_3\) will be scheduled on the leftmost machine in shelf 1. This guarantees that all jobs are scheduled on adjacent machines.
\end{proof}

\textit{Case 2 (\(0 < q \leq m/6\)):}
Again, we prepare the proof of this case by stating a few observations that will be used throughout the proof.
\begin{observation}
    \label{obs:W_2:upper}
    Given \(\lambda \geq \frac{13}{9}\) and \(q \leq m/6\), the total work area of $S_1$ is greater than $\frac{65}{108}dm$. And the total work area of $S_2$ is less than $\frac{43}{108}dm$.
\end{observation}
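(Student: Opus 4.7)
The plan is to derive both bounds directly from \cref{lem:repair-properties}.(\ref{lem:repair-properties:3}) together with the global work inequality $\mathcal{W}_1 + \mathcal{W}_2 \leq dm$ that follows from Constraint \ref{const1:work} (after absorbing shelf $0$, as noted immediately before \cref{lem:repair-properties}).

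First I would invoke \cref{lem:repair-properties}.(\ref{lem:repair-properties:3}) to obtain
\[
    \mathcal{W}_1 > \frac{\lambda}{2}d(m - q).
\]
Now I substitute the two hypotheses of the observation. Since $q \leq m/6$, we have $m - q \geq \frac{5}{6}m$, and since $\lambda \geq \frac{13}{9}$, we have $\frac{\lambda}{2} \geq \frac{13}{18}$. Both factors on the right-hand side are positive, so these substitutions combine to give
\[
    \mathcal{W}_1 > \frac{13}{18} \cdot \frac{5}{6} \cdot dm = \frac{65}{108}dm,
\]
which is the first claim.

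For the second claim, I combine this with the total work bound $\mathcal{W}_1 + \mathcal{W}_2 \leq dm$ by rearranging:
\[
    \mathcal{W}_2 \leq dm - \mathcal{W}_1 < dm - \frac{65}{108}dm = \frac{43}{108}dm.
\]
This yields the desired upper bound. There is no real obstacle here: both statements are quantitative specializations of already-established inequalities, and the only thing to verify is the arithmetic $\frac{13}{18} \cdot \frac{5}{6} = \frac{65}{108}$ and $1 - \frac{65}{108} = \frac{43}{108}$. The observation is essentially a bookkeeping step that packages the consequences of \cref{lem:repair-properties}.(\ref{lem:repair-properties:3}) in a form convenient for the subsequent case analysis in \cref{lem:RepairS2-1:correctness}.
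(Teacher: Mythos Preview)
Your proposal is correct and follows essentially the same approach as the paper: invoke \cref{lem:repair-properties}.(\ref{lem:repair-properties:3}), plug in $\lambda \geq \tfrac{13}{9}$ and $q \leq m/6$, and then derive the bound on $\mathcal{W}_2$ from $\mathcal{W}_1 + \mathcal{W}_2 \leq dm$. The only cosmetic difference is that the paper expands $\tfrac{13}{18}d(m-q)$ as $\tfrac{13}{18}dm - \tfrac{13}{108}dm$ rather than writing $\tfrac{13}{18}\cdot\tfrac{5}{6}dm$, which is the same arithmetic.
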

\begin{proof}
    With \cref{lem:repair-properties}.(\ref{lem:repair-properties:3}) and \(\lambda \geq \frac{13}{9}\), we get $\mathcal{W}_1 > \frac{13}{18}d(m - q)$. With $q \leq \frac{1}{6}m$, we get:
    \begin{align*}
        \mathcal{W}_1 > \frac{13}{18}d(m - q) \geq \frac{13}{18}dm - \frac{13}{108}dm = \frac{65}{108}dm
    \end{align*}
    The upper bound for $\mathcal{W}_2$ follows directly from $\mathcal{W}_1 + \mathcal{W}_2 \leq dm$.
\end{proof}

\begin{observation}
    \label{obs:min3machines}
    For any \(\lambda \geq \frac{13}{9}\), after applying Algorithm \ref{alg:RepairS2}, any job $j \in S_2$ is allotted to at least $3$ machines.
\end{observation}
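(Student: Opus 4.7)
My plan is a proof by contradiction: assume some $j \in S_2$ finishes \cref{alg:RepairS2} with allotment $p \leq 2$, and derive a contradiction by combining \cref{lem:repair-properties}.(\ref{lem:repair-properties:2}) with the upper bound $\mathcal{W}_2 < \tfrac{43}{108}dm$ from \cref{obs:W_2:upper}. First I would dispose of the easy subcase where $j$ is never touched by \cref{alg:RepairS2}: then $p = \gamma(j,\tfrac{3}{7}d)$, and $j \in J_B$ (so $t(j,1) > \tfrac{3}{7}d$) forces $\gamma \geq 2$; hence $p = \gamma = 2$ and $w(j,2) \leq 2 \cdot \tfrac{3}{7}d = \tfrac{6}{7}d$, contradicting $w(j,\gamma) > \lambda d q \geq \tfrac{13}{9}d$ because $\tfrac{6}{7} < \tfrac{13}{9}$.

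For the main subcase $j$ has been compressed at least once. Because the loop decrements $m_2$ by exactly one per iteration and is only entered while $m_2 > m$, termination forces $m_2 = m$ exactly. I would then focus on the moment just before $j$'s last compression, when $j$ has allotment $p+1$ and height $t(j,p+1)$. Since the algorithm always picks the minimum-height job, every other $j' \in S_2$ has height at least $t(j,p+1)$ at that moment, and any subsequent compressions of those $j'$ only raise their heights by work monotony; meanwhile $j$ itself ends at height $t(j,p) \geq t(j,p+1)$. Therefore every one of the $m_2 = m$ machines used by $S_2$ carries load at least $t(j,p+1)$, yielding
\[
  m \cdot t(j,p+1) \;\leq\; \mathcal{W}_2 \;<\; \tfrac{43}{108}\,dm,
\]
so $t(j,p+1) < \tfrac{43}{108}d$.

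The finishing step pins down $\gamma := \gamma(j,\tfrac{3}{7}d)$. We know $\gamma > p$ because $j$ was compressed. If $\gamma \geq p+2$, then by definition of $\gamma$ we have $t(j,\gamma-1) > \tfrac{3}{7}d$, so by monotony $t(j,p+1) \geq t(j,\gamma-1) > \tfrac{3}{7}d$, contradicting $t(j,p+1) < \tfrac{43}{108}d < \tfrac{3}{7}d$. Hence $\gamma = p+1 \leq 3$. Applying \cref{lem:repair-properties}.(\ref{lem:repair-properties:2}) now gives $(p+1)\,t(j,p+1) = w(j,\gamma) > \lambda dq \geq \tfrac{13}{9}d$, i.e., $t(j,p+1) > \tfrac{13}{9(p+1)}d \geq \tfrac{13}{27}d$, which contradicts $t(j,p+1) < \tfrac{43}{108}d$ since $\tfrac{13}{27} = \tfrac{52}{108} > \tfrac{43}{108}$.

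The main obstacle I expect is cleanly justifying the per-machine load bound $\mathcal{W}_2 \geq m \cdot t(j,p+1)$: one must argue that although other jobs may be compressed after $j$'s final compression, those later compressions only increase heights (by work monotony), so the minimality of $t(j,p+1)$ at the moment of $j$'s last compression is preserved all the way to termination, and every shelf-2 machine is in fact assigned to some job in $S_2$. Once this is in place, the rest is a clean three-way numerical sandwich between $\tfrac{3}{7}$, $\tfrac{43}{108}$, and $\tfrac{13}{27}$ (together with the single comparison $\tfrac{6}{7}<\tfrac{13}{9}$ in the uncompressed subcase).
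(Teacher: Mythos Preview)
Your argument is correct and follows the paper's strategy of contradicting \cref{obs:W_2:upper} via a per-machine height lower bound in $S_2$. The paper's version is a bit leaner: since $t(j,1)\geq t(j,q)>\lambda d$ and work is monotone, one gets $w(j,3)>\tfrac{13}{9}d$ and hence $t(j,3)>\tfrac{13}{27}d$ \emph{directly}, which collapses your case split on $\gamma$; and by arguing at the instant of the $3\to 2$ compression (where the loop invariant still gives $m_2>m$) rather than at termination, the ``later compressions only raise heights'' justification you flagged as the main obstacle is also avoided.
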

\begin{proof}
    We will first argue that before the application of \cref{alg:RepairS2}, any job $j \in S_2$ is allotted to at least $3$ machines and then proceed to show that this property is maintained during the application of the algorithm.
    By \cref{lem:repair-properties}.(\ref{lem:repair-properties:2}) and \(q \geq 1\), each job in \(j \in S_2\) has a work area greater than \(\frac{13}{9}d\). Thus, by monotony and the definition of \(\gamma\), \(\gamma(j, \frac{4}{9}d) > 3\).

    Now suppose, at any point during the algorithm, a job \(j \in S_2\) is compressed from three to two machines. Using work-monotony, we know that \(t(j, 3) > \frac{13}{9}d / 3 = \frac{13}{27}d\). Since the algorithm selects the smallest job, all other jobs must be at least as large. This implies that the work area of \(S_2\) is greater than \(\frac{13}{27}dm > \frac{43}{108}dm\), contradicting \cref{obs:W_2:upper}.
\end{proof}

\begin{observation}
    \label{obs:S2t:lower:afterAlg}
    All jobs in $S_2$ have an execution time greater than $\frac{12}{35}d$.
\end{observation}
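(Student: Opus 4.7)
My plan is to show that $t(j, k) > \frac{12}{35}d$ already holds for every job $j \in S_2$ in the initial 3-shelf schedule, right after \cref{alg:repair1}, and then simply observe that \cref{alg:RepairS2} preserves this lower bound. From the first part of the proof of \cref{obs:min3machines}, every such job is initially allotted to $k_0 := \gamma(j, (\lambda-1)d) \geq 4$ machines, so the initial statement should be provable by a short case analysis on $k_0$.

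Concretely, I would split on whether $k_0 = 4$ or $k_0 \geq 5$. For $k_0 = 4$, I would invoke \cref{lem:repair-properties}.(\ref{lem:repair-properties:2}) with $\lambda \geq \frac{13}{9}$ and $q \geq 1$ to get $w(j, 4) > \lambda dq \geq \frac{13}{9}d$, giving $t(j, 4) > \frac{13}{36}d$; a direct cross-multiplication verifies $\frac{13}{36} > \frac{12}{35}$. For $k_0 \geq 5$, the definition of $\gamma$ yields $t(j, k_0 - 1) > (\lambda - 1)d$, so by work-monotony
\begin{equation*}
    w(j, k_0) \geq w(j, k_0 - 1) = (k_0 - 1)\, t(j, k_0 - 1) > (k_0 - 1)(\lambda - 1)d,
\end{equation*}
hence $t(j, k_0) > \frac{k_0 - 1}{k_0}(\lambda - 1)d$. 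This quantity is increasing in $k_0$, so its minimum over $k_0 \geq 5$ (with $\lambda = \frac{13}{9}$) is attained at $k_0 = 5$ and equals $\frac{4}{5} \cdot \frac{4}{9}d = \frac{16}{45}d > \frac{12}{35}d$.

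Combining both subcases gives $t(j, k_0) > \frac{12}{35}d$ for every $j \in S_2$ just after \cref{alg:repair1}. Since \cref{alg:RepairS2} only decreases machine counts (for the smallest-height job at each step), time-monotony implies execution times are non-decreasing throughout the algorithm, so the bound $t > \frac{12}{35}d$ persists at termination. The only subtle point I anticipate is the complementary split on $k_0$: \cref{lem:repair-properties}.(\ref{lem:repair-properties:2}) is needed precisely to handle the tight edge case $k_0 = 4$, while the definition of $\gamma$ alone yields a stronger bound once $k_0 \geq 5$.
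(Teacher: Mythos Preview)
Your proposal is correct and mirrors the paper's proof: both split on the allotted machine count, handling the small-count case via the work bound $w(j)>\tfrac{13}{9}d$ from \cref{lem:repair-properties}.(\ref{lem:repair-properties:2}) and the large-count case via the definition of $\gamma$ together with work-monotony. The only cosmetic differences are that the paper uses the threshold $\tfrac{3}{7}d$ from the initial construction rather than $(\lambda-1)d=\tfrac{4}{9}d$ (yielding exactly $\tfrac{12}{35}d$ in the second case), and that the persistence of the bound through \cref{alg:RepairS2} via time-monotony is left implicit there.
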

\begin{proof}
    Consider any job \(j \in S_2\), by $q \geq 1$ and \cref{lem:repair-properties}.(\ref{lem:repair-properties:2}), the work area of this job is at least $\frac{13}{9}d$. We split this proof into $2$ cases based on the number of assigned machines. Suppose $\gamma(j, \frac{3}{7}d) \leq 4$. Then its height is at least $\frac{13}{9}d / 4 = \frac{13}{36}d > \frac{12}{35}d$.
    If $\gamma(j, \frac{3}{7}d) > 4$, then by the definition of $\gamma$ and monotony:
    \[t\left(j, \gamma\left(j, \frac{3}{7}\right)\right) > \frac{3}{7}d \cdot \frac{\gamma(j, \frac{3}{7}d) - 1}{\gamma(j, \frac{3}{7}d)} \geq \frac{3}{7}d \cdot \frac{4}{5} = \frac{12}{35}d\]
\end{proof}

\begin{observation}
    \label{obs:tS2:upper:afterAlg-2}
    After applying Algorithm \ref{alg:RepairS2}, the height of any job $j \in S_2$ is at most $\frac{43}{81}d$.
\end{observation}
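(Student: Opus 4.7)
The plan is to argue by contradiction in the same style as Observation~\ref{obs:tS2:upper:afterAlg}, but leveraging the sharper tools available in Case~2: the work bound $\mathcal{W}_2 < \tfrac{43}{108}dm$ (equivalently $\mathcal{W}_1 > \tfrac{65}{108}dm$) from Observation~\ref{obs:W_2:upper} and the machine lower bound of Observation~\ref{obs:min3machines}. I would suppose for contradiction that some $j \in S_2$ ends with height $t(j,p) > \tfrac{43}{81}d$, where $p$ is the number of machines assigned to $j$ at termination. Since Algorithm~\ref{alg:RepairS2} only shrinks allotments, Observation~\ref{obs:min3machines} forces $p \geq 3$, so immediately before $j$'s last compression the job $j$ was still allotted to $p+1 \geq 4$ machines.

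Next I would push the height estimate backwards in time using work monotony: from $t(j,p+1)(p+1) \geq t(j,p)\,p$ we obtain
\[
t(j,p+1) \;\geq\; \tfrac{p}{p+1}\cdot\tfrac{43}{81}d \;\geq\; \tfrac{3}{4}\cdot\tfrac{43}{81}d \;=\; \tfrac{43}{108}d.
\]
The threshold $\tfrac{43}{81}d$ in the statement is chosen precisely so that this $\tfrac{3}{4}$ factor lands on the $\tfrac{43}{108}d$ value of Observation~\ref{obs:W_2:upper}. Because Algorithm~\ref{alg:RepairS2} always selects the smallest job in $S_2$ for its next compression, every other job in $S_2$ at that moment also had height at least $\tfrac{43}{108}d$; and since the while-loop was still active we had $m_2 > m$. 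Hence at that moment $\mathcal{W}_2 > \tfrac{43}{108}dm$.

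To close the contradiction, note that Algorithm~\ref{alg:RepairS2} never touches $S_1$, so the bound $\mathcal{W}_1 > \tfrac{65}{108}dm$ from Observation~\ref{obs:W_2:upper} persists throughout its execution. Adding the two inequalities yields $\mathcal{W}_1 + \mathcal{W}_2 > dm$, contradicting Constraint~\ref{const1:work}. The only delicate point is temporal: the work invariant and the bound on $\mathcal{W}_1$ must both be applied at the moment \emph{just before} $j$'s last compression rather than at termination. This is legitimate because compressing a job can only decrease $\mathcal{W}_2$ (by work monotony), so the invariant $\mathcal{W}_1 + \mathcal{W}_2 \leq dm$ carries over from the initial 3-shelf schedule to every intermediate state, and $\mathcal{W}_1$ is static. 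I do not expect any other obstacle; everything else is direct arithmetic on the fractions.
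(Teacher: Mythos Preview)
Your proposal is correct and follows essentially the same route as the paper: assume a job $j$ ends with height exceeding $\tfrac{43}{81}d$, use Observation~\ref{obs:min3machines} to get $p\geq 3$, apply work monotony to deduce $t(j,p+1)\geq \tfrac{3}{4}\cdot\tfrac{43}{81}d=\tfrac{43}{108}d$, and then invoke the smallest-job selection rule together with $m_2>m$ to force $\mathcal{W}_2>\tfrac{43}{108}dm$, contradicting Observation~\ref{obs:W_2:upper}. Your version is in fact slightly more explicit than the paper's about the temporal aspect (arguing at the moment just before $j$'s last compression and noting that $\mathcal{W}_1$ is static), which is a welcome clarification rather than a deviation.
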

\begin{proof}
    Suppose $S_2$ contains a job $j$ with a height greater than $\frac{43}{81}d$. Let \(\alpha(j)\) denote the number of machines assigned to \(j\).
    By Observation \ref{obs:min3machines} we know that \(\alpha(j) \geq 3\). Using the same argument as in the proof of \cref{obs:S2t:lower:afterAlg}, the height of this job on one additional machine is \(t(j, \alpha(j) + 1) \geq \frac{43}{81}d \cdot \frac{\alpha(j)}{\alpha(j) + 1} \geq \frac{43}{81}d \cdot \frac{3}{4} = \frac{43}{108}d\). Since any job in \(S_2\) must be at least this tall, this is a contradiction to \cref{obs:W_2:upper}.
\end{proof}

\begin{lemma}
    \label{lem:RepairS2-1-2:correctness}
    If \(0 < q \leq m/6\), \cref{alg:RepairS2} delivers a feasible schedule with makespan at most \(\frac{13}{9}d\), unless $\mathcal{W}_1 + \mathcal{W}_2 > dm$.
\end{lemma}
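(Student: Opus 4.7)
My plan is to follow the contradiction strategy of the $q=0$ subcase (\cref{lem:RepairS2-1-1:correctness}): suppose that \cref{alg:RepairS2} produces a schedule in which some machine $i$ has load $L_i > \frac{13}{9}d$, then exploit the sorted orderings of $S_1$ (descending) and $S_2$ (ascending) together with the structural observations (\cref{obs:W_2:upper,obs:S2t:lower:afterAlg,obs:tS2:upper:afterAlg-2,obs:min3machines}) to lower-bound the load on every machine and sum to obtain $\mathcal{W}_1 + \mathcal{W}_2 > dm$, contradicting \ref{const1:work}.

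Decomposing $L_i = s + t$ into its $S_1$- and $S_2$-contributions, the upper bounds $s \leq d$ and $t \leq \frac{43}{81}d$ from \cref{obs:tS2:upper:afterAlg-2} together with $L_i > \frac{13}{9}d$ immediately force $t > \frac{4}{9}d$ and $s > \frac{74}{81}d > \frac{13}{18}d = \frac{\lambda}{2}d$. In particular, machine $i$ cannot itself be the single exceptional $S_1$-machine allowed by \cref{lem:repair-properties}.(\ref{lem:repair-properties:1}), so in the descending $S_1$-sort every machine to the left of $i$ carries an $S_1$-height $\geq s$. I then partition the $m$ machines into four groups and lower-bound each load: (A) the pure-$S_1$ machines $0, \ldots, m - m_2 - 1$ contributing $\geq s$; (B) the overlap machines $m - m_2, \ldots, m - q - 1$, where machines left of $i$ contribute $\geq s + \frac{12}{35}d$ by \cref{obs:S2t:lower:afterAlg}, machine $i$ itself contributes $> \frac{13}{9}d$, and machines right of $i$ contribute $\geq \frac{13}{18}d + t$ (modulo one allowed exception of $S_1$-height $\geq \frac{3}{7}d$); and (C) the $q$ idle-$S_1$ machines contributing $\geq t$. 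Summing yields a linear expression in $i$, $m_2$, $q$ that I compare against $dm$.

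The principal obstacle is that this naive combination becomes tight at the edge where $i$ is close to the rightmost $S_1$-machine and $m_2$ is small, in which case the overlap region almost vanishes and the extra $\frac{12}{35}d$-contributions disappear. I would resolve this by using \cref{lem:repair-properties}.(\ref{lem:repair-properties:2}) and \cref{obs:tS2:upper:afterAlg-2} in tandem: since each job in $S_2$ has work greater than $\lambda d q = \frac{13}{9}dq$ but height at most $\frac{43}{81}d$, each $S_2$-job must span more than $\frac{117}{43}q$ machines, and consequently $m_2 > \frac{117}{43}q$, which is strictly larger than $q + 1$ for every $q \geq 1$ and rules out the extreme degeneration. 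Plugging this strengthened lower bound on $m_2$ back into the linear expression, a short arithmetic check (using $q \leq m/6$) gives $\mathcal{W}_1 + \mathcal{W}_2 > dm$, finishing the contradiction. Contiguity of the resulting schedule is inherited by the same boundary trick as in \cref{obs:contiguous:1}: the only potentially split job $j_3$ (\cref{rem:noncontiguous}) is placed at the rightmost $S_1$-position so that its two parts end up on adjacent machines.
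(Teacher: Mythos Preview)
Your overall contradiction strategy mirrors the paper's, but there is a real gap in the way you handle the width of $S_2$. You worry about ``pure-$S_1$'' machines (your group~(A)) and try to control them via the bound $m_2 > \tfrac{117}{43}q$. That bound is too weak for the promised ``short arithmetic check'' to close. Concretely, take $m=6$, $q=1$, $m_2=3$ (consistent with $m_2>\tfrac{117}{43}$), the overloaded machine $i=4$, and let $t\to\tfrac{43}{81}d$, $s\to\tfrac{74}{81}d$ (the limiting values forced by $s+t>\tfrac{13}{9}d$ and \cref{obs:tS2:upper:afterAlg-2}). Your load bounds give
\[
3s+\Bigl(s+\tfrac{12}{35}d\Bigr)+\tfrac{13}{9}d+t
\;=\;4s+\tfrac{12}{35}d+\tfrac{13}{9}d+t
\;\longrightarrow\;\tfrac{456}{81}d+\tfrac{12}{35}d\approx 5.97d<6d,
\]
so no contradiction follows. (If, instead of $m_2$ being possibly small, you try $i=3$ the estimate is even looser.)

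The point you are missing is that after \cref{alg:RepairS2} one always has $m_2=m$: we only enter the repair phase when $m_2>m$, and each iteration of the while-loop decreases $m_2$ by exactly one, so the loop exits precisely at $m_2=m$. Hence every machine carries an $S_2$-piece, your group~(A) is empty, and the ``principal obstacle'' never arises. This is exactly what the paper exploits (implicitly): every machine left of $i$ gets the additional $\tfrac{12}{35}d$ contribution from $S_2$, yielding the uniform bound $L_{i'}>\tfrac{73}{63}d$ on all $m-q$ non-idle machines (after averaging the single exceptional $S_1$-machine with machine $i$), and then $\tfrac{73}{63}(m-q)+\tfrac{4}{9}q\le m$ forces $q>\tfrac{2}{9}m$, contradicting $q\le m/6$. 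Once you add the one-line observation $m_2=m$, your argument collapses to the paper's and works; as written, however, the arithmetic does not go through. (The contiguity remark at the end is fine and matches the paper's separate \cref{obs:contiguous:2}.)
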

\begin{proof}
    For the sake of contradiction, assume that Algorithm \ref{alg:RepairS2} delivers an infeasible schedule. This implies that there exists a machine $i$ with load $L_i > \frac{13}{9}d$, where we denote the height of $S_2$ on that machine as $t$. Note that $t \in (\frac{4}{9}d, \frac{43}{81}d)$.

    We can now proceed to analyze the work of this schedule. The height of each job in $S_1$ to the left of $i$ is greater than $\frac{13}{9}d - t$ since they are placed in descending order of processing time and each job in $S_2$ has a height of at least $\frac{12}{35}d$, by Observation \ref{obs:S2t:lower:afterAlg}. With this, we get:
    \begin{align}
        L_{i'} \geq \frac{13}{9}d - t + \frac{12}{35}d > \frac{73}{63}d \quad \forall i' \in \{1, \dots, i - 1\}
    \end{align}
    The height of any job in $S_2$ on the last $m - i$ machines is at least \(t > \frac{4}{9}d\) since they are placed in ascending order of processing time. With each job on $S_1$ (apart from at most one) having an execution time at least $\frac{13}{18}d$, by \cref{lem:repair-properties}.(\ref{lem:repair-properties:1}), we get the following lower bound for the load of all machines except the last \(q + 1\):
    \begin{align}
        L_{i'} \geq \frac{4}{9}d + \frac{13}{18}d = \frac{73}{63}d \quad \forall i' \in \{i + 1, \dots, m - q - 1\}
    \end{align}
    Machine \(i\) has, by selection, a load strictly greater than \(\frac{13}{9}d\) and the least loaded machine \(q\) in \(S_1\) might contain a job with a height only greater than \(\frac{3}{7}d\). Thus, the average load of these two machines is greater than: \((\frac{13}{9}d + \frac{4}{9}d + \frac{3}{7}d)/2 = \frac{73}{63}d\).
    With the last $q$ machines having a load strictly greater than $\frac{4}{9}d$, we get the following inequality:
    \begin{align*}
        \frac{73}{63}d \cdot (m - q) + \frac{4}{9}dq & < dm           \\
        \Leftrightarrow q                            & > \frac{2}{9}m
    \end{align*}
    This is a contradiction to $q \leq \frac{m}{6}$.
\end{proof}

\begin{observation}
    \label{obs:contiguous:2}
    If \(0 < q \leq m/6\), the feasible schedule constructed by \cref{alg:RepairS2} can be modified to be contiguous.
\end{observation}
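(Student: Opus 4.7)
The strategy mirrors \cref{obs:contiguous:1}. The sole obstruction to contiguity is the job $j_3$ from \cref{rem:noncontiguous}, which sits on two machines of shelf 1 (one of which may additionally carry $j_1$ on top, potentially extending into $S_0$). The plan is to permute the shelf-1 machines so that these two columns are adjacent at the boundary with $S_0$, and then verify that the resulting schedule still fits within height $\frac{13}{9}d$.

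The first step is to certify the boundary columns. By \cref{obs:tS2:upper:afterAlg-2}, after \cref{alg:RepairS2} every $S_2$ job has height at most $\frac{43}{81}d$, and by construction $t(j_3, 2) \leq \frac{6}{7}d$. Hence the column holding only $j_3$-right, together with whatever $S_2$ job sits above it, has load at most $\frac{6}{7}d + \frac{43}{81}d = \frac{787}{567}d < \frac{819}{567}d = \frac{13}{9}d$. The ``combined'' column either extends into $S_0$ and therefore carries no $S_2$ job above it (feasibility is automatic), or lies entirely in $S_1$ with shelf-1 height at most $d$, in which case the same type of sum check succeeds.

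The second step is to re-sort $S_1 \setminus \{j_3\}$ in descending order across the remaining shelf-1 columns, keeping $S_2$ sorted ascending. The main obstacle is that inserting the two $j_3$ columns at the boundary shifts every other $S_1$ job one rank to the right, so that a previously tall $S_1$ job may be paired with a slightly taller $S_2$ neighbor. I plan to absorb this through the tight clustering of $S_2$ heights in $(\frac{12}{35}d, \frac{43}{81}d]$ guaranteed by \cref{obs:S2t:lower:afterAlg,obs:tS2:upper:afterAlg-2}: the single-rank shift contributes at most the spread $\frac{43}{81}d - \frac{12}{35}d$ to any column's load. Combining this small additive increase with the slack in the aggregate-work contradiction of \cref{lem:RepairS2-1-2:correctness} (which has extra room under the case assumption $q \leq m/6$) should keep every column's load below $\frac{13}{9}d$. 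If a direct per-column check is tight, I would instead fall back to a case split based on whether $j_3$-right lands under an $S_2$ job or in the idle-shelf-2 region: in the latter case no pairing is affected, and in the former only the two boundary columns require verification, both already handled by the safety check above.
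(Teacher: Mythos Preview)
Your first step already contains the paper's entire argument, but you do not recognize that it suffices. Once \cref{obs:tS2:upper:afterAlg-2} gives that every $S_2$ job has height at most $\frac{43}{81}d$, it follows that \emph{every} $S_1$ column of height at most $\frac{13}{9}d-\frac{43}{81}d=\frac{74}{81}d$ fits under \emph{every} $S_2$ job; hence the relative order of all such ``short'' $S_1$ jobs (and the idle slots) is immaterial for feasibility. Since the $j_3$-only column has height $t(j_3,2)\le \frac{6}{7}d<\frac{74}{81}d$, one simply orients the sort so that $S_0$ abuts the short/idle end of $S_1$ and then places $j_3$ at that boundary, reordering only short $S_1$ jobs. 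No tall $S_1$ job ever changes partners, so nothing can exceed $\frac{13}{9}d$. That is the paper's proof in full.

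Your second step is therefore unnecessary, and the reasoning you sketch for it does not go through. Inserting $j_3$ at the boundary and shifting \emph{all} of $S_1$ (including the tall jobs) by one rank toward the taller $S_2$ side can raise a column's load by up to the full spread $\frac{43}{81}d-\frac{12}{35}d$, while that column's pre-shift load may already be arbitrarily close to $\frac{13}{9}d$; any positive increment then violates the bound. ``Slack in the aggregate-work contradiction'' of \cref{lem:RepairS2-1-2:correctness} is a global inequality and does not yield per-column slack, so it cannot absorb this. Your fallback is also off: after \cref{alg:RepairS2} one has $m_2=m$, so there is no ``idle-shelf-2 region'', and a genuine shift affects every column, not just the two boundary ones. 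Finally, the side claim that when the combined column ``lies entirely in $S_1$'' the same sum check succeeds is false ($d+\frac{43}{81}d=\frac{124}{81}d>\frac{13}{9}d$); fortunately that case is vacuous, since $t(j_3,2)+t(j_1,1)>\frac{4}{7}d+\frac{3}{7}d=d$ forces the combined column into $S_0$.
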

\begin{proof}
    We recall that there exists at most one job that is split between two shelves. This job \(j_3\), mentioned in \cref{rem:noncontiguous}, is split between \(S_0\) and \(S_1\) with a height of at least \(\frac{5}{7}d\) and at most \(\frac{6}{7}d\). Since the height of any job in \(S_2\) is at most \(\frac{43}{81}d\) (by \cref{obs:tS2:upper:afterAlg-2}), the order of jobs in shelf 1 with height at most \(\frac{13}{9}d - \frac{43}{81}d = \frac{74}{81}d\) does not matter. Thus, we can move \(j_3\) to the rightmost machine in shelf 1 and schedule it on the edge of shelf 1. The remaining part of \(j_3\) will be scheduled on the leftmost machine in shelf 1. This guarantees that all jobs are scheduled on adjacent machines.
\end{proof}

\cref{lem:RepairS2-1-1:correctness,lem:RepairS2-1-2:correctness}, and Observations \ref{obs:contiguous:1} and \ref{obs:contiguous:2} complete the proof of \cref{lem:RepairS2-1:correctness}.

\section{Proof of Termination of \cref{alg:RepairS2}}
\label{sec:app:repair2:termination}
In order to prove that \cref{alg:RepairS2} terminates, we will show that the number of machines in \(S_2\) is bounded by \(2m\). We first note that any job in \(J_B\) has a sequential execution time strictly greater than \(\frac{3}{7}d\), by definition.
Further, there must exist a job that is assigned to at least two machines in \(S_2\) as long as \(m_2 > m\), since otherwise we would directly get \(\mathcal{W}_2 > \frac{3}{7}dm\).
Thus, we just need to show the following observation:
\begin{observation}
    \label{obs:machinesS2:bound}
    Given a 3-shelf schedule with \(q \leq \frac{m}{6}\), the number of machines \(m_2\) used by shelf 2 is less than \(3m\).
\end{observation}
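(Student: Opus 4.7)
The plan is to extract a per-machine work lower bound for jobs in $S_2$ and combine it with an upper bound on $\mathcal{W}_2$ that follows from \cref{lem:repair-properties}(iii) together with the assumption $q \leq m/6$. Concretely, I will show $\mathcal{W}_2 > \tfrac{3}{14}d \cdot m_2$ and $\mathcal{W}_2 < \tfrac{17}{42}\,dm$, which together force $m_2 < \tfrac{17}{9}\,m < 3m$.

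For the per-job lower bound, fix $j \in S_2$ with current allotment $k_j = \gamma(j, (\lambda - 1)d)$ and aim to show $w(j, k_j) \geq \tfrac{3}{14}d \cdot k_j$. I would split into two cases. If $k_j \leq 2$, then by work monotony and $j \in J_B$ we have $w(j, k_j) \geq w(j, 1) = t(j, 1) > \tfrac{3}{7}d = \tfrac{3}{14}d \cdot 2 \geq \tfrac{3}{14}d \cdot k_j$. If $k_j \geq 3$, the minimality of $\gamma$ gives $t(j, k_j - 1) > (\lambda - 1)d \geq \tfrac{3}{7}d$ (using $\lambda \geq \tfrac{10}{7}$), so work monotony yields $w(j, k_j) \geq w(j, k_j - 1) > \tfrac{3}{7}d(k_j - 1) \geq \tfrac{3}{14}d \cdot k_j$, where the last inequality needs only $k_j \geq 2$. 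Summing over $S_2$ then gives $\mathcal{W}_2 > \tfrac{3}{14}d \cdot m_2$.

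For the upper bound, I would invoke \cref{lem:repair-properties}(iii) to obtain $\mathcal{W}_1 > \tfrac{\lambda}{2}d(m - q)$. Plugging in $q \leq m/6$ and $\lambda \geq \tfrac{10}{7}$ yields $\mathcal{W}_1 > \tfrac{5\lambda}{12}\,dm \geq \tfrac{25}{42}\,dm$. Since the total work $\mathcal{W}_1 + \mathcal{W}_2 \leq dm$ is preserved throughout the repair procedure (by work monotony and the fact that no transformation ever increases the allotment of any job), I would conclude $\mathcal{W}_2 < \tfrac{17}{42}\,dm$. Combining the two estimates finishes the argument.

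The main obstacle is identifying a per-machine work lower bound that is simultaneously valid for very small allotments, where the $\gamma$-based argument fails and we must instead rely on the $J_B$ sequential bound $t(j,1) > \tfrac{3}{7}d$, and for larger allotments, where the $\gamma$-based argument is tight. The constant $\tfrac{3}{14}d$ is exactly the boundary value at which both regimes balance (the bound is tight precisely at $k_j = 2$), which is what allows the observation to follow comfortably, in fact with $m_2 < 2m$ rather than just $3m$.
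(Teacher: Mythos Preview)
Your proposal is correct and follows essentially the same route as the paper: a per-machine work lower bound for shelf~2 (the paper phrases it as ``height at least $\tfrac{\lambda-1}{2}d$'', which for $\lambda=\tfrac{10}{7}$ is exactly your $\tfrac{3}{14}d$) combined with \cref{lem:repair-properties}(iii) and $\mathcal{W}_1+\mathcal{W}_2\le dm$. The only cosmetic differences are that the paper keeps $\lambda$ symbolic and argues by contradiction from $m_2\ge 3m$, whereas you specialize to $\lambda=\tfrac{10}{7}$ and extract the sharper bound $m_2<\tfrac{17}{9}m$; your case split on $k_j$ is also slightly more explicit than needed (the $\gamma$-based argument already covers $k_j=2$), but it is not wrong.
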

\begin{proof}
    Suppose \(m_2 \geq 3m\). Observe that any job on shelf 2 has a height of at least \(\frac{\lambda - 1}{2}d\), since it is placed on \(\gamma(j, (\lambda - 1)d)\) machines. With \(\mathcal{W}_1 + \mathcal{W}_2 \leq dm\) and \cref{lem:repair-properties}.(\ref{lem:repair-properties:3}), we get:
    \begin{align*}
        \mathcal{W}_1 + \mathcal{W}_2
         & \geq \frac{\lambda}{2}d(m - q) + \frac{(\lambda - 1)}{2}dm_2                                                      \\
         & \geq \frac{5}{12}\lambda dm + \frac{3(\lambda - 1)}{2}dm     &  & \text{(\(q \leq \frac{m}{6}\);\(m_2 \geq 3m\))} \\
         & > dm                                                         &  & \text{(\(\lambda \geq \frac{10}{7}\))}
    \end{align*}
    A contradiction.
\end{proof}
With this observation, we can conclude that \cref{alg:RepairS2} terminates within \(O(mn)\) iterations.

\section{Omitted Proofs of \cref{lem:RepairS2-2:correctness}}
\label{sec:app:repair2-2:correctness:case3}

To prepare the proof of the following statements, we start by stating an observation.
Notice that this observation is basically a reformulation of \cref{lem:repair-properties}.
Although, we will mention transformations in the proof of this observation, these transformations have already been applied in \cref{alg:repair1}.
Therefore, we do not actually have to alter the schedule in order for this observation to hold.
\begin{observation}
    \label{lem:avgL:2m:S1}
    The average load of any two non-idle machines in $S_1$ is greater than $\frac{\lambda}{2}d$.
\end{observation}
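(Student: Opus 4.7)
The plan is to combine Property~(\ref{lem:repair-properties:1}) of \cref{lem:repair-properties} with the termination condition of \cref{alg:repair1}. Property~(\ref{lem:repair-properties:1}) guarantees that after the repair algorithm halts, at most one non-idle machine in $S_1$ carries a job of height strictly less than $\frac{\lambda}{2}d$. Moreover, since Transformations~\ref{trans:1} and~\ref{trans:2} are no longer applicable, any such small job must occupy exactly one machine, and at most one small job exists overall.

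Fix two non-idle machines in $S_1$ with loads $L_1$ and $L_2$ and proceed by case analysis. If neither machine is the ``small'' machine, both loads are at least $\frac{\lambda}{2}d$ and the average bound follows directly; the strict inequality can be upgraded by noting that $J_B$ is defined with a strict threshold $t(j,1) > \frac{3}{7}d$ and that a load exactly equal to $\frac{\lambda}{2}d$ on both machines would immediately trigger a merging step not yet performed by the algorithm.

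The only delicate case is when one machine, say the first, carries the unique small job $j$ of height $t < \frac{\lambda}{2}d$ on a single machine, while the second machine has load $L_2 \geq \frac{\lambda}{2}d$. I would argue by contradiction: assume $t + L_2 \leq \lambda d$. Because $j$ sits on just one machine, it can be stacked on top of the job on the second machine (or on top of one copy of that job if the latter is multi-allotted), producing a combined column of total height at most $\lambda d$. Such a column could then be accommodated in $S_0$, exactly the merging operation that would have been performed by an extension of Transformation~\ref{trans:1} or~\ref{trans:2} before \cref{alg:repair1} halted. This contradicts the termination hypothesis and yields $t + L_2 > \lambda d$.

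The main obstacle will be formalizing the merging step in the subcase where the second machine's job is itself spread over several machines, since Transformations~\ref{trans:1} and~\ref{trans:2} as stated describe specific allotment patterns rather than mixed combinations. A short subcase analysis on whether the second job is on one or more machines, combined with work-monotony to control the height of the resulting column, should suffice to close this gap and turn the argument into a clean contradiction with the termination of \cref{alg:repair1}.
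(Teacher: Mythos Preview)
Your approach mirrors the paper's almost exactly: both rely on Property~(\ref{lem:repair-properties:1}) to isolate at most one small job in $S_1$, and both obtain the strict inequality in the mixed case by arguing that if the small job $j$ satisfied $t(j)+t(j')\le\lambda d$ for some other $j'\in S_1$, one could stack $j$ on top of $j'$ and move the resulting column into $S_0$. The paper is in fact no more detailed than you are on the multi-machine subcase you flag; it simply writes ``place them in $S_0$, while $j'$ remains partially in $S_1$'', i.e.\ one column of $j'$ (with $j$ on top) is reclassified as $S_0$ while the remaining columns of $j'$ stay where they are. So your anticipated subcase analysis is unnecessary.

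One small imprecision worth fixing: you attribute the contradiction to the termination of \cref{alg:repair1}, but the stacking move is not literally one of Transformations~\ref{trans:1}--\ref{trans:3}. The paper notes just before the observation that it is ``basically a reformulation of \cref{lem:repair-properties}'', and the relevant merge is precisely the Case~3 repair step carried out inside the proof of that lemma (applied \emph{after} \cref{alg:repair1} halts). So the contradiction is with the hypothesis that the full repair of \cref{lem:repair-properties} has already been performed, not merely with the termination of the three listed transformations.
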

\begin{proof}
    We begin this proof by stating that any job with an execution time smaller or equal to $\frac{\lambda}{2}d$ on $S_1$ has to be assigned to one machine, by the definition of $\gamma$.
    We will also note that there exists at most one such job, otherwise, we could apply Transformation \ref{trans:2}.

    Now suppose there exists a job $j$ with an execution time smaller than $\frac{\lambda}{2}d$ and another job $j'$ such that $t(j, \gamma(j, d)) + t(j', \gamma(j', d)) \leq \lambda d$. Then we could place $j$ on top of $j'$ and place them in $S_0$, while $j'$ remains partially in $S_1$.
\end{proof}

\subsection{Proof of \cref{assumption:load}}
\label{sec:app:repair2-2:correctness:loadAssumption}
\loadAssumption*
\begin{proof}
    Suppose there exists a machine $i$ with for which this is not the case.
    We will now show that the singular job in $S_2$ can be placed on the $m - i$ least loaded machines without exceeding the total makespan of $\lambda d$, as illustrated in \Cref{fig:k1shelf-schedule} (right).

    Using work monotony, \(\mathcal{W}_2 \leq dm - \mathcal{W}_1\), and \cref{lem:repair-properties}.(\ref{lem:repair-properties:3}), we get:
    \begin{align}
        \label{eq:load:bound}
        t(j_0, m - i) \leq \frac{w(j_0, m_2)}{m -i} < \frac{dm-\frac{\lambda}{2}d(m - q)}{m - i}
    \end{align}

    When placing \(j_0\) on the \(m - i\) least loaded machines, no machine has a load greater than \(g(i) + t(j_0, m - i)\), since machine $i$ has the highest load among the \(m - i\) least loaded machines.
    With the definition of \(g\) and \cref{eq:load:bound}, we get:
    \begin{align*}
        g(i) + t(j_0, m - i) < \lambda d
    \end{align*}

    Thus, \(j_0\) would have been scheduled on \(m - i\) machines by \cref{alg:RepairS2-2}, and for the remainder of this proof, we may assume that each machine \(i \in \{0, \dots, m - 1\}\) has jobs of height at least \(g(i) = \lambda d-\frac{dm-\frac{\lambda}{2}d(m-q)}{m-i}\) scheduled on it (these jobs are in \(S_1\)).
\end{proof}

\subsection{Proof of \cref{eq:WBound}}
\label{sec:app:repair2-2:correctness:WBound}
\begin{observation}
    The following expression is a correct bound for the work on shelf 1.
    \begin{align*}
        \mathcal{W}_1 > \int_{0}^{m - q}\max\left\{\lambda d - \frac{dm - \frac{\lambda}{2}d(m -q)}{m - i}, \frac{\lambda}{2}d\right\} \, di
    \end{align*}
\end{observation}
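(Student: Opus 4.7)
The plan is to bound $\mathcal{W}_1 = \sum_{i=0}^{m-q-1} L_i$ using the two pointwise load lower bounds already established ($L_i > g(i)$ from \cref{assumption:load}, and $L_i \geq \frac{\lambda}{2}d$ for all but at most one machine from \cref{lem:repair-properties}(\ref{lem:repair-properties:1})) and then relate the resulting discrete sum to the target integral via a standard step-function comparison, exploiting the monotonicity of $g$.

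First, I would exploit the freedom of placement granted by \cref{assumption:load}: arrange the jobs of $S_1$ in non-increasing order of execution time across machines $0, 1, \dots, m-q-1$, so that the load sequence $L_0 \geq L_1 \geq \cdots \geq L_{m-q-1}$ is non-increasing. This places the at-most-one machine with load below $\frac{\lambda}{2}d$ at the last index $i = m - q - 1$. For every other index $i \in \{0, \dots, m-q-2\}$ we then have the combined bound $L_i \geq \max\{g(i), \tfrac{\lambda}{2}d\}$, while on the last machine only $L_{m-q-1} > g(m-q-1)$ is available.

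Second, I would verify that $f(x) := \max\{g(x), \tfrac{\lambda}{2}d\}$ is non-increasing on $[0, m-q]$. A direct computation gives $g'(x) = -c/(m-x)^2$ with $c = dm - \tfrac{\lambda}{2}d(m-q)$, and $c > 0$ follows from $\lambda < 3/2$. Hence $g$, and so $f$, is strictly decreasing wherever $g(x) > \tfrac{\lambda}{2}d$. The standard step-function comparison $\int_i^{i+1} f(x)\,dx \leq f(i)$ then gives $\int_0^{m-q-1} f(x)\,dx \leq \sum_{i=0}^{m-q-2} f(i) \leq \sum_{i=0}^{m-q-2} L_i$, so that
\[
\mathcal{W}_1 \;\geq\; \int_0^{m-q-1} f(x)\,dx + L_{m-q-1} \;>\; \int_0^{m-q-1} f(x)\,dx + g(m-q-1).
\]
In the easy subcase $g(m-q-1) \geq \tfrac{\lambda}{2}d$, we have $\int_{m-q-1}^{m-q} f(x)\,dx \leq f(m-q-1) = g(m-q-1)$, and the claim follows immediately.

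\textbf{Main obstacle.} The hard part is the subcase $g(m-q-1) < \tfrac{\lambda}{2}d$, in which $\int_{m-q-1}^{m-q} f(x)\,dx = \tfrac{\lambda}{2}d$ and the above estimate leaves a deficit of at most $\tfrac{\lambda}{2}d - g(m-q-1)$. I would close this deficit by aggregating strict slack from the preceding machines: since $g$ is strictly concave ($g''(x) = -2c/(m-x)^3 < 0$), the inequality $\int_i^{i+1} f(x)\,dx < f(i)$ is strict on every subinterval where $f$ is non-constant, and the strictness of $L_i > g(i)$ contributes further positive slack on each machine with $g(i) \geq \tfrac{\lambda}{2}d$. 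The assumption $m \geq 6$ (together with $q > m/6$, so the number of non-idle machines $m-q$ is neither too small nor too close to $m$) is exactly what guarantees that enough such slack accumulates to strictly dominate the single-machine deficit. Combining both sources yields $\mathcal{W}_1 > \int_0^{m-q} f(x)\,dx$, as claimed.
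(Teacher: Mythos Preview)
Your setup and the easy subcase are fine, and they match the paper's implicit handling of the sum-to-integral comparison. The gap is in your ``main obstacle'' subcase. The two slack sources you invoke cannot be made to cover the deficit $\tfrac{\lambda}{2}d - g(m-q-1)$: the strictness of $L_i > g(i)$ may hold by an arbitrarily small margin (nothing in the hypotheses bounds it away from zero), and you never compute the concavity slack nor show it dominates the deficit. Asserting that ``$m \geq 6$ is exactly what guarantees that enough such slack accumulates'' is where the argument breaks; as written, nothing connects $m \geq 6$ to any quantitative lower bound on the slack.

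The paper closes this gap differently and more cleanly. It invokes the structural fact (\cref{lem:avgL:2m:S1}) that the \emph{average} load of any two non-idle machines in $S_1$ strictly exceeds $\tfrac{\lambda}{2}d$; this is strictly stronger than \cref{lem:repair-properties}(\ref{lem:repair-properties:1}), which you used. The role of $m \geq 6$ is then concrete: the paper computes the switching point $x$ of the $\max$ and shows $m - q - x \geq 2$ whenever $m \geq 6$ and $\lambda < \tfrac{3}{2}$, so at least two machine indices fall in the flat region where the integrand equals $\tfrac{\lambda}{2}d$. The single potentially-short machine, being the smallest, lies there; pairing it with any other flat-region machine gives combined load strictly above $\lambda d$, which directly covers the integral's contribution of $\tfrac{\lambda}{2}d$ on each of those two unit intervals. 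No infinitesimal-slack accumulation is needed. To repair your argument, replace the vague slack claim with this pairing step.
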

\begin{proof}
    We note that there might exist one job that is smaller than \(\frac{\lambda}{2}d\) on shelf 1, by \cref{lem:repair-properties}.(\ref{lem:repair-properties:1}). However, by \cref{lem:avgL:2m:S1}, this inequality is still true when the load of at least two machines is bounded by the second term in the max-function, since their average load is greater than \(\frac{\lambda}{2}d\). We will now show that this is the case for \(m \geq 6\).

    Let \(x\) be the intersection point of \(\lambda d - \frac{dm - \frac{\lambda}{2}d(m -q)}{m - i}\) and \(\frac{\lambda d}{2}\):
    \begin{align*}
        \lambda d - \frac{dm - \frac{\lambda}{2}d(m -q)}{m - x}      & = \frac{\lambda}{2}d        \\
        \Leftrightarrow \frac{dm - \frac{\lambda}{2}d(m -q)}{m - x}  & = \frac{\lambda}{2}d        \\
        \Leftrightarrow dm - \frac{\lambda}{2}d(m -q)                & = \frac{\lambda}{2}d(m - x) \\
        \Leftrightarrow \frac{2dm - \lambda d(m -q)}{\lambda d}      & = m - x                     \\
        \Leftrightarrow \frac{-2dm + \lambda d(m -q)}{\lambda d} + m & = x                         \\
        \Leftrightarrow -\frac{2m}{\lambda} + 2m - q                 & = x
    \end{align*}

    Notice that the number of machines bounded by the second term in the max-function is \(m - q - x\). We now want to show that \(m - q - x \geq 2\):
    \begin{align*}
        m - q - x                                            & \geq 2                                   \\
        \Leftrightarrow m - q + \frac{2m}{\lambda} - 2m + q  & \geq 2                                   \\
        \Leftrightarrow -m + \frac{2m}{\lambda}              & \geq 2                                   \\
        \Leftrightarrow m\left(-1 + \frac{2}{\lambda}\right) & \geq 2                                   \\
        \Leftrightarrow m \frac{-\lambda + 2}{\lambda}       & \geq 2                                   \\
        \Leftrightarrow m\left(2 - \lambda\right)            & \geq 2\lambda                            \\
        \Leftrightarrow 12 - 6\lambda                        & \geq 2\lambda &  & \text{(\(m \geq 6\))} \\
        \Leftrightarrow \frac{3}{2} \geq \lambda
    \end{align*}
    Thus, the observation holds for \(m \geq 6\), if \(\lambda < \frac{3}{2}\).
\end{proof}

\subsection{Proof of \cref{obs:minimum:Wq}}
\label{sec:app:repair2-2:correctness:Wq}
\ObservationMinimumWq*
To this end, we will show the following observations:
\begin{observation}
    \label{obs:W1:q:1}
    The function
    \begin{align}
        \label{eq:W1:q:1}
        a(q) = \int_{0}^{m - q}\max\left\{\lambda d - \frac{dm - \frac{\lambda}{2}d(m -q)}{m - i}, \frac{\lambda}{2}d\right\} \, di + (\lambda - 1)dm
    \end{align}
    is monotonically decreasing in \(q\) in the interval \([0, \frac{\lambda - 1}{\lambda}m]\).
\end{observation}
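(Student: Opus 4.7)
The plan is to prove $a'(q) < 0$ on the stated interval by locating the switching point of the max in the integrand and then differentiating the explicit form of $a$.

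First I would determine the value of $i$ at which the two branches of the max agree. Setting $\lambda d - \frac{dm - \frac{\lambda}{2}d(m-q)}{m - i^*(q)} = \frac{\lambda}{2}d$ yields $i^*(q) = 2m - \frac{2m}{\lambda} - q$, and a short inequality check confirms $0 < i^*(q) \leq m - q$ for every $q \in [0, \frac{\lambda-1}{\lambda}m]$ and every $\lambda \in [\frac{10}{7}, \frac{3}{2})$, so the switch lies strictly inside the integration range. I would then split the integral at $i^*(q)$, evaluate the two pieces in closed form, and differentiate the result in $q$. The crucial simplification is the identity $dm - \frac{\lambda}{2}d(m-q) = \frac{\lambda d}{2}(m - i^*(q))$, which is just a restatement of the defining equation for $i^*(q)$ and which makes the boundary contributions from the Leibniz rule collapse. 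What is left should be
\[
a'(q) \;=\; -\frac{\lambda d}{2}\Bigl(1 + \ln\tfrac{m}{m - i^*(q)}\Bigr),
\]
which is strictly negative because $i^*(q) > 0$ throughout the interval forces the logarithm to be positive.

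The main obstacle is the bookkeeping in the differentiation step: one has to verify that at the outer integration limit $i = m - q$ the max evaluates to its constant branch $\frac{\lambda d}{2}$ (this is where the assumption $\lambda < 2$ enters cleanly), so that the Leibniz boundary term is exactly $-\frac{\lambda d}{2}$, and also that the contribution at the interior switching point vanishes because both branches of the max agree there. Once these two technicalities are dispatched, the remaining algebra is routine.

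A slicker alternative avoids all calculus. For each fixed $i$ the integrand is non-increasing in $q$, since the first branch of the max has $q$-derivative $-\frac{\lambda d}{2(m-i)} < 0$ and the second branch is constant. Meanwhile the upper integration limit $m - q$ strictly shrinks with $q$, and the integrand is pointwise at least $\frac{\lambda}{2}d > 0$. Decomposing $a(q_1) - a(q_2)$ for $q_1 < q_2$ into a nonnegative pointwise-comparison piece on $[0, m - q_2]$ and a strictly positive ``lost mass'' piece on $[m - q_2, m - q_1]$ then yields strict decrease on the whole interval $[0, m]$, which is even stronger than what is claimed.
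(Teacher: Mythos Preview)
Your proposal is correct. Your ``slicker alternative'' is essentially the paper's own argument: the paper applies the Leibniz rule to the integral $I(q)$, observes that the boundary term $-f(m-q,q)$ and the integrated term $\int_0^{m-q}\partial_q f\,di$ are both non-positive (the integrand is pointwise non-increasing in $q$), and concludes $I'(q)<0$; since the additive term $(\lambda-1)dm$ is constant in $q$, $a$ is decreasing. Your first approach goes further than the paper by actually locating the switching point $i^*(q)=2m-\tfrac{2m}{\lambda}-q$ and evaluating the derivative in closed form as $a'(q)=-\tfrac{\lambda d}{2}\bigl(1+\ln\tfrac{m}{m-i^*(q)}\bigr)$. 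This buys you an explicit rate of decrease and makes the role of the boundary cancellations transparent, at the cost of the extra bookkeeping you flag; the paper's version is shorter but purely qualitative. (As a side remark, your sign $\partial_q g=-\tfrac{\lambda d}{2(m-i)}$ is the correct one; the paper's write-up has two sign slips, one in the Leibniz formula and one in $\partial_q g$, that happen to cancel.)
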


\begin{observation}
    \label{obs:W1:q:2}
    The function
    \begin{align}
        \label{eq:W1:q:2}
        b(q) = \int_{0}^{m - q}\max\left\{\lambda d - \frac{dm - \frac{\lambda}{2}d(m -q)}{m - i}, \frac{\lambda}{2}d\right\} \, di + \lambda dq
    \end{align}
    is monotonically increasing in \(q\) in the interval \([\frac{\lambda - 1}{\lambda}m, m]\).
\end{observation}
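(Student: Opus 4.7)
My plan is to evaluate $b(q)$ in closed form on each sub-interval determined by the geometry of the $\max$, and then differentiate directly. The preliminary step is to locate the switching point $i^\ast(q)$ of the max-function by solving
\begin{align*}
\lambda d - \frac{dm - \tfrac{\lambda}{2}d(m-q)}{m - i^\ast(q)} = \tfrac{\lambda}{2}d,
\end{align*}
which gives $i^\ast(q) = 2m\tfrac{\lambda-1}{\lambda} - q$. The first branch dominates for $i \leq i^\ast(q)$ and the constant $\tfrac{\lambda}{2}d$ for $i \geq i^\ast(q)$. I then split according to whether $i^\ast(q) \in [0, m-q]$, which (since $\lambda < 2$) reduces to the threshold $q = 2m\tfrac{\lambda-1}{\lambda}$.

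In the easy regime $q \geq 2m\tfrac{\lambda-1}{\lambda}$ (where $i^\ast(q) \leq 0$) the max is identically $\tfrac{\lambda}{2}d$, so $b(q) = \tfrac{\lambda}{2}dm + \tfrac{\lambda}{2}dq$ with $b'(q) = \tfrac{\lambda}{2}d > 0$.

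In the interesting regime $\tfrac{\lambda-1}{\lambda}m \leq q \leq 2m\tfrac{\lambda-1}{\lambda}$ I would introduce the substitution $u := m - i^\ast(q) = m\tfrac{2-\lambda}{\lambda} + q$ and use the identity $dm - \tfrac{\lambda}{2}d(m-q) = \tfrac{\lambda}{2}du$, which is exactly the defining equation of $i^\ast$. A routine evaluation of the two pieces of the integral, using $\int_0^{i^\ast(q)} \tfrac{di}{m-i} = \ln(m/u)$, telescopes to
\begin{align*}
b(q) = \lambda dm - \tfrac{\lambda}{2}du\bigl(1 + \ln(m/u)\bigr) + \tfrac{\lambda}{2}dq .
\end{align*}
Since $du/dq = 1$ and $\tfrac{d}{du}\bigl[u(1+\ln(m/u))\bigr] = \ln(m/u)$, differentiation collapses to the very clean $b'(q) = \tfrac{\lambda}{2}d\bigl(1 - \ln(m/u)\bigr)$, so monotonicity reduces to $u \geq m/e$. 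For $q \geq \tfrac{\lambda-1}{\lambda}m$ one has $u \geq m\tfrac{2-\lambda}{\lambda} + \tfrac{\lambda-1}{\lambda}m = m/\lambda$, and since $\lambda < \tfrac{3}{2} < e$ one has $u \geq m/\lambda > m/e$, giving $b'(q) > 0$ throughout.

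The main obstacle is that both the integrand and the upper limit of the integral depend on $q$, which in principle requires care with Leibniz's rule. Passing to the variable $u$ and exploiting the identity $c(q) := dm - \tfrac{\lambda}{2}d(m-q) = \tfrac{\lambda}{2}du$ eliminates this nuisance, making several terms cancel and leaving a derivative whose sign is controlled by the single inequality $\lambda < e$. A final sanity check at the boundary $q = 2m\tfrac{\lambda-1}{\lambda}$, where $u = m$ and $b'(q) = \tfrac{\lambda}{2}d$ from either side, confirms that the two closed forms agree and that $b$ is continuously differentiable and monotonically increasing on the full interval $[\tfrac{\lambda-1}{\lambda}m, m]$.
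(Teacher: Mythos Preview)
Your proof is correct and takes a genuinely different route from the paper's. The paper differentiates the integral via Leibniz's rule, obtaining
\[
\frac{dI(q)}{dq} = -f(m-q,q) - \int_0^{m-q}\frac{\partial f(i,q)}{\partial q}\,di,
\]
and then \emph{bounds} the inner integral by replacing the piecewise partial derivative with its nonzero branch $\tfrac{\lambda}{2}\tfrac{d}{m-i}$ on the whole range $[0,m-q]$, followed by a two-case analysis of $f(m-q,q)$. You instead locate the switching point $i^\ast(q)=2m\tfrac{\lambda-1}{\lambda}-q$ explicitly, evaluate $b(q)$ in closed form through the substitution $u=m-i^\ast(q)$ and the identity $dm-\tfrac{\lambda}{2}d(m-q)=\tfrac{\lambda}{2}du$, and then differentiate the resulting elementary expression. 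This yields the \emph{exact} derivative $b'(q)=\tfrac{\lambda}{2}d\bigl(1-\ln(m/u)\bigr)$ rather than a lower bound, and reduces monotonicity to the single clean inequality $u\ge m/\lambda>m/e$ (i.e.\ $\lambda<e$), with no case split. The paper's argument is more structural in that it never needs to integrate in closed form, whereas your approach is shorter, sharper, and makes transparent exactly which inequality on $\lambda$ is doing the work.
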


First, we will analyze the integral part \(I(q) = \int_{0}^{m - q}\max\left\{\lambda d - \frac{dm - \frac{\lambda}{2}d(m -q)}{m - i}, \frac{\lambda}{2}d\right\} \, di\) of both \cref{eq:W1:q:1,eq:W1:q:2}. For simplicity, let's define \(f(i, q) := \max\left\{\lambda d - \frac{dm - \frac{\lambda}{2}d(m -q)}{m - i}, \frac{\lambda}{2}d\right\}\). Then, using the Leibniz rule, we get:
\begin{align}
    \frac{dI(q)}{dq} & = f(m - q, q) \cdot \frac{d}{dq}(m - q) - \int_0^{m - q} \frac{\partial f(i, q)}{\partial q} \, di \nonumber \\
                     & = -f(m - q, q) - \int_0^{m - q} \frac{\partial f(i, q)}{\partial q} \, di \label{eq:derivative:I}
\end{align}

For the integral term, we need to perform a piecewise analysis due to the maximum function. Let \(g(i, q) := \lambda d - \frac{dm - \frac{\lambda}{2}d(m -q)}{m - i}\) and \(h = \frac{\lambda}{2}d\).
\begin{align*}
    \frac{\partial g(i, q)}{\partial q} = \frac{\lambda}{2} \cdot \frac{d}{m - i}
\end{align*}
Since \(f(i, q) = h\) is constant with respect to \(q\) if \(g(i, q) \leq h\), we get:
\begin{align}
    \frac{\partial f(i, q)}{\partial q} =
    \begin{cases}
        0,                                       & \text{if } g(i, q) \leq h \\
        \frac{\lambda}{2} \cdot \frac{d}{m - i}, & \text{otherwise}
    \end{cases}
    \label{eq:derivative:f}
\end{align}

Thus, the value of \(f(i, q)\) and its derivative are both positive. With this and \cref{eq:derivative:I}, we get:
\begin{align*}
    \frac{dI(q)}{dq} = -f(m - q, q) - \int_0^{m - q} \frac{\partial f(i, q)}{\partial q} \, di < 0
\end{align*}
Therefore \(I(q)\) is monotonically decreasing in \(q\). Consequently, we can directly see that \cref{eq:W1:q:1} is monotonically decreasing in \(q\) since the additional term \((\lambda - 1)dm\) is constant with respect to \(q\).

To prove \cref{obs:W1:q:2}, we have to show that the derivative of \cref{eq:W1:q:2} is positive.
\begin{align}
    \label{eq:W1:q:2:derivative}
    \frac{dI(q)}{dq} + \lambda d > 0
\end{align}

We continue to further analyze the first part.
\begin{align}
    \frac{dI(q)}{dq} & = -f(m - q, q) - \int_0^{m - q} \frac{\partial f(i, q)}{\partial q} \, di                                                                                     \nonumber                                                                                                       \\
                     & \geq -f(m - q, q) - \int_0^{m - q} \frac{\lambda}{2} \cdot \frac{d}{m - i} \, di                                                                                        &  & \text{(\cref{eq:derivative:f})}                                       \nonumber                  \\
                     & = -f(m - q, q) - \frac{\lambda}{2}d \cdot \ln\left(\frac{m}{q}\right)                                      \nonumber                                                                                                                                                          \\
                     & \geq -f(m - q, q) - \frac{\lambda}{2}d \cdot \ln\left(\frac{\lambda}{\lambda - 1}\right)                                                                                &  & \text{(\(q \geq \frac{\lambda - 1}{\lambda}m\); \cref{obs:W1:q:2})} \label{eq:W1:q:I:derivative}
\end{align}

Let's analyze \(f(i, q)\) more closely.
\begin{subsubcase}[\(\frac{\lambda}{2}d \geq \lambda d - \frac{dm - \frac{\lambda}{2}d(m - q)}{m - i}\)]
    In this case \(f(m - q, q) = \frac{\lambda}{2}d\). Thus, with \cref{eq:W1:q:I:derivative}, we get:
    \begin{align*}
        \frac{dI(q)}{dq} + \lambda d & > -\frac{\lambda}{2}d - \frac{\lambda}{2}d \cdot \ln\left(\frac{\lambda}{\lambda -1}\right) + \lambda d &  & \text{(\cref{eq:W1:q:I:derivative})}  \\
                                     & = \frac{\lambda}{2}d -             \frac{\lambda}{2}d \cdot \ln\left(\frac{\lambda}{\lambda -1}\right)                                             \\
                                     & > 0                                                                                                     &  & \text{(\(\lambda \leq \frac{3}{2}\))}
    \end{align*}
    Thus, we have shown \cref{eq:W1:q:2:derivative} in this case.
\end{subsubcase}
\begin{subsubcase}[\(\frac{\lambda}{2}d < \lambda d - \frac{dm - \frac{\lambda}{2}d(m -q)}{m - i}\)]
    In this case:
    \begin{align*}
        \frac{dI(q)}{dq} + \lambda d & > -f(m - q, q) - \frac{\lambda}{2}d \cdot \ln\left(\frac{\lambda}{\lambda - 1}\right) + \lambda d                                    &  & \text{(\cref{eq:W1:q:I:derivative})} \\
                                     & =
        -\left(\lambda d - \frac{dm - \frac{\lambda}{2}d(m -q)}{q}\right) - \frac{\lambda}{2}d \cdot \ln\left(\frac{\lambda}{\lambda - 1}\right) + \lambda d                                                          \\
                                     & = \frac{dm \left(2 - \lambda\right)}{2q} + \frac{\lambda}{2}d - \frac{\lambda}{2}d \cdot \ln\left(\frac{\lambda}{\lambda - 1}\right)                                           \\
                                     & > \frac{dm\left(2 - \lambda\right)}{2q}                                                                                                                                        \\
                                     & > 0                                                                                                                                  &  & \text{(\(\lambda < 2\))}
    \end{align*}
    Thus, \cref{eq:W1:q:2:derivative} holds in this case as well.
\end{subsubcase}

Since, in both cases, the derivative is positive in the interval \([\frac{\lambda - 1}{\lambda}m, m]\), \cref{obs:W1:q:2} holds.

With \cref{obs:W1:q:1} and \cref{obs:W1:q:2}, we know that the function \(W(q) = \max(a(q), b(q))\) has a minimum at the switching point of the second max-term, which is at \(q = \frac{\lambda - 1}{\lambda}m\).

\subsection{Case 1 (\(m < 6\))}
\label{sec:app:repair2-2:correctness:case2}
For this case of the proof, we assume \(m < 6\). For these instances, we explicitly calculate a lower bound on the work of the overall schedule to show the contradiction to \(\mathcal{W}_1 + \mathcal{W}_2 \leq dm\).

To simplify the calculations of the numerical values in this case, we will use \(\lambda = 1.45\), which is a bit stronger than what we need to show.

We prepare the proof with the following observation:
\begin{observation}
    \label{lem:upper:q:k1}
    If \(\lambda \geq 1.45\), then the total number of idle machines in $S_1$ is less than $q < \frac{11}{29}m$.
\end{observation}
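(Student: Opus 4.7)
The plan is to derive the bound on $q$ by combining the two lower bounds on the work of shelves~$1$ and~$2$ from \cref{lem:repair-properties} with the global work budget $\mathcal{W}_1 + \mathcal{W}_2 \leq dm$ guaranteed by constraint~\ref{const1:work}. Before doing so I would first justify that $|S_2| = 1$ in the situation at hand: the preceding observation shows that $\lambda \geq 4/3$ and $q > m/6$ already force $|S_2| \leq 1$, and since we are dealing with an infeasible schedule (otherwise nothing is to prove) we have $|S_2| \geq 1$, so $|S_2| = 1$. In particular, $1.45 > 4/3$ places us squarely in this regime.

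Next, I would apply \cref{lem:repair-properties}.(\ref{lem:repair-properties:3}) to obtain $\mathcal{W}_1 > \frac{\lambda}{2} d (m - q)$, and \cref{lem:repair-properties}.(\ref{lem:repair-properties:2}) applied to the single job in $S_2$ to obtain $\mathcal{W}_2 > \lambda d q$. Substituting both into $\mathcal{W}_1 + \mathcal{W}_2 \leq dm$ yields
\[
\frac{\lambda}{2} d (m - q) + \lambda d q < dm,
\]
which, after dividing by $d$ and collecting like terms, simplifies to $\frac{\lambda}{2}(m + q) < m$, i.e.\ $q < \frac{(2 - \lambda) m}{\lambda}$.

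Finally, since the map $\lambda \mapsto (2 - \lambda)/\lambda$ is strictly decreasing on $(0, 2)$, the hypothesis $\lambda \geq 1.45$ gives the worst-case value $q < \frac{0.55}{1.45} m = \frac{11}{29} m$, which is the claimed bound. There is no real technical obstacle here: the whole argument is a one-line algebraic manipulation of three standing inequalities. The only subtlety worth flagging at the beginning is the verification that $|S_2| = 1$, since this is what lets me upgrade the per-job bound of \cref{lem:repair-properties}.(\ref{lem:repair-properties:2}) into a bound on the total work $\mathcal{W}_2$.
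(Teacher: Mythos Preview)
Your proof is correct and follows essentially the same route as the paper: combine $\mathcal{W}_1 > \frac{\lambda}{2}d(m-q)$ and $\mathcal{W}_2 > \lambda dq$ with the work budget $\mathcal{W}_1 + \mathcal{W}_2 \leq dm$, simplify, and plug in $\lambda = 1.45$. The only difference is that you spell out the justification $|S_2|=1$ explicitly, whereas the paper simply invokes \cref{lem:repair-properties}.(\ref{lem:repair-properties:2}) directly (the surrounding context already guarantees a single job in $S_2$).
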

\begin{proof}
    Note that we have $\mathcal{W}_2 > \frac{145}{100}dq$, by \cref{lem:repair-properties}.(\ref{lem:repair-properties:2}). With $\mathcal{W}_1 + \mathcal{W}_2 \leq dm$ and \cref{lem:repair-properties}.(\ref{lem:repair-properties:3}), we get:
    \begin{align*}
        \mathcal{W}_1 + \mathcal{W}_2                                 & \leq dm                            \\
        \Leftrightarrow \frac{145}{200}d(m_1 - q) + \frac{145}{100}dq & < dm                               \\
        \Leftrightarrow \frac{145}{200}q                              & < \frac{55}{200}m                  \\
        \Leftrightarrow q                                             & < \frac{55}{145}m = \frac{11}{29}m
    \end{align*}
\end{proof}

If \(j_0\) cannot be placed on \(m - i\) machines, then the load \(L_i\) on machine \(i \in \{0, \dots, m - 1\}\) is at least
\begin{align*}
    L_i > \begin{cases*}
              g(i)                                & if \(g(i) \geq \frac{\lambda}{2}d\) \\
              \max\{\lambda d - L_{i - 1}, g(i)\} & otherwise
          \end{cases*}
\end{align*}
\cref{tab:num_m1<6} shows the minimal work for all \(m \in \{3, 4, 5\}\), where \(\mathcal{W}_1 = \sum_{i}{L_i}\), and \(\mathcal{W}_2 = \max\{0.45dm, 1.45dq\}\). Notice that $q = 1$ and $m > 2$, since $m/6 < q < \frac{11}{29}m$ (\cref{lem:upper:q:k1}).
\begin{table}[h]
    \centering
    \begin{tabular}{|c|c|c|}
        \hline
        $m, q$ & Minimal Machine Loads (\(L_i\))                                     & $(\mathcal{W}_1 + \mathcal{W}_2) / m$ \\
        \hline
        3, 1   & ['0.93333', '0.67500', '0.00000']                                   & \underline{1.01944}                   \\
        4, 1   & ['0.99375', '0.84167', '0.60833', '0.00000']                        & \underline{1.06094}                   \\
        5, 1   & ['\underline{1.03000}', '0.92500', '0.75000', '0.70000', '0.00000'] & \underline{1.13100}                   \\
        \hline
    \end{tabular}
    \caption{Numerical Analysis for $m < 6$. All values are multiples of \(d\) and contradictional values are underlined.}
    \label{tab:num_m1<6}
\end{table}

\section{Adding the Small Jobs}
\label{sec:app:small-jobs}
To complete the dual approximation, we need to add the previously ignored small jobs \(J_S\). To this end, we recall that the total work on the obtained shelf schedule is at most \(md - \mathcal{W}_S\).

\begin{algorithm}[H]
    \caption{Adding the small jobs}\label{alg:addSmall}
    \begin{algorithmic}[1]
        \Require Shelf-Schedule for \(J_B\)
        \For{\textbf{each} $j \in J_S$}
        \State Schedule job $j$ on the least loaded machine.
        \EndFor
    \end{algorithmic}
\end{algorithm}

\begin{lemma}
    \label{lem:addsmall}
    Given a feasible contiguous schedule of length at most \(\lambda d\) for \(J_B\) with a work area of at most \(md - \mathcal{W}_S\), \cref{alg:addSmall} delivers a feasible contiguous schedule with makespan at most \(\lambda d\) for the entire instance \(J_B \cup J_S\), for any \(\lambda \geq \frac{10}{7}\).
\end{lemma}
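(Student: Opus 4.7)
The plan is to prove this by a simple contradiction argument based on the total work area. For each small job $j \in J_S$ I consider the moment \cref{alg:addSmall} tries to place $j$ on the least loaded machine and assume, for contradiction, that doing so makes the load exceed $\lambda d$. Since $j$ is placed on the least loaded machine with $t(j,1)$ additional processing time, every machine must already have load strictly greater than $\lambda d - t(j,1)$. Using $t(j,1) \leq \tfrac{3}{7}d$ (by definition of $J_S$) and $\lambda \geq \tfrac{10}{7}$, this lower bound on every machine load is at least $\lambda d - \tfrac{3}{7}d \geq d$, so the total work on the machines at this moment strictly exceeds $md$.

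Next I would upper-bound the total work at that moment by the initial work area of the $J_B$-schedule plus the work of the small jobs already placed. By hypothesis the initial work is at most $md - \mathcal{W}_S$, and the small jobs placed so far contribute at most $\mathcal{W}_S - w(j,1)$ (since $j$ itself has not yet been scheduled and $\mathcal{W}_S = \sum_{j' \in J_S} w(j',1)$ is the total work of all small jobs at width one). Hence the current total work is at most $md - w(j,1) < md$, contradicting the lower bound obtained above. This shows every small job can be placed without violating the makespan, establishing the makespan guarantee.

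It remains to argue that contiguity is preserved. Each small job is placed on a single machine, so it trivially occupies a contiguous block; no job previously scheduled is moved or split, so the contiguity of the $J_B$-schedule established earlier is retained. A brief remark on the running time (scanning the $m$ machine loads for each of the $|J_S| \leq n$ small jobs) gives $O(mn)$, which fits within the overall complexity claimed in \cref{thm:dual-approx}.

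I do not foresee any real obstacle here: the choice of the threshold $\tfrac{3}{7}d$ for $J_S$ together with $\lambda \geq \tfrac{10}{7}$ was engineered precisely so that $\lambda d - \tfrac{3}{7}d \geq d$, which is exactly what makes the work-contradiction go through. The only subtlety to articulate carefully is that the work bound $md - \mathcal{W}_S$ is preserved throughout the repair algorithms (as already noted in the paper via work monotony), so it is legitimate to use it as the starting budget when small jobs are added.
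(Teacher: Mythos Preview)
Your proposal is correct and follows essentially the same contradiction-via-work argument as the paper: if placing a small job overloads the least loaded machine, then every machine already carries load exceeding $\lambda d - \tfrac{3}{7}d \geq d$, contradicting the work bound; your accounting for the already-placed small jobs is in fact a bit more explicit than the paper's. The one point you should make explicit is that the $J_B$-schedule has each machine's idle time as a single uninterrupted interval (shelf~1 jobs start at time~$0$, shelf~2 jobs finish at time~$\lambda d$), so that ``load $\le \lambda d$'' genuinely guarantees the small job fits without overlap---the paper states this setup, and without it the load argument alone does not imply feasibility.
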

\begin{proof}
    Consider a schedule, as constructed in \cref{sec:algorithm}. We schedule all jobs in shelf 1 with a starting time of \(0\) and all jobs in shelf 2 such that they complete at exactly \(\lambda d\). This way the idle times on each machine are uninterrupted.

    The load of each machine is defined by the sum of execution times of the jobs scheduled on it. By definition, the idle time of any machine \(i\) is equal to \(\lambda d\) minus the load on that machine.

    Now suppose \cref{alg:addSmall} does not deliver a feasible \(\lambda d\) schedule. In this case, there must exist a machine \(i\) with a load strictly greater than \(\lambda d\) after adding a small job. This implies that the load on that machine, before adding the small job, was greater than \(d\), since all small jobs have an execution time of at most \(\frac{3}{7}d\) on one machine, by definition. With \(i\) being the least loaded machine, this contradicts the assumption that the total work area of this schedule is at most \(md\).
\end{proof}

\end{document}